\def\cl{{\cal C}\!\ell}
\def\Det{{\rm Det}}
\def\T{{\rm T}}
\def\tr{{\rm tr}}
\def\mod{{\, \rm mod \,}}
\def\Mat{{\rm Mat}}
\def\R{{\mathbb R}}
\def\C{{\mathbb C}}
\def\H{{\mathbb H}}
\def\diag{{\rm diag}}
\def\Adj{{\rm Adj}}
\def\id{{\rm id}}
\def\even{{\rm even}}
\def\odd{{\rm odd}}
\def\cen{{\rm cen}}
\def\adj{{\rm adj}}
\newtheorem{theorem}{Theorem}
\newtheorem{lemma}{Lemma}
\newtheorem{definition}{Definition}
\def\la{\langle}
\def\ra{\rangle}
\def\va{\vartriangle}
\journal{Computational and Applied Mathematics}
\begin{document}

\begin{frontmatter}

\title{On computing the determinant, other characteristic polynomial coefficients, and inverse in Clifford algebras of arbitrary dimension}

\author[mymainaddress,mysecondaryaddress]{D. S. Shirokov}
\ead{dm.shirokov@gmail.com}

\address[mymainaddress]{HSE University, 101000 Moscow, Russia}
\address[mysecondaryaddress]{Institute for Information Transmission Problems of Russian Academy of Sciences, 127051 Moscow, Russia}

\begin{abstract}
In this paper, we solve the problem of computing the inverse in Clifford algebras of arbitrary dimension. We present basis-free formulas of different types (explicit and recursive) for the determinant, other characteristic polynomial coefficients, adjugate, and inverse in real Clifford algebras (or geometric algebras) over vector spaces of arbitrary dimension $n$. The formulas involve only the operations of multiplication, summation, and operations of conjugation without explicit use of matrix representation. We use methods of Clifford algebras (including the method of quaternion typification proposed by the author in previous papers and the method of operations of conjugation of special type presented in this paper) and generalizations of numerical methods of matrix theory (the Faddeev-LeVerrier algorithm based on the Cayley-Hamilton theorem; the method of calculating the characteristic polynomial coefficients using Bell polynomials) to the case of Clifford algebras in this paper. We present the construction of operations of conjugation of special type and study relations between these operations and the projection operations onto fixed subspaces of Clifford algebras. We use this construction in the analytical proof of formulas for the determinant, other characteristic polynomial coefficients, adjugate, and inverse in Clifford algebras. The basis-free formulas for the inverse give us basis-free solutions to linear algebraic equations, which are widely used in computer science, image and signal processing, physics, engineering, control theory, etc. The results of this paper can be used in symbolic computation.
\end{abstract}

\begin{keyword}
computing determinant \sep computing inverse \sep characteristic polynomial \sep geometric algebra \sep Clifford algebra

\MSC[2010] 65F40 \sep 68W30 \sep 15A66

\end{keyword}
\end{frontmatter}

\section{Introduction}
\label{sec:1}

The problem of computing the determinant and inverse in Clifford algebras (or geometric algebras, \cite{Lounesto}) $\cl_{p,q}$, $p+q=n$ is very important from the theoretical and practical points of view. The explicit (symbolic, basis-free) formulas for the inverse of Clifford algebra elements give us explicit formulas for the solutions to linear algebraic equations $AXB=C$ for known $A, B, C\in\cl_{p,q}$ and unknown $X\in\cl_{p,q}$. The results of this paper give us the basis-free solution to the Sylvester equation $AX+XB=C$ in the case of arbitrary dimension $n=p+q$, see \cite{ShirokovSylv} (see the cases $n\leq 3$ in \cite{acusSylv}). Note that the Sylvester equation and its particular case, the Lyapunov equation (with $B=A^H$), are widely used in different applications -- image processing, control theory, stability analysis, signal processing, model reduction, and many more.

Over the past years, several results were obtained on the problem of computing the determinant and inverse in Clifford algebras. The basis-free formulas for the inverse of Clifford algebra elements for the cases $n\leq 5$ were presented in \cite{LS, dadbeh, rudn, hitzer1} using different methods. For the case $n=6$, the explicit formula was presented for the first time in \cite{acus}. In \cite{hitzer2}, it was presented how to obtain the algebraic expression for the inverse in the case of arbitrary odd $n$, if we know the corresponding expression for the case of previous even $n-1$. In this paper, we generalize results to the case of arbitrary $n$ using different techniques. We present the basis-free formulas of different types (explicit and recursive) for the determinant, other characteristic polynomial coefficients, adjugate, and inverse in the case of arbitrary $n$ using only the operations of multiplication, summation, and operations of conjugation without using the corresponding matrix representations. The results of this paper can be used in symbolic computation (using different software \cite{AblMaple, AcusPack, Python, HitzerMatlab})\footnote{One of the anonymous reviewers noted that he implemented and tested (both recursive and explicit) algorithm and checked that it yields correct results for Clifford algebras up to dimension $n=11$ (using elements with random integer coefficients), and stated that explicit algorithm is much more efficient than recursive; he also computed (using optimized version of the formula, provided by Lemma 6 in the article) of symbolic expression for determinant for $\cl_{6,0}$ in expanded form in approximately one day, whereas symbolic matrix determinant computation (in expanded form) took more than four days.}\footnote{See also applications of the results of this paper in symbolic computation using the Mathematica package \cite{AcusPack}, https://github.com/ArturasAcus/GeometricAlgebra, and the Python package \cite{Python}, https://github.com/pygae/clifford/pull/373.}.

The paper is organized as follows.

In Section \ref{sec:2}, we propose a new construction of $[\log_2 n]+1$ operations of conjugation of special type in Clifford algebras $\cl_{p,q}$, $p+q=n$. The standard fundamental operations (the grade involution, the reversion, and their superposition, which is called the Clifford conjugation) are naturally included in this construction. We realize all other $2^{n+1}$ operations of conjugations as linear combinations of superpositions of the proposed $m=[\log_2 n]+1$ operations of conjugation $\va_1$, \ldots, $\va_m$ of special type. In some sense, the use of the proposed $[\log_2 n]+1$ operations of conjugation of special type is an alternative to the use of $n+1$ grade-negation operations (see the papers \cite{dadbeh, acus, hitzer1}). We study properties of the proposed operations and relations between these operations and the operations of projection onto fixed subspaces of Clifford algebra, especially the projection onto the subspace of grade $0$ (scalar part of the element), which is related to the trace -- one of the invariants (characteristic polynomial coefficients) of the element.

In Section \ref{sec:3}, we give an algebraic proof of several formulas for the functionals $N(U):\cl_{p,q} \to \cl^0_{p,q}\equiv \R$ of special form in Clifford algebras over vector spaces of dimension $n\leq 6$. We present some new formulas (for $n=4, 5$) and prove them analytically in this section. We prove that each of 92 formulas (20 formulas in the form of doublets and 72 formulas in the form of triplets) obtained by computer calculations in \cite{acus} for the case $n=6$ is equal to one of the two formulas (\ref{nu6}), (\ref{nu62}) presented in this paper using only two or three operations of conjugation. The main tools in presented analytical proofs are the properties of the operations of conjugation of special type $\va_1$, \ldots, $\va_m$ proposed in this paper and the method of quaternion typification in Clifford algebras proposed in the previous papers of the author \cite{quat, quat1, quat2}.

In Section \ref{sec:4}, we generalize the concept of characteristic polynomial coefficients (in particular, the trace and the determinant) to the case of real Clifford algebras. To introduce these concepts in real Clifford algebras, we use matrix representations of minimal dimensions of the complexified Clifford algebras $\C\otimes\cl_{p,q}$ as the matrix representations (of non-minimal dimension) of the corresponding real Clifford algebras $\cl_{p,q}\subset \C\otimes\cl_{p,q}$. Then we prove that these concepts do not depend on the choice of matrix representation and give alternative definitions of these concepts without using matrix representations and using only Clifford algebra operations. We present explicit formulas for the determinant, other characteristic polynomial coefficients, adjugate, and inverse in the case of arbitrary $n$ using only the operations of multiplication, summation, and operations of conjugation without explicit use of matrix representation. To obtain these results,we generalize several matrix methods (the Faddeev-LeVerrier algorithm based on the Cayley-Hamilton theorem; the method of calculating the characteristic polynomial coefficients using Bell polynomials) to the case of Clifford algebras. Also we use the properties of the operations $\va_1$, \ldots, $\va_m$ presented in Section \ref{sec:2}. The examples are given in the cases of small $n$.

\section{Operations of conjugation of special type in Clifford algebras and their properties}
\label{sec:2}

Let us consider the real Clifford algebra (or the geometric algebra) $\cl_{p,q}$, $p+q=n$, with the generators $e_a$, $a=1, \ldots, n$, and the identity element $e$. The generators satisfy the conditions
$$e_a e_b+e_b e_a=2\eta_{ab}e,\qquad a, b=1, \ldots, n,$$
where $\eta=||\eta_{ab}||=\diag(1,\ldots, 1, -1, \ldots, -1)$ is the diagonal matrix with its first $p$ entries equal to $1$ and the last $q$ entries equal to $-1$ on the diagonal. We call the subspace of $\cl_{p,q}$ of Clifford algebra elements, which are linear combinations of the basis elements $e_{a_1 \ldots a_k}:=e_{a_1}\cdots e_{a_k}$, $a_1 < a_2 < \cdots < a_k$, with multi-indices of length $k$, the subspace of grade $k$ and denote it by $\cl^k_{p,q}$, $k=0, 1, \ldots, n$. Elements of grade $0$ are identified with scalars $\cl^0_{p,q}\equiv \R$, $e\equiv 1$.

We denote the projection of an element $U\in\cl_{p,q}$ onto the subspace $\cl^k_{p,q}$ by $\la U\ra_k$ (or sometimes by $U_k$ to simplify notation). The operations of projection are linear:
\begin{eqnarray}
&&\la U +V \ra_k=\la U \ra_k+ \la V \ra_k,\qquad \la \lambda U \ra_k=\lambda \la U \ra_k,\qquad \lambda\in\R,\qquad U, V\in\cl_{p,q}.\label{po}
\end{eqnarray}
We denote the projection of $U\in\cl_{p,q}$ onto the center of a Clifford algebra
\begin{eqnarray}\label{cen}
\cen(\cl_{p,q})=\left\lbrace
\begin{array}{ll}
\cl^0_{p,q}, & \mbox{if $n$ is even,}\\
\cl^0_{p,q}\oplus\cl^n_{p,q}, & \mbox{if $n$ is odd,}
\end{array}
\right.
\end{eqnarray}
by $\la U \ra_{\cen}$. If $n$ is even, then $\la U \ra_{\cen}=\la U \ra_0$. If $n$ is odd, then $\la U\ra_{\cen}=\la U \ra_0+\la U\ra_n$.

\begin{lemma}\label{lem1} We have the following properties
\begin{eqnarray}
&&\la UV \ra_0=\la VU \ra_0,\qquad \mbox{for an arbitrary $n$;}\label{larao}\\
&&\la UV \ra_n=\la VU \ra_n,\qquad \mbox{for odd $n$.}
\end{eqnarray}
As a consequence, we get
\begin{eqnarray}
&&\la UVW \ra_0=\la VWU\ra_0=\la WUV \ra_0,\quad \la T^{-1}UT \ra_0=\la U\ra_0,\qquad \mbox{for an arbitrary $n$;}\label{lara02}\\
&&\la UVW \ra_n=\la VWU\ra_n=\la WUV \ra_n,\quad \la T^{-1}UT \ra_n=\la U\ra_n,\qquad \mbox{for odd $n$,}
\end{eqnarray}
for all $U, V, W\in\cl_{p,q}$ and $T\in\cl_{p,q}^\times$, where $\cl_{p,q}^\times$ is the group of all invertible elements of $\cl_{p,q}$.
\end{lemma}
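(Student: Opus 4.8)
The plan is to reduce everything to a single computation on basis elements, exploiting linearity of the projections $\la\cdot\ra_0$ and $\la\cdot\ra_n$ (property (\ref{po})). First I would observe that it suffices to prove $\la e_A e_B\ra_0 = \la e_B e_A\ra_0$ (and similarly for $\la\cdot\ra_n$) for arbitrary basis elements $e_A$, $e_B$, where $A$ and $B$ are multi-indices; the general identities $\la UV\ra_0 = \la VU\ra_0$ and $\la UV\ra_n = \la VU\ra_n$ then follow by expanding $U$ and $V$ in the basis and using bilinearity. The key elementary fact is that for basis elements one has $e_A e_B = \pm e_B e_A$, where the sign depends only on $|A|$, $|B|$, and $|A\cap B|$ via the number of transpositions needed to reorder the generators; more precisely $e_A e_B = (-1)^{|A||B| - |A\cap B|} e_B e_A$.

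Next I would split into cases according to whether $e_A e_B$ lands in grade $0$ or grade $n$. The projection $\la e_A e_B\ra_0$ is nonzero only when $e_A e_B$ is a scalar multiple of $e$, which forces $A = B$ (the symmetric difference is empty); but then trivially $e_A e_B = e_B e_A$, so the sign is $+1$ and the grade-$0$ parts agree. For the grade-$n$ statement with $n$ odd, $\la e_A e_B\ra_n$ is nonzero only when the symmetric difference $A\va B$ equals the full index set $\{1,\dots,n\}$, i.e.\ $B$ is the complement of $A$ (so $A\cap B=\emptyset$ and $|A|+|B|=n$). In that case the commutation sign is $(-1)^{|A||B|}$; since $n=|A|+|B|$ is odd, one of $|A|,|B|$ is even, hence $|A||B|$ is even and the sign is again $+1$. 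Therefore $\la e_A e_B\ra_n = \la e_B e_A\ra_n$ whenever the left side is nonzero, and the identity holds termwise.

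Once (\ref{larao}) and its grade-$n$ analogue are established, the consequences are immediate formal manipulations: applying the two-factor identity with the pair $(U, VW)$ gives $\la UVW\ra_0 = \la VWU\ra_0$, and applying it again (or with the pair $(UV,W)$) gives $\la WUV\ra_0$, yielding the cyclic property; the conjugation invariance $\la T^{-1}UT\ra_0 = \la U\ra_0$ follows by taking the cyclic rotation $\la T^{-1}(UT)\ra_0 = \la (UT)T^{-1}\ra_0 = \la U\ra_0$. The odd-$n$ grade-$n$ versions are identical with $\la\cdot\ra_0$ replaced by $\la\cdot\ra_n$.

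The main obstacle, such as it is, is purely bookkeeping: getting the commutation sign $e_A e_B = \pm e_B e_A$ correct and then verifying that in each of the two relevant cases the sign is forced to be $+1$. There is no genuine difficulty, only the need to handle the parity argument cleanly; in particular the grade-$n$ case genuinely uses that $n$ is odd, which is why that half of the lemma is stated only for odd $n$. An alternative, perhaps slicker, route would be to note that $\la UV\ra_0$ coincides (up to a fixed nonzero constant) with the normalized trace of the matrix representation of $UV$ and then invoke the cyclic property of the matrix trace; but I would prefer the direct basis computation above since it stays within the Clifford-algebra formalism, which is the stated philosophy of the paper, and it simultaneously handles the grade-$n$ statement that has no immediate matrix-trace analogue.
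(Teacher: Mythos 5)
Your proposal is correct. It differs from the paper only in how the two core identities $\la UV\ra_0=\la VU\ra_0$ and (for odd $n$) $\la UV\ra_n=\la VU\ra_n$ are justified: the paper does not prove them at all but cites an earlier reference for the equivalent statements $\la[U,V]\ra_0=0$ and $\la[U,V]\ra_n=0$, whereas you supply a self-contained basis computation. That computation is sound: by bilinearity it suffices to treat basis elements, the commutation sign $e_Ae_B=(-1)^{|A||B|-|A\cap B|}e_Be_A$ is the standard one, the grade-$0$ projection forces $A=B$ (sign $(-1)^{|A|(|A|-1)}=+1$), and the grade-$n$ projection forces $A\cap B=\emptyset$ with $|A|+|B|=n$, so for odd $n$ one of $|A|,|B|$ is even and the sign is again $+1$ --- which is exactly where the odd-$n$ hypothesis enters, matching the scope of the lemma. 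The derivation of the cyclic and similarity invariances from the two-factor identities is identical to the paper's. What your route buys is a proof that stays entirely within the paper (no external citation) at the cost of a small amount of sign bookkeeping; the paper's route is shorter but delegates the essential content.
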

\begin{proof} One can find the proof of the facts that $ \la [U, V] \ra_0=0$ in the case of arbitrary $n$ and $ \la [U, V] \ra_n=0$ in the case of odd $n$ for the commutator $[U, V]:=UV-VU$ of two arbitrary elements, for example, in \cite{pseudo}. We get the invariance under cyclic permutations as a consequence of the previous properties. The similarity invariance is a consequence of the invariance under cyclic permutations.
\end{proof}
The operation $\la U\ra_0$ is also called the scalar part of $U\in\cl_{p,q}$. This operation is related to the trace of matrices (see Section \ref{sec:3}, and note that the operation $\la \quad \ra_0$ has the same properties (\ref{po}), (\ref{larao}), (\ref{lara02}) as the trace of matrices).

\begin{definition}
We call any operation of the form
\begin{eqnarray}
U \to \sum_{k=0}^n \lambda_k \la U \ra_k,\qquad \lambda_k=\pm 1,\qquad U=\sum_{k=0}^n \la U \ra_k,\qquad \la U \ra_k\in\cl^k_{p,q},\label{opconj}
\end{eqnarray}
an operation of conjugation in Clifford algebra.
\end{definition}
The operations of conjugation commute with each other by definition. The operation of conjugation is an involution: the square of each operation equals the identical operation $\id$ (which is also an operation of conjugation with all $\lambda_k$ equal to $1$). In the theory of Clifford algebras, there are three classical operations of conjugation: the grade involution (or the main involution) $\,\widehat\over\,$, the reversion $\,\widetilde\over\,$, and the superposition of these two operations $\,\widehat{\widetilde\over}\,$, which is called the Clifford conjugation:
\begin{eqnarray}
&&\widehat{U}=\sum_{k=0}^n (-1)^k \la U \ra_k,\qquad \widetilde{U}=\sum_{k=0}^n(-1)^{\frac{k(k-1)}{2}} \la U\ra_k,\qquad \widetilde{\widehat{U}}=\sum_{k=0}^n(-1)^{\frac{k(k+1)}{2}} \la U\ra_k,\label{coc}\\
&&\widehat{UV}=\widehat{U}\widehat{V},\qquad \widetilde{UV}=\widetilde{V}\widetilde{U},\qquad \widehat{\widetilde{UV}}=\widehat{\widetilde{V}} \widehat{\widetilde{U}},\qquad \forall U, V\in\cl_{p,q}.\label{grrev}
\end{eqnarray}
We do not use separate notation for the Clifford conjugation in this paper and write the combination of the two symbols $\,\widehat\over\,$ and $\,\widetilde\over\,$. The reversion and the Clifford conjugation are anti-involutions because they satisfy~(\ref{grrev}).

\begin{definition}\label{def2}
Let us consider the following four subspaces of quaternion types $r=0, 1, 2, 3$ in $\cl_{p,q}$ (see \cite{quat}), which are defined using the grade involution and the reversion\footnote{For example, the subspace $\cl^{\overline{0}}_{p,q}$ consists of elements that are not changing under the grade involution $\widehat{U}=U$ and the reversion $\widetilde{U}=U$, i.e. elements of grades $0$, $4$, $8$, etc. The subspace $\cl^{\overline{1}}_{p,q}$ consists of elements that satisfy $\widehat{U}=-U$ and $\widetilde{U}=U$, i.e. elements of grades $1$, $5$, $9$, etc. Similarly for the other two subspaces. Other properties of these four subspaces are discussed in detail in \cite{quat1, quat2}.}:
\begin{equation}\label{quat}
\cl^{\overline{r}}_{p,q}:=\{U\in\cl_{p,q}: \quad \widehat{U}=(-1)^r U,\quad \widetilde{U}=(-1)^{\frac{r(r-1)}{2}}U\}=\bigoplus_{k=r\mod 4}\cl^k_{p,q},\qquad r=0, 1, 2, 3.
\end{equation}
\end{definition}

We denote the projection of an arbitrary element $U\in\cl_{p,q}$ onto the subspace of quaternion type $k$ by $ \la U \ra_{\overline{k}}$, $k=0, 1, 2, 3$. By Definition \ref{def2}, we get
\begin{eqnarray}
  &&U = \la U \ra_{\overline{0}}+ \la U \ra_{\overline{1}}+ \la U \ra_{\overline{2}}+ \la U \ra_{\overline{3}},\qquad
\widehat{U} = \la U \ra_{\overline{0}}- \la U \ra_{\overline{1}}+ \la U \ra_{\overline{2}}- \la U \ra_{\overline{3}},\label{syst}\\
&&\widetilde{U} = \la U \ra_{\overline{0}}+ \la U \ra_{\overline{1}}- \la U \ra_{\overline{2}}- \la U \ra_{\overline{3}},\qquad
\widehat{\widetilde{U}}= \la U \ra_{\overline{0}}- \la U \ra_{\overline{1}}- \la U \ra_{\overline{2}}+ \la U \ra_{\overline{3}}.\nonumber
\end{eqnarray}
Solving this system of four linear equations, we get\footnote{As one of the anonymous reviewers correctly noted, these formulas can also be used as definitions of projection operations onto the subspaces of quaternion types $0$, $1$, $2$, and $3$.}
\begin{eqnarray*}
  && \la U \ra_{\overline{0}} = \frac{1}{4}(U+\widehat{U}+\widetilde{U}+\widehat{\widetilde{U}}), \qquad
  \la U \ra_{\overline{1}} = \frac{1}{4}(U-\widehat{U}+\widetilde{U}-\widehat{\widetilde{U}}),\\
  &&\la U \ra_{\overline{2}} = \frac{1}{4}(U+\widehat{U}-\widetilde{U}-\widehat{\widetilde{U}}),\qquad
  \la U \ra_{\overline{3}} = \frac{1}{4}(U-\widehat{U}-\widetilde{U}+\widehat{\widetilde{U}}).
\end{eqnarray*}
This means that the projection operations onto the subspaces of quaternion types $k=0, 1, 2, 3$ are determined by the operations $\id$, $\,\widetilde\over\,$, $\,\widehat\over\,$, and $\,\widehat{\widetilde\over}\,$. In $\cl_{p,q}$ with $n=p+q\leq 3$, we can similarly realize projection operations onto the subspaces of fixed grades $\la U \ra_k=\la U \ra_{\overline{k}}$, $k=0, 1, 2, 3$, using only the operations (\ref{coc}) because the concepts of grades and quaternion types are the same in these cases: $\cl^{\overline{k}}_{p,q}=\cl^k_{p,q}$, $k=0, 1, 2, 3$.\footnote{As a consequence, the expressions for the determinant, other characteristic polynomial coefficients, and inverse can be realized using only the three classical operations of conjugations (\ref{coc}) in the cases $n\leq 3$ (see Sections \ref{sec:3} and \ref{sec:4}).} If we want to realize projection operations onto fixed grades $k=0, 1, \ldots, n$ in the cases $n\geq 4$, we need more operations of conjugation. For example, in the case $n=4$, we have $\cl^{\overline{0}}_{p,q}=\cl^0_{p,q}\oplus\cl^4_{p,q}$, $\la U\ra_{\overline{0}}=\la U \ra_0+\la U \ra_4$, and we can not separately realize operations $\la U\ra_0$ and $\la U \ra_4$ using only the operations (\ref{coc}).

\begin{definition}
Let us consider the following\footnote{The equivalence of these two definitions follows from the following fact: the binomial coefficient $C^i_k$ is odd if and only if there are no $1$ in the binary notation of the number $i$ in the digits, in which there is $0$ in the binary notation of the number $k$.} operations of conjugation of special type $\va_1$, $\va_2$, \ldots, $\va_m$, $m:=[\log_2 n]+1$:
\begin{eqnarray}
U^{\va_j}&=&\sum_{k=0}^n (-1)^{C^{2^{j-1}}_k} \la U \ra_k=\sum_{k=0,\, \ldots,\, 2^{j-1}-1 \mod 2^j} \la U \ra_k - \sum_{k=2^{j-1},\, \ldots,\, 2^j-1\mod 2^j} \la U \ra_k,\label{ocst}
\end{eqnarray}
where $C^i_k:={k\choose{i}}=\frac{k!}{i!(k-i)!}$ is the binomial coefficient (for $i>k$, we have $C^i_k=0$ by definition) and $[\log_2 n]$ is the integer part of $\log_2 n$.
\end{definition}

In the particular cases, we get
\begin{eqnarray}
U^{\va_1}&=&\sum_{k=0}^n (-1)^{C^1_k} \la U \ra_k=\sum_{k=0}^n (-1)^k \la U \ra_k=\widehat{U},\qquad n\geq 1;\\
U^{\va_2}&=&\sum_{k=0}^n (-1)^{C^2_k} \la U \ra_k=\sum_{k=0}^n (-1)^\frac{k(k-1)}{2} \la U \ra_k=\widetilde{U},\qquad n\geq 2;\\
U^{\va_3}&=&\sum_{k=0}^n (-1)^{C^4_k} \la U \ra_k=\sum_{k=0, 1, 2, 3\mod 8} \la U \ra_k -\sum_{k=4, 5, 6, 7\mod 8} \la U \ra_k ,\qquad n\geq 4;\\
U^{\va_4} &=&\sum_{k=0}^n (-1)^{C^8_k} \la U \ra_k=\sum_{k=0, 1, \ldots, 7\mod 16} \la U \ra_k -\sum_{k=8, 9, \ldots, 15\mod 16}\la U \ra_k,\qquad n\geq 8.
\end{eqnarray}
We see that the first two operations coincide with the two classical operations -- the grade involution $\va_1=\widehat\over\,$ and the reversion $\va_2=\widetilde\over\,$. We denote the superposition of the operations $\va_k$ and $\va_l$ by $\va_k \va_l$. The definitions of the operations (\ref{ocst}) and their superpositions are illustrated in Table \ref{table1} (we put ``$+$'' if $\la U \ra_k \to \la U \ra_k$ under the corresponding operation and put ``$-$'' if $\la U \ra_k \to -\la U \ra_k$ under the corresponding operation for each grade $k=0, 1, \ldots, n$). In the case $n=1$, we can realize projection operations onto the subspaces of fixed grades $0$ and $1$ using only the identity operation and the grade involution (because the $2\times 2$ matrix in the upper left corner of Table \ref{table1} is invertible; we interpret ``$+$'' as $1$ and ``$-$'' as $-1$ ). In the cases $n=2, 3$, we need also the operation $\,\widetilde\over\,$ (see the invertible $4\times 4$  matrix in the upper left corner of  Table \ref{table1}; this matrix corresponds to the system of equations (\ref{syst})). In the cases $n=4, 5, 6, 7$, we can do this using the first three operations $\,\widehat\over\,$, $\,\widetilde\over\,$, $\va_3$, and their superpositions (see the invertible $8\times 8$ matrix in the upper left corner of  Table \ref{table1}). In the cases $n=8, \ldots, 15$, we need also the fourth operation $\va_4$ (see the invertible $16\times 16$ matrix, which corresponds to the whole Table \ref{table1}), and so on. As we will see below, explicit formulas for the determinant, other characteristic polynomial coefficients, and inverse can be written using only the presented here operations of conjugation. We use the notation $\va:=\va_3$ further.

\begin{table}
\begin{center}
\begin{tabular}{|c|l|l|l|l|l|l|l|l|l|l|l|l|l|l|l|l|}
  \hline
grade $k$ & 0 & 1 & 2 & 3 & 4 & 5 & 6 & 7 & 8 & 9 & 10 & 11 & 12 & 13 & 14 & 15 \\ \hline
\multicolumn{1}{|c}{$\id$} &
\multicolumn{1}{|c}{$+$} & \multicolumn{1}{c}{$+$} & \multicolumn{1}{|c}{$+$} & \multicolumn{1}{c}{$+$} & \multicolumn{1}{|c}{$+$} & \multicolumn{1}{c}{$+$} & \multicolumn{1}{c}{$+$} & \multicolumn{1}{c}{$+$} & \multicolumn{1}{|c}{$+$} & \multicolumn{1}{c}{$+$} & \multicolumn{1}{c}{$+$} & \multicolumn{1}{c}{$+$} & \multicolumn{1}{c}{$+$} & \multicolumn{1}{c}{$+$} & \multicolumn{1}{c}{$+$} & \multicolumn{1}{c|}{$+$} \\ \cline{1-1}
\multicolumn{1}{|c}{${\va_1}=\widehat\over\,$} &
\multicolumn{1}{|c}{$+$} & \multicolumn{1}{c}{$-$} & \multicolumn{1}{|c}{$+$} & \multicolumn{1}{c}{$-$} & \multicolumn{1}{|c}{$+$} & \multicolumn{1}{c}{$-$} & \multicolumn{1}{c}{$+$} & \multicolumn{1}{c}{$-$} & \multicolumn{1}{|c}{$+$} & \multicolumn{1}{c}{$-$} & \multicolumn{1}{c}{$+$} & \multicolumn{1}{c}{$-$} & \multicolumn{1}{c}{$+$} & \multicolumn{1}{c}{$-$} & \multicolumn{1}{c}{$+$} & \multicolumn{1}{c|}{$-$}  \\ \cline{1-3}
\multicolumn{1}{|c}{${\va_2}=\widetilde\over\,$} &
\multicolumn{1}{|c}{$+$} & \multicolumn{1}{c}{$+$} & \multicolumn{1}{c}{$-$} & \multicolumn{1}{c}{$-$} & \multicolumn{1}{|c}{$+$} & \multicolumn{1}{c}{$+$} & \multicolumn{1}{c}{$-$} & \multicolumn{1}{c}{$-$} & \multicolumn{1}{|c}{$+$} & \multicolumn{1}{c}{$+$} & \multicolumn{1}{c}{$-$} & \multicolumn{1}{c}{$-$} & \multicolumn{1}{c}{$+$} & \multicolumn{1}{c}{$+$} & \multicolumn{1}{c}{$-$} & \multicolumn{1}{c|}{$-$} \\ \cline{1-1}
\multicolumn{1}{|c}{${\va_1 \va_2}$} &
\multicolumn{1}{|c}{$+$} & \multicolumn{1}{c}{$-$} & \multicolumn{1}{c}{$-$} & \multicolumn{1}{c}{$+$} & \multicolumn{1}{|c}{$+$} & \multicolumn{1}{c}{$-$} & \multicolumn{1}{c}{$-$} & \multicolumn{1}{c}{$+$} & \multicolumn{1}{|c}{$+$} & \multicolumn{1}{c}{$-$} & \multicolumn{1}{c}{$-$} & \multicolumn{1}{c}{$+$} & \multicolumn{1}{c}{$+$} & \multicolumn{1}{c}{$-$} & \multicolumn{1}{c}{$-$} & \multicolumn{1}{c|}{$+$} \\ \cline{1-5}
\multicolumn{1}{|c}{$\va_3$} &
\multicolumn{1}{|c}{$+$} & \multicolumn{1}{c}{$+$} & \multicolumn{1}{c}{$+$} & \multicolumn{1}{c}{$+$} & \multicolumn{1}{c}{$-$} & \multicolumn{1}{c}{$-$} & \multicolumn{1}{c}{$-$} & \multicolumn{1}{c}{$-$} & \multicolumn{1}{|c}{$+$} & \multicolumn{1}{c}{$+$} & \multicolumn{1}{c}{$+$} & \multicolumn{1}{c}{$+$} & \multicolumn{1}{c}{$-$} & \multicolumn{1}{c}{$-$} & \multicolumn{1}{c}{$-$} & \multicolumn{1}{c|}{$-$} \\ \cline{1-1}
\multicolumn{1}{|c}{${\va_1 \va_3}$} &
\multicolumn{1}{|c}{$+$} & \multicolumn{1}{c}{$-$} & \multicolumn{1}{c}{$+$} & \multicolumn{1}{c}{$-$} & \multicolumn{1}{c}{$-$} & \multicolumn{1}{c}{$+$} & \multicolumn{1}{c}{$-$} & \multicolumn{1}{c}{$+$} & \multicolumn{1}{|c}{$+$} & \multicolumn{1}{c}{$-$} & \multicolumn{1}{c}{$+$} & \multicolumn{1}{c}{$-$} & \multicolumn{1}{c}{$-$} & \multicolumn{1}{c}{$+$} & \multicolumn{1}{c}{$-$} & \multicolumn{1}{c|}{$+$} \\ \cline{1-1}
\multicolumn{1}{|c}{${\va_2 \va_3}$} &
\multicolumn{1}{|c}{$+$} & \multicolumn{1}{c}{$+$} & \multicolumn{1}{c}{$-$} & \multicolumn{1}{c}{$-$} & \multicolumn{1}{c}{$-$} & \multicolumn{1}{c}{$-$} & \multicolumn{1}{c}{$+$} & \multicolumn{1}{c}{$+$} & \multicolumn{1}{|c}{$+$} & \multicolumn{1}{c}{$+$} & \multicolumn{1}{c}{$-$} & \multicolumn{1}{c}{$-$} & \multicolumn{1}{c}{$-$} & \multicolumn{1}{c}{$-$} & \multicolumn{1}{c}{$+$} & \multicolumn{1}{c|}{$+$} \\ \cline{1-1}
\multicolumn{1}{|c}{$\va_1 \va_2 \va_3$} &
\multicolumn{1}{|c}{$+$} & \multicolumn{1}{c}{$-$} & \multicolumn{1}{c}{$-$} & \multicolumn{1}{c}{$+$} & \multicolumn{1}{c}{$-$} & \multicolumn{1}{c}{$+$} & \multicolumn{1}{c}{$+$} & \multicolumn{1}{c}{$-$} & \multicolumn{1}{|c}{$+$} & \multicolumn{1}{c}{$-$} & \multicolumn{1}{c}{$-$} & \multicolumn{1}{c}{$+$} & \multicolumn{1}{c}{$-$} & \multicolumn{1}{c}{$+$} & \multicolumn{1}{c}{$+$} & \multicolumn{1}{c|}{$-$} \\ \cline{1-9}
\multicolumn{1}{|c}{$\va_4$} &
\multicolumn{1}{|c}{$+$} & \multicolumn{1}{c}{$+$} & \multicolumn{1}{c}{$+$} & \multicolumn{1}{c}{$+$} & \multicolumn{1}{c}{$+$} & \multicolumn{1}{c}{$+$} & \multicolumn{1}{c}{$+$} & \multicolumn{1}{c}{$+$} & \multicolumn{1}{c}{$-$} & \multicolumn{1}{c}{$-$} & \multicolumn{1}{c}{$-$} & \multicolumn{1}{c}{$-$} & \multicolumn{1}{c}{$-$} & \multicolumn{1}{c}{$-$} & \multicolumn{1}{c}{$-$} & \multicolumn{1}{c|}{$-$} \\ \cline{1-1}
\multicolumn{1}{|c}{${\va_1 \va_4}$} &
\multicolumn{1}{|c}{$+$} & \multicolumn{1}{c}{$-$} & \multicolumn{1}{c}{$+$} & \multicolumn{1}{c}{$-$} & \multicolumn{1}{c}{$+$} & \multicolumn{1}{c}{$-$} & \multicolumn{1}{c}{$+$} & \multicolumn{1}{c}{$-$} & \multicolumn{1}{c}{$-$} & \multicolumn{1}{c}{$+$} & \multicolumn{1}{c}{$-$} & \multicolumn{1}{c}{$+$} & \multicolumn{1}{c}{$-$} & \multicolumn{1}{c}{$+$} & \multicolumn{1}{c}{$-$} & \multicolumn{1}{c|}{$+$} \\ \cline{1-1}
\multicolumn{1}{|c}{${\va_2 \va_4}$} &
\multicolumn{1}{|c}{$+$} & \multicolumn{1}{c}{$+$} & \multicolumn{1}{c}{$-$} & \multicolumn{1}{c}{$-$} & \multicolumn{1}{c}{$+$} & \multicolumn{1}{c}{$+$} & \multicolumn{1}{c}{$-$} & \multicolumn{1}{c}{$-$} & \multicolumn{1}{c}{$-$} & \multicolumn{1}{c}{$-$} & \multicolumn{1}{c}{$+$} & \multicolumn{1}{c}{$+$} & \multicolumn{1}{c}{$-$} & \multicolumn{1}{c}{$-$} & \multicolumn{1}{c}{$+$} & \multicolumn{1}{c|}{$+$} \\ \cline{1-1}
\multicolumn{1}{|c}{${\va_1 \va_2 \va_4}$} &
\multicolumn{1}{|c}{$+$} & \multicolumn{1}{c}{$-$} & \multicolumn{1}{c}{$-$} & \multicolumn{1}{c}{$+$} & \multicolumn{1}{c}{$+$} & \multicolumn{1}{c}{$-$} & \multicolumn{1}{c}{$-$} & \multicolumn{1}{c}{$+$} & \multicolumn{1}{c}{$-$} & \multicolumn{1}{c}{$+$} & \multicolumn{1}{c}{$+$} & \multicolumn{1}{c}{$-$} & \multicolumn{1}{c}{$-$} & \multicolumn{1}{c}{$+$} & \multicolumn{1}{c}{$+$} & \multicolumn{1}{c|}{$-$} \\ \cline{1-1}
\multicolumn{1}{|c}{$\va_3 \va_4$} &
\multicolumn{1}{|c}{$+$} & \multicolumn{1}{c}{$+$} & \multicolumn{1}{c}{$+$} & \multicolumn{1}{c}{$+$} & \multicolumn{1}{c}{$-$} & \multicolumn{1}{c}{$-$} & \multicolumn{1}{c}{$-$} & \multicolumn{1}{c}{$-$} & \multicolumn{1}{c}{$-$} & \multicolumn{1}{c}{$-$} & \multicolumn{1}{c}{$-$} & \multicolumn{1}{c}{$-$} & \multicolumn{1}{c}{$+$} & \multicolumn{1}{c}{$+$} & \multicolumn{1}{c}{$+$} & \multicolumn{1}{c|}{$+$} \\ \cline{1-1}
\multicolumn{1}{|c}{${\va_1 \va_3 \va_4}$} &
\multicolumn{1}{|c}{$+$} & \multicolumn{1}{c}{$-$} & \multicolumn{1}{c}{$+$} & \multicolumn{1}{c}{$-$} & \multicolumn{1}{c}{$-$} & \multicolumn{1}{c}{$+$} &\multicolumn{1}{c}{ $-$} & \multicolumn{1}{c}{$+$} & \multicolumn{1}{c}{$-$} & \multicolumn{1}{c}{$+$} & \multicolumn{1}{c}{$-$} & \multicolumn{1}{c}{$+$} & \multicolumn{1}{c}{$+$} & \multicolumn{1}{c}{$-$} & \multicolumn{1}{c}{$+$} &\multicolumn{1}{c|}{$-$}  \\ \cline{1-1}
\multicolumn{1}{|c}{${\va_2 \va_3 \va_4}$} &
\multicolumn{1}{|c}{$+$} & \multicolumn{1}{c}{$+$} & \multicolumn{1}{c}{$-$} & \multicolumn{1}{c}{$-$} & \multicolumn{1}{c}{$-$} & \multicolumn{1}{c}{$-$} & \multicolumn{1}{c}{$+$} & \multicolumn{1}{c}{$+$} & \multicolumn{1}{c}{$-$} & \multicolumn{1}{c}{$-$} & \multicolumn{1}{c}{$+$} & \multicolumn{1}{c}{$+$} & \multicolumn{1}{c}{$+$} & \multicolumn{1}{c}{$+$} & \multicolumn{1}{c}{$-$} & \multicolumn{1}{c|}{$-$}  \\ \cline{1-1}
\multicolumn{1}{|c}{$\va_1 \va_2 \va_3 \va_4$} &
\multicolumn{1}{|c}{$+$} & \multicolumn{1}{c}{$-$} & \multicolumn{1}{c}{$-$} & \multicolumn{1}{c}{$+$} & \multicolumn{1}{c}{$-$} & \multicolumn{1}{c}{$+$} & \multicolumn{1}{c}{$+$} & \multicolumn{1}{c}{$-$} & \multicolumn{1}{c}{$-$} & \multicolumn{1}{c}{$+$} & \multicolumn{1}{c}{$+$} & \multicolumn{1}{c}{$-$} & \multicolumn{1}{c}{$+$} & \multicolumn{1}{c}{$-$} & \multicolumn{1}{c}{$-$} & \multicolumn{1}{c|}{$+$} \\ \cline{1-17}
\end{tabular}
\end{center}\caption{The identity operation $\id$ and the operation $\va_1$ for $n\leq 1$; the operations $\id$, $\va_1$, $\va_2$, $\va_1 \va_2$ for $n\leq 3$; the operations $\va_1$, $\va_2$, $\va_3$ and their superpositions for $n\leq 7$; the operations $\va_1$, $\va_2$, $\va_3$, $\va_4$ and their superpositions for $n\leq 15$. The table can be continued.}\label{table1}
\end{table}

In the following theorem, we give explicit formulas for the operation $\la \quad \ra_0$ using only the operations $\va_1$, \ldots, $\va_m$ and their superpositions. Different explicit formulas for the projection onto the subspace of grade~$0$ correspond to different explicit formulas for the trace, determinant, and other characteristic polynomial coefficients of the Clifford algebra elements (see Section \ref{sec:4}).

\begin{theorem}\label{th10} We can realize the operation $\la\quad\ra_0$ using the operations $\va_1$, $\va_2$, \ldots $\va_m$ in the following form:
\begin{eqnarray}
\la U \ra_0=\frac{1}{2^m}(U+U^{\va_1}+U^{\va_2}+\cdots+U^{\va_1 \ldots \va_m}),\qquad m=[\log_2 n]+1,\label{lara0}
\end{eqnarray}
in particular,
\begin{eqnarray}
\la U \ra_0&=&\frac{1}{2}(U+\widehat{U}),\qquad n=1;\label{r1}\\
\la U \ra_0&=&\frac{1}{4}(U+\widehat{U}+\widetilde{U}+\widehat{\widetilde{U}}),\qquad n=2, 3;\label{r3}\\
\la U \ra_0 &=&\frac{1}{8}(U+\widehat{U}+\widetilde{U}+\widehat{\widetilde{U}}+U^\va+\widehat{U}^\va+\widetilde{U}^\va+\widehat{\widetilde{U}}^\va), \qquad n=4, 5, 6, 7.
\end{eqnarray}
In some cases, the operation $\la\quad \ra_0$ can be realized in the following simpler form:
\begin{eqnarray}
\la U \ra_0&=&\frac{1}{2}(U+\widehat{\widetilde{U}})=\frac{1}{2}(\widehat{U}+\widetilde{U}),\qquad n=2;\label{r2}\\
\la U \ra_0&=&\frac{1}{4}(U+\widehat{\widetilde{U}}+\widehat{U}^\va+\widetilde{U}^\va)=\frac{1}{4}(\widehat{U}+\widetilde{U}+U^\va+ \widehat{\widetilde{U}}^\va),\qquad n=4, 5, 6;\label{T1}\\
\la U \ra_0&=&\frac{1}{4}(U+\widehat{U}+\widetilde{U}^\va+\widehat{\widetilde{U}}^\va)= \frac{1}{4}(\widetilde{U}+\widehat{\widetilde{U}}+U^\va+\widehat{U}^\va),\qquad n=4, 5;\\
\la U \ra_0&=&\frac{1}{4}(U+\widetilde{U}+\widehat{U}^\va+\widehat{\widetilde{U}}^\va)=\frac{1}{4}(\widehat{U}+\widehat{\widetilde{U}}+U^\va+ \widetilde{U}^\va),\qquad n=4.\label{T2}
\end{eqnarray}
The same expressions coincide with the projection onto the center $\cen(\cl_{p,q})$ in the cases:
\begin{eqnarray}
\la U \ra_{\cen}&=&\la U\ra_0+\la U\ra_n= \frac{1}{2}(U+\widehat{\widetilde{U}}),\qquad n=3;\\
\la U \ra_{\cen}&=& \la U\ra_0+\la U\ra_n= \frac{1}{4}(U+\widehat{\widetilde{U}}+\widehat{U}^\va+\widetilde{U}^\va),\qquad n=7;\label{T3}\\
\la U \ra_{\cen}&=& \la U\ra_0+\la U\ra_n= \frac{1}{4}(U+\widetilde{U}+\widehat{U}^\va+\widehat{\widetilde{U}}^\va),\qquad n=5.\label{T4}
\end{eqnarray}
\end{theorem}
\begin{proof} We get the formula (\ref{lara0}) using the following fact. We have the same number of pluses and minuses in each column, except the first one, of each of the considered square matrices (of dimension 2, 4, 8, 16, \ldots, $2^{[\log_2 n]+1}$) in the upper left corner of Table \ref{table1}. This can be proved by induction: this is true for the first matrix of dimension 2; each of the considered square matrices is equal to the block-diagonal matrix $$\left(\begin{array}{cc} A & A \\ A & -A \\ \end{array} \right),$$ where $A$ is the previous square matrix (by the definition of the operations $\va_1$, \ldots, $\va_m$). We get all other formulas for $\la\quad\ra_0$ and $\la\quad\ra_{\cen}$ using the definitions of the operations $\va_j$, $j=1, \ldots, m$ in the particular cases $n\leq 7$.
\end{proof}

Note that for fixed $n$, there are $2^{n+1}$ different operations of conjugation (\ref{opconj}). The grade-negation operations $$U_{\underline{k}}:=U-2\la U \ra_k,\qquad k=0, 1, \ldots, n,$$
which are used in \cite{dadbeh,acus,hitzer1,hitzer2}, are the particular cases of the operations of conjugation (\ref{opconj}).  We can consider $n+1$ grade-negation operations and realize the other operations of conjugation as superpositions of these operations. Alternatively, we can consider $m=[\log_2 n]+1$ (which is less than $n+1$) operations of conjugation of special type $\va_1$, \ldots, $\va_m$ and realize the other operations of conjugation as linear combinations of superpositions of these operations (for example, we have different realization of the operation of conjugation (\ref{over}) using the operations $\va_1, \ldots, \va_m$, see Lemma \ref{lemover1})\footnote{It can be proved that all operations $\la U \ra_k$, $k=1, \ldots, n$ (similarly to the case of the operation $\la U\ra_0$, see Theorem \ref{th10}) can be realized as linear combinations of the operations $\id$, $\va_1$, \ldots, $\va_m$, $\va_1 \va_2$, \ldots, $\va_1\cdots \va_m$. As a consequence, we get that all operations of conjugation (\ref{opconj}) can be realized as linear combinations of the operations $\id$, $\va_1$, \ldots, $\va_m$, $\va_1 \va_2$, \ldots, $\va_1\cdots \va_m$. We do not use this fact in this paper.}.

Note that for the operations $\va_j$, $j=3, 4, \ldots$ (we call them additional operations of conjugation), we have $(UV)^{\va_j}\neq U^{\va_j} V^{\va_j}$ and $(UV)^{\va_j}\neq V^{\va_j} U^{\va_j} $ in the general case. Let us present the following nontrivial properties of the operation $\va:=\va_3$.  We use these properties in Sections \ref{sec:3} and \ref{sec:4} of this paper.

\begin{theorem} We have
\begin{eqnarray}
U (\widetilde{U} \widehat{U})^\va=(\widehat{U} \widetilde{U})^\va U,\quad
\widehat{U} (\widehat{\widetilde{U}} U)^\va=(U \widehat{\widetilde{U}})^\va \widehat{U},\quad
\widetilde{U}(U \widehat{\widetilde{U}})^\va=(\widehat{\widetilde{U}} U)^\va \widetilde{U},\quad
\widehat{\widetilde{U}} (\widehat{U}\widetilde{U})^\va=(\widetilde{U}\widehat{U})^\va\widehat{\widetilde{U}},\quad n\leq 7;\label{T5}\\
U (\widehat{\widetilde{U}} \widehat{U})^\va=(\widehat{U}\widehat{\widetilde{U}})^\va U,\quad
\widehat{U}(\widetilde{U} U)^\va=(U \widetilde{U})^\va \widehat{U},\quad
\widetilde{U}(\widehat{U} \widehat{\widetilde{U}})^\va=(\widehat{\widetilde{U}} \widehat{U})^\va \widetilde{U},\quad
\widehat{\widetilde{U}} (U\widetilde{U})^\va=(\widetilde{U} U)^\va\widehat{\widetilde{U}}, \quad n\leq 5.\label{T6}
\end{eqnarray}
\end{theorem}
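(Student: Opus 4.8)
The plan is to reduce the eight identities to two, and then to split each of those into a part not involving $\va$ and a $\va$-correction that is controlled by quaternion typification. For the reduction, note that each of $\id$, $\hat{\quad}$, $\tilde{\quad}$, $\hat{\tilde{\quad}}$ and $\va=\va_3$ acts diagonally with respect to the grading $\cl_{p,q}=\bigoplus_k\cl^k_{p,q}$, so they pairwise commute; moreover $\hat{\quad}$ is an automorphism and $\tilde{\quad}$, $\hat{\tilde{\quad}}$ are anti-automorphisms by (\ref{grrev}). Applying $\hat{\quad}$, then $\tilde{\quad}$, then $\hat{\tilde{\quad}}$ to the first identity in (\ref{T5}) and using these two facts yields, respectively, the second, third and fourth identities in (\ref{T5}); in the same way all four identities in (\ref{T6}) follow from the first one of (\ref{T6}). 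Since the conjugations are bijective, it therefore suffices to prove
$$U(\tilde U\hat U)^\va=(\hat U\tilde U)^\va U\quad(n\leq 7),\qquad U(\hat{\tilde U}\hat U)^\va=(\hat U\hat{\tilde U})^\va U\quad(n\leq 5).$$

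Next I would localise the middle factor. Since $\hat{\tilde{\quad}}$ is an anti-automorphism commuting with the other operations, $\widehat{\widetilde{\tilde U\hat U}}=\tilde U\hat U$ and $\widehat{\widetilde{\hat U\tilde U}}=\hat U\tilde U$, so $\tilde U\hat U$ and $\hat U\tilde U$ lie in $\cl^{\overline{0}}_{p,q}\oplus\cl^{\overline{3}}_{p,q}$; likewise $\widetilde{\hat{\tilde U}\hat U}=\hat{\tilde U}\hat U$, so $\hat{\tilde U}\hat U$ and $\hat U\hat{\tilde U}$ lie in $\cl^{\overline{0}}_{p,q}\oplus\cl^{\overline{1}}_{p,q}$. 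For $n\leq 3$ these subspaces are contained in $\cen(\cl_{p,q})$; then $[\tilde U,\hat U]=\tilde U\hat U-\hat U\tilde U$ lies in $\cl^{\overline{0}}_{p,q}\oplus\cl^{\overline{3}}_{p,q}$, while its grade-$0$ component vanishes and, for odd $n$, its grade-$n$ component vanishes by Lemma \ref{lem1}; for $n\leq 3$ this forces $[\tilde U,\hat U]=0$, so $\tilde U\hat U=\hat U\tilde U$ is central, $U\tilde U\hat U=\hat U\tilde U U$, and applying $\hat{\quad}$, $\tilde{\quad}$, $\hat{\tilde{\quad}}$ (which suffices since $\va=\id$ here) gives all eight identities. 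For $4\leq n\leq 7$ the relevant subspace for the first family is $\cl^0_{p,q}\oplus\cl^3_{p,q}\oplus\cl^4_{p,q}\oplus\cl^7_{p,q}$ (with $\cl^7_{p,q}$ present only for $n=7$), on which, by (\ref{ocst}), $\va$ equals $\id$ in grades $0,3$ and $-\id$ in grades $4,7$; hence
$$(\tilde U\hat U)^\va=\tilde U\hat U-2\la\tilde U\hat U\ra_4-2\la\tilde U\hat U\ra_7,$$
and analogously for $\hat U\tilde U$, and in the second family ($4\leq n\leq 5$) for $\hat{\tilde U}\hat U$, $\hat U\hat{\tilde U}$ with grades $4,5$. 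When $n=7$ (resp.\ $n=5$) the top-grade part is harmless: $\la\tilde U\hat U\ra_7=\la\hat U\tilde U\ra_7$ by Lemma \ref{lem1}, and $\cl^7_{p,q}$ (resp.\ $\cl^5_{p,q}$) is central, so it contributes equally on both sides.

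After this, for $4\leq n\leq 7$ it remains to prove (i) $U\tilde U\hat U=\hat U\tilde U U$ and (ii) $U\la\tilde U\hat U\ra_4=\la\hat U\tilde U\ra_4 U$, together with the analogues for the second family. Writing $U=\la U\ra_{\even}+\la U\ra_{\odd}$ and using that $\hat{\quad}$ acts as $+\id$ on $\la U\ra_{\even}$ and $-\id$ on $\la U\ra_{\odd}$, one gets $U\tilde U\hat U-\hat U\tilde U U=2\bigl(\la U\ra_{\odd}\tilde U\la U\ra_{\even}-\la U\ra_{\even}\tilde U\la U\ra_{\odd}\bigr)$, so (i) amounts to $\la U\ra_{\odd}\tilde U\la U\ra_{\even}=\la U\ra_{\even}\tilde U\la U\ra_{\odd}$, which I would verify by expanding each factor into quaternion types and invoking the multiplication rules for quaternion-homogeneous components. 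For (ii) I would decompose $U$, $\tilde U$, $\hat U$ into quaternion types, record which grades occur in each product (so that the action of $\va$ on them is known), extract the grade-$4$ component of the resulting expressions, and reduce to a finite list of relations among products of at most three quaternion-homogeneous elements. Combining (i) and (ii) with the previous step gives the two representative identities, and the reduction step completes the argument.

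The substantive point is (ii) (and the quaternion-typification input to (i)): one must keep exact track of which grades arise in each product of quaternion-homogeneous pieces, so that the sign $\va$ introduces in grades $\geq 4$ matches the sign coming from the anti-automorphism property of $\tilde{\quad}$ and everything cancels. The bounds $n\leq 7$ and $n\leq 5$ are exactly what confine all the grades involved to a single block on which $\va$ has constant sign; for the second family the bound is sharp, the grade-$4$ relation failing for $n=6$ and $n=7$.
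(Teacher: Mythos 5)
Your reduction of the eight identities to one representative per family (by applying $\hat{\quad}$, $\tilde{\quad}$, $\hat{\tilde{\quad}}$, which commute with $\va$) is correct and is also how the paper proceeds, and your localisation of $\tilde U\hat U$, $\hat U\tilde U$ in $\cl^{\overline{0}}_{p,q}\oplus\cl^{\overline{3}}_{p,q}$ and of $\hat{\tilde U}\hat U$, $\hat U\hat{\tilde U}$ in $\cl^{\overline{0}}_{p,q}\oplus\cl^{\overline{1}}_{p,q}$ is sound. The proof breaks at the next step: the two statements you propose to prove separately, (i) $U\tilde U\hat U=\hat U\tilde U U$ and (ii) $U\la\tilde U\hat U\ra_4=\la\hat U\tilde U\ra_4 U$, are both \emph{false} for $n\geq 4$. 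Take $U=1+e_1+e_{1234}\in\cl_{4,0}$, so $\tilde U=U$ and $\hat U=1-e_1+e_{1234}$. Then $U^2=3+2e_1+2e_{1234}$ and a direct computation gives
\begin{eqnarray*}
U\tilde U\hat U=U^2\hat U=3-e_1+4e_{234}+5e_{1234},\qquad
\hat U\tilde U U=\hat U U^2=3-e_1-4e_{234}+5e_{1234},
\end{eqnarray*}
so (i) fails by $8e_{234}$; likewise $\la\tilde U\hat U\ra_4=\la\hat U\tilde U\ra_4=2e_{1234}$ and $U(2e_{1234})-(2e_{1234})U=4e_{234}\neq 0$, so (ii) fails by $4e_{234}$. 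The theorem survives only because these two discrepancies cancel: $8e_{234}-2\cdot 4e_{234}=0$. So the identity does not decompose into a $\va$-free part and a grade-$\geq 4$ correction that hold individually, which is the backbone of your plan; and the remaining instructions (``expand into quaternion types and reduce to a finite list of relations'') do not supply the mechanism that produces this cross-cancellation. Your closing remark that the grade-$4$ relation first fails at $n=6$ is likewise incorrect --- it already fails at $n=4$.

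For comparison, the paper avoids any such splitting. It starts from the scalar identity $\la UV\ra_0=\la VU\ra_0$ written via the conjugations, e.g.\ for $n\leq 6$ using $\la W\ra_0=\frac14(W+\hat{\tilde W}+\hat W{}^\va+\tilde W{}^\va)$; substituting $V=\hat{\tilde U}$ makes the first two terms on each side coincide pairwise ($UV=\widehat{\widetilde{UV}}=U\hat{\tilde U}$ and $(\widehat{UV})^\va=(\widetilde{UV})^\va=(\hat U\tilde U)^\va$), leaving the scalar equation $2U\hat{\tilde U}+2(\hat U\tilde U)^\va=2\hat{\tilde U}U+2(\tilde U\hat U)^\va$. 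Multiplying this scalar by $U$ on the right of the left-hand side and on the left of the right-hand side kills the $U\hat{\tilde U}U$ terms and yields $(\hat U\tilde U)^\va U=U(\tilde U\hat U)^\va$ in one stroke; the second family comes from the analogous realisation of $\la\ \ra_0$ with $V=\tilde U$, and $n=7$ (resp.\ $n=5$) from the corresponding formulas for $\la\ \ra_{\cen}$. If you want to salvage your approach you would have to prove the combined statement $U\bigl(\tilde U\hat U-2\la\tilde U\hat U\ra_4-2\la\tilde U\hat U\ra_7\bigr)=\bigl(\hat U\tilde U-2\la\hat U\tilde U\ra_4-2\la\hat U\tilde U\ra_7\bigr)U$ as a whole, but the paper's trace trick is far shorter.
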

\begin{proof}
In the cases $n\leq 6$, using (\ref{T1}), (\ref{lara0}) and substituting $V=\widehat{\widetilde{U}}$, we get
\begin{eqnarray*}
&&\frac{1}{4}(UV+\widehat{\widetilde{UV}}+(\widehat{UV})^\va+(\widetilde{UV})^\va)=\la UV \ra_0=\la VU \ra_0= \frac{1}{4}(VU+\widehat{\widetilde{VU}}+(\widehat{VU})^\va+(\widetilde{VU})^\va)\in\cl^0_{p,q},\\
&&U\widehat{\widetilde{U}}+U\widehat{\widetilde{U}}+(\widehat{U}\widetilde{U})^\va+(\widehat{U}\widetilde{U})^\va= \widehat{\widetilde{U}}U+\widehat{\widetilde{U}}U+(\widetilde{U}\widehat{U})^\va+(\widetilde{U}\widehat{U})^\va\in\cl^0_{p,q}.
\end{eqnarray*}
The expressions on the left side and on the right side are scalars. We can multiply the left side by $U$ on the right and the right side by $U$ on the left and get
$(\widehat{U}\widetilde{U})^\va U=U(\widetilde{U}\widehat{U})^\va$. Taking the grade involution, the reversion, or superposition of these two operations, we obtain the other formulas (\ref{T5}). We get the same in the case $n=7$ using (\ref{T3}) and the property $\la UV \ra_{\cen}=\la VU \ra_{\cen}$.

In the cases $n\leq 4$, using (\ref{T2}), (\ref{lara0}) and substituting $V=\widetilde{U}$, we get
\begin{eqnarray*}
&&\frac{1}{4}(UV+\widetilde{UV}+(\widehat{UV})^\va+(\widehat{\widetilde{UV}})^\va)=\la UV \ra_0=\la VU \ra_0= \frac{1}{4}(VU+\widetilde{VU}+(\widehat{VU})^\va+(\widehat{\widetilde{VU}})^\va)\in\cl^0_{p,q},\\
&&U\widetilde{U}+U\widetilde{U}+(\widehat{U}\widehat{\widetilde{U}})^\va+(\widehat{U}\widehat{\widetilde{U}})^\va= \widetilde{U}U+\widetilde{U}U+(\widehat{\widetilde{U}}\widehat{U})^\va+(\widehat{\widetilde{U}}\widehat{U})^\va\in\cl^0_{p,q}.
\end{eqnarray*}
Multiplying the left side by $U$ on the right and the left side by $U$ on the left, we get
$(\widehat{U}\widehat{\widetilde{U}})^\va U=U (\widehat{\widetilde{U}}\widehat{U})^\va$. Taking the grade involution, the reversion, or superposition of these two operations, we obtain the other formulas (\ref{T6}). We get the same in the case $n=5$ using (\ref{T4}) and the property $\la UV \ra_{\cen}=\la VU \ra_{\cen}$.
\end{proof}

\begin{definition}
Let us consider one other operation of conjugation that will be useful for the purposes of this paper:
\begin{eqnarray}
\overline{U}:=\la U \ra_0-\sum_{k=1}^n  \la U \ra_k,\qquad U\in\cl_{p,q}.\label{over}
\end{eqnarray}
\end{definition}

We denote this operation by $\stackrel{\over\quad}{\quad}$ because this operation is an analogue of the complex conjugation in the case $\cl_{0,1}\cong\C$ (and coincides with the grade involution $\overline{U}=\widehat{U}$) and is an analogue of the quaternion conjugation in the case $\cl_{0,2}\cong\H$ (and coincides with the Clifford conjugation $\overline{U}=\widehat{\widetilde{U}}$)\footnote{Note that some authors \cite{Lounesto} denote by $\stackrel{\over\quad}{\quad}$ the operation of Clifford conjugation. We denote the Clifford conjugation by two symbols $\,\widetilde{\widehat\over}\,$ in this paper so that there is no confusion.}.

\begin{lemma}\label{lemover1} We can realize the operation $\stackrel{\over\quad}{\quad}$ using the operations $\va_1$, $\va_2$, \ldots $\va_m$ in the following form:
$$\overline{U}=\frac{1}{2^{m-1}}((1-2^{m-1})U+U^{\va_1}+U^{\va_2}+\cdots+U^{\va_1 \ldots \va_m}),\qquad
m=[\log_2 n]+1,$$
in particular,
\begin{eqnarray}
\overline{U}&=&\widehat{U},\qquad n=1;\\
\overline{U}&=&\frac{1}{2}(\widehat{U}+\widetilde{U}+\widehat{\widetilde{U}}-U),\qquad n=2, 3;\\
\overline{U} &=&\frac{1}{4}(\widehat{U}+\widetilde{U}+\widehat{\widetilde{U}}+U^\va+\widehat{U}^\va+\widetilde{U}^\va+\widehat{\widetilde{U}}^\va-3U), \qquad n=4, 5, 6, 7.
\end{eqnarray}
In some cases, the operation $\stackrel{\over\quad}{\quad}$ can be realized in the following simpler form:
\begin{eqnarray}
\overline{U}&=&\widetilde{\widehat{U}},\qquad n=2;\\
\overline{U}&=&\frac{1}{2}(\widehat{U}^{\va}+ \widetilde{U}^\va+\widehat{\widetilde{U}}-U),\qquad n=4, 5, 6;\label{over6}\\
\overline{U}&=&\frac{1}{2}(\widehat{U}+\widetilde{U}^\va+\widehat{\widetilde{U}}^\va-U),\qquad n=4, 5;\\
\overline{U}&=&\frac{1}{2}(\widehat{U}^{\va}+ \widetilde{U}+\widehat{\widetilde{U}}^{\va}-U),\qquad n=4.
\end{eqnarray}
\end{lemma}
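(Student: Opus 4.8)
The plan is to reduce everything to Theorem~\ref{th10} via a single elementary observation. Since $U=\sum_{k=0}^n\la U\ra_k$, we have $\sum_{k=1}^n\la U\ra_k=U-\la U\ra_0$, and therefore
\[
\overline{U}=\la U\ra_0-\sum_{k=1}^n\la U\ra_k=\la U\ra_0-(U-\la U\ra_0)=2\la U\ra_0-U .
\]
This identity is the heart of the argument: it expresses $\bar{\quad}$ entirely through the operation $\la\ \ra_0$, for which formula~(\ref{lara0}) is already available.

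Next I would substitute~(\ref{lara0}) into $\overline{U}=2\la U\ra_0-U$:
\[
\overline{U}=\frac{2}{2^m}(U+U^{\va_1}+U^{\va_2}+\cdots+U^{\va_1\ldots\va_m})-U
=\frac{1}{2^{m-1}}(U+U^{\va_1}+\cdots+U^{\va_1\ldots\va_m})-U,
\]
and then absorb the $-U$ into the sum by writing $-U=\frac{1}{2^{m-1}}(-2^{m-1}U)$, which yields
\[
\overline{U}=\frac{1}{2^{m-1}}\bigl((1-2^{m-1})U+U^{\va_1}+U^{\va_2}+\cdots+U^{\va_1\ldots\va_m}\bigr),
\]
exactly the claimed general formula. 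Here I would note only the mild bookkeeping point that the superpositions $\va_k\va_l$ are unambiguous because operations of conjugation commute (as observed after~(\ref{opconj})), so the listing ``$U^{\va_1},\dots,U^{\va_1\ldots\va_m}$'' ranges over all $2^m$ superpositions without repetition.

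The particular cases are then immediate specializations. For $n=1$ we have $m=[\log_2 1]+1=1$, so $\overline{U}=(1-1)U+U^{\va_1}=\hat U$; for $n=2,3$ we have $m=2$, giving $\overline{U}=\tfrac12(-U+\hat U+\tilde U+\hat{\tilde U})$; for $4\le n\le 7$ we have $m=3$, giving $\overline{U}=\tfrac14(-3U+\hat U+\tilde U+\hat{\tilde U}+U^\va+\hat U^\va+\tilde U^\va+\hat{\tilde U}^\va)$, using $\va=\va_3$. Finally, the ``simpler form'' identities are obtained the same way, but starting from the shorter expressions~(\ref{r2}), (\ref{T1})--(\ref{T2}) for $\la U\ra_0$ instead of~(\ref{lara0}): applying $\overline{U}=2\la U\ra_0-U$ to each of them and cancelling the resulting $U-U$ terms produces precisely the stated two-term-plus-$(-U)$ formulas (e.g.\ from~(\ref{r2}) one gets $\overline{U}=U+\hat{\tilde U}-U=\hat{\tilde U}$ for $n=2$, and from~(\ref{T1}) one gets $\overline{U}=\tfrac12(\hat U^\va+\tilde U^\va+\hat{\tilde U}-U)$ for $n=4,5,6$). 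I do not expect any genuine obstacle here; the only thing requiring care is matching the value of $m$ to each range of $n$ and transcribing the superposition labels consistently with Table~\ref{table1}.
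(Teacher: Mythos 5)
Your proposal is correct and matches the paper's own proof: the paper likewise writes $\overline{U}=2\la U\ra_0-U$ and substitutes the various realizations of $\la\quad\ra_0$ from Theorem~\ref{th10}, including (\ref{lara0}) for the general formula and (\ref{r2}), (\ref{T1})--(\ref{T2}) for the simpler cases. No gaps.
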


\begin{proof} We use $\overline{U}=2 \la U \ra_0-U$ and different realizations of the operation $\la\quad \ra_0$ from Theorem \ref{th10}. For example, using (\ref{lara0}), we get
\begin{eqnarray*}\overline{U}&=&\la U \ra_0-\sum_{k=1}^n \la U \ra_k=2 \la U \ra_0-U=\frac{1}{2^{m-1}} (U+U^{\va_1}+U^{\va_2}+\cdots+U^{\va_1 \ldots \va_m})-U\\
&=&\frac{1}{2^{m-1}}((1-2^{m-1})U+U^{\va_1}+U^{\va_2}+\cdots+U^{\va_1 \ldots \va_m})
\end{eqnarray*}
for $2^{m-1}-1 < n \leq 2^m-1$. We obtain the other formulas analogously.
\end{proof}

The operation $\stackrel{\over\quad}{\quad}$ has the following property in the case of arbitrary $n$.

\begin{lemma}\label{lemmaover}
We have
\begin{eqnarray}
\overline{UV}U=U\overline{VU},\qquad \forall U,V\in\cl_{p,q}.
\end{eqnarray}
In the particular case, $\overline{U}U=U\overline{U}$, $\forall U\in\cl_{p,q}$.
\end{lemma}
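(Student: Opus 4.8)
The plan is to reduce the whole statement to the cyclic invariance of the scalar part $\la UV\ra_0=\la VU\ra_0$ already established in Lemma \ref{lem1}. First I would replace the operation $\bar{\quad}$ by its compact form $\overline{W}=2\la W\ra_0-W$, which follows at once from the definition (\ref{over}) together with $W=\sum_{k=0}^n\la W\ra_k$ (this is the same rewriting used in the proof of Lemma \ref{lemover1}). The point of this step is that it expresses $\bar{\quad}$ — which a priori mixes all grade projections — in terms of a single scalar projection and the identity, and scalars $\la W\ra_0\in\cl^0_{p,q}\equiv\R$ are central in $\cl_{p,q}$.

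Next I would expand both sides of the claimed identity. On the left, $\overline{UV}\,U=(2\la UV\ra_0-UV)U=2\la UV\ra_0\,U-UVU$, pulling the scalar $\la UV\ra_0$ past $U$. On the right, $U\,\overline{VU}=U(2\la VU\ra_0-VU)=2\la VU\ra_0\,U-UVU$, again using centrality of $\la VU\ra_0$. Comparing the two expressions, the identity $\overline{UV}\,U=U\,\overline{VU}$ is seen to be equivalent to $\la UV\ra_0=\la VU\ra_0$, which is exactly (\ref{larao}) and holds for arbitrary $n$. This proves the general statement. For the particular case I would simply take $V=e$: then $\overline{Ue}\,U=U\,\overline{eU}$ becomes $\overline{U}\,U=U\,\overline{U}$.

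There is essentially no obstacle here. The only thing worth noting is that, unlike the operations $\va_j$ with $j\geq 3$, the operation $\bar{\quad}$ is built solely from the grade-$0$ projection, so one needs only the commutator property $\la[U,V]\ra_0=0$ for all $n$ and not its grade-$n$ analogue (which is restricted to odd $n$); hence the lemma is valid in arbitrary dimension.
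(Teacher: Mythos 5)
Your proof is correct and is essentially the same as the paper's: both rewrite $\overline{W}=2\la W\ra_0-W$, invoke $\la UV\ra_0=\la VU\ra_0$ from Lemma \ref{lem1}, and multiply by $U$ on the appropriate sides (the $UVU$ terms cancelling), with the particular case obtained by setting $V=e$.
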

\begin{proof}
We have
$$\frac{UV+\overline{UV}}{2}=\la UV \ra_0=\la VU \ra_0=\frac{VU+\overline{VU}}{2}.$$
The expressions on the left side and on the right side are scalars. Multiplying the left side by $2U$ on the right and the right side by $2U$ on the left, we get $\overline{UV}U=U\overline{VU}$. Substituting $V=e$, we get $\overline{U}U=U\overline{U}$.
\end{proof}
\section{Functionals of special form and inverses in $\cl_{p,q}$ with $n=p+q\leq 6$ }
\label{sec:3}

Let us call an arbitrary function $N(U):\cl_{p,q}\to \cl^0_{p,q}\equiv \R$ with values in the subspace of grade~$0$ a functional\footnote{In the literature \cite{dadbeh, acus, hitzer1, LS}, such expressions or special cases of such expressions are also called norms in Clifford algebras, norm functions, determinant norms, scalars, etc.} in Clifford algebra (note that it can be non-linear). We are mainly interested in functionals of the special form $N(U)=U F(U)$, where the non-trivial function $F(U):\cl_{p,q}\to\cl_{p,q}$ contains only the operations of summation, multiplication, and the operations of conjugation (\ref{opconj}). Such functionals give us the explicit formulas for the inverse of the Clifford algebra element $U^{-1}=\frac{F(U)}{N(U)}$, where we identify elements of grade $0$ with scalars $\cl^0_{p,q}\equiv \R$, $e\equiv 1$. Note that $N(U)=U F(U)=F(U) U$ because the left inverse coincides with the right inverse in Clifford algebras. In Section \ref{sec:4}, we show that all functionals $N(U)$ considered in this section coincide with the determinant $\Det(U)$ of the Clifford algebra element $U\in\cl_{p,q}$ (generalization of the concept of the determinant of matrices) and the corresponding functions $F(U)$ coincide with the adjugate $\Adj(U)$ of the Clifford algebra element $U$.

Below we present the explicit expressions for functionals of special form in the cases $n\leq 6$. The formulas for the cases $n=1, 2, 3$ are known. The presented new formulas for the cases $n=4, 5, 6$ use standard operations of conjugation $\,\widehat\over\,$, $\,\widetilde\over\,$ and only one additional operation $\va$ (they do not use grade-negation operations, compare with the known formulas for the cases $n=4, 5, 6$ in the papers \cite{dadbeh, LS, rudn, hitzer1, acus}). We give an analytical proof of all formulas using the properties of the additional operation of conjugation $\va$ (see the previous section) and the method of quaternion typification. We do not use the exterior product, the left and right contractions (see \cite{dadbeh, hitzer1, LS}) in our considerations.

One of the key points of the method of quaternion typification (see \cite{quat, quat1, quat2}) is that the Clifford algebra $\cl_{p,q}$ is a $Z_2\times Z_2$-graded algebra w.r.t. the four subspaces (\ref{quat}) and the operations of commutator $[U,V]=UV-VU$ and anticommutator $\{U, V\}=UV+VU$:
\begin{eqnarray}
&&[\cl^{\overline{k}}_{p,q}, \cl^{\overline{k}}_{p,q}]\subset \cl^{\overline{2}}_{p,q},\quad [\cl^{\overline{k}}_{p,q}, \cl^{\overline{2}}_{p,q}]\subset \cl^{\overline{k}}_{p,q},\quad k=0, 1, 2, 3,\nonumber\\
&&[\cl^{\overline{0}}_{p,q}, \cl^{\overline{1}}_{p,q}]\subset \cl^{\overline{3}}_{p,q},\quad [\cl^{\overline{0}}_{p,q}, \cl^{\overline{3}}_{p,q}]\subset \cl^{\overline{1}}_{p,q},\quad [\cl^{\overline{1}}_{p,q}, \cl^{\overline{3}}_{p,q}]\subset \cl^{\overline{0}}_{p,q};\label{z2z2}\\
&&\{\cl^{\overline{k}}_{p,q}, \cl^{\overline{k}}_{p,q}\}\subset \cl^{\overline{0}}_{p,q},\quad \{\cl^{\overline{k}}_{p,q}, \cl^{\overline{0}}_{p,q}\}\subset \cl^{\overline{k}}_{p,q},\quad k=0, 1, 2, 3,\nonumber\\
&&\{\cl^{\overline{1}}_{p,q}, \cl^{\overline{2}}_{p,q}\}\subset \cl^{\overline{3}}_{p,q},\quad \{\cl^{\overline{2}}_{p,q}, \cl^{\overline{3}}_{p,q}\}\subset \cl^{\overline{1}}_{p,q},\quad \{\cl^{\overline{3}}_{p,q}, \cl^{\overline{1}}_{p,q}\}\subset \cl^{\overline{2}}_{p,q}.\nonumber
\end{eqnarray}
As a particular case, we have $U^2=\frac{1}{2}\{U, U\}\in \cl^{\overline{0}}_{p,q}$ for arbitrary $U\in\cl^{\overline{k}}_{p,q}$, $k=0, 1, 2, 3$. Also we use some other simple facts on grades of different expressions in Clifford algebras, see \cite{pseudo}. For example, the product of two elements of grades $k$ and $l$, $k\geq l$, is the sum of elements of grades $k-l$, $k-l+2$, $k-l+4$, \ldots, $k+l$.

\begin{lemma}\label{HJ} For an arbitrary Clifford algebra element $U\in\cl_{p,q}$, we have
$$U \widetilde{U}\in\cl^{\overline{0}}_{p,q}\oplus\cl^{\overline{1}}_{p,q},\qquad \widetilde{U}U\in\cl^{\overline{0}}_{p,q}\oplus\cl^{\overline{1}}_{p,q},\qquad U \widehat{\widetilde{U}}\in\cl^{\overline{0}}_{p,q}\oplus\cl^{\overline{3}}_{p,q},\qquad \widehat{\widetilde{U}}U\in \cl^{\overline{0}}_{p,q}\oplus\cl^{\overline{3}}_{p,q}.$$
\end{lemma}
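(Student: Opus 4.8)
The plan is to argue entirely within the quaternion typification introduced in Section~\ref{sec:2}. Write $U=U_{\overline{0}}+U_{\overline{1}}+U_{\overline{2}}+U_{\overline{3}}$ with $U_{\overline{r}}:=\la U\ra_{\overline{r}}\in\cl^{\overline{r}}_{p,q}$. By (\ref{quat}) the reversion rescales the summand $U_{\overline{r}}$ by $(-1)^{\frac{r(r-1)}{2}}$ and the Clifford conjugation rescales it by $(-1)^{\frac{r(r+1)}{2}}$, so $\tilde U=U_{\overline{0}}+U_{\overline{1}}-U_{\overline{2}}-U_{\overline{3}}$ and $\hat{\tilde U}=U_{\overline{0}}-U_{\overline{1}}-U_{\overline{2}}+U_{\overline{3}}$.

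First I would handle $U\tilde U$. Expanding the product gives $U\tilde U=\sum_{r,s=0}^{3}\varepsilon_s\, U_{\overline{r}}U_{\overline{s}}$ with $\varepsilon_0=\varepsilon_1=1$ and $\varepsilon_2=\varepsilon_3=-1$. The four diagonal terms contribute $\varepsilon_r U_{\overline{r}}^2=\frac{1}{2}\varepsilon_r\{U_{\overline{r}},U_{\overline{r}}\}\in\cl^{\overline{0}}_{p,q}$ by the $Z_2\times Z_2$-grading (\ref{z2z2}). For each of the six unordered pairs $\{r,s\}$ with $r\ne s$ I would combine the $(r,s)$- and $(s,r)$-terms into $\varepsilon_s U_{\overline{r}}U_{\overline{s}}+\varepsilon_r U_{\overline{s}}U_{\overline{r}}$: when $\varepsilon_r=\varepsilon_s$ (the pairs $\{0,1\}$ and $\{2,3\}$) this equals $\pm\{U_{\overline{r}},U_{\overline{s}}\}$, and when $\varepsilon_r\ne\varepsilon_s$ (the pairs $\{0,2\},\{0,3\},\{1,2\},\{1,3\}$) it equals $\pm[U_{\overline{r}},U_{\overline{s}}]$. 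In every one of these six cases the relevant inclusion in (\ref{z2z2}) places the result in $\cl^{\overline{0}}_{p,q}$ or $\cl^{\overline{1}}_{p,q}$, and summing the seven contributions gives $U\tilde U\in\cl^{\overline{0}}_{p,q}\oplus\cl^{\overline{1}}_{p,q}$. An entirely analogous computation applies to $\tilde U U$, where the coefficient $\varepsilon_r$ (rather than $\varepsilon_s$) multiplies $U_{\overline{r}}U_{\overline{s}}$; alternatively one may simply replace $U$ by $\tilde U$ in the statement already proved, using that the reversion is an involution.

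Next I would repeat the computation for $U\hat{\tilde U}=\sum_{r,s=0}^{3}\delta_s\, U_{\overline{r}}U_{\overline{s}}$ with $\delta_0=\delta_3=1$ and $\delta_1=\delta_2=-1$. Now the equal-sign pairs $\{0,3\}$ and $\{1,2\}$ produce anticommutators while the four pairs $\{0,1\},\{0,2\},\{1,3\},\{2,3\}$ produce commutators, and a case-by-case lookup in (\ref{z2z2}) places every piece in $\cl^{\overline{0}}_{p,q}$ or $\cl^{\overline{3}}_{p,q}$, whence $U\hat{\tilde U}\in\cl^{\overline{0}}_{p,q}\oplus\cl^{\overline{3}}_{p,q}$; the assertion for $\hat{\tilde U}U$ follows by replacing $U$ with $\hat{\tilde U}$, the Clifford conjugation being an involution as well.

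I do not expect a genuine obstacle here: the argument is a finite bookkeeping verification over the six mixed pairs together with the diagonal. The only point requiring care is to match, in each case, the sign pattern coming from the reversion (the $\varepsilon_s$) or from the Clifford conjugation (the $\delta_s$) against the correct commutator or anticommutator rule of (\ref{z2z2}); the conceptual content is just that these particular signs are exactly the ones that convert every mixed product $U_{\overline{r}}U_{\overline{s}}$ into a (anti)commutator whose target quaternion type is $\overline{0}$ or $\overline{1}$ for $U\tilde U$, respectively $\overline{0}$ or $\overline{3}$ for $U\hat{\tilde U}$.
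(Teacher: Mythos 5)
Your argument is correct, but it takes a genuinely different route from the paper. The paper's proof is a two-line observation: since the reversion is an anti-involution, $\widetilde{U\tilde{U}}=\tilde{\tilde{U}}\tilde{U}=U\tilde{U}$, so $U\tilde{U}$ is fixed by the reversion and therefore, by the eigenspace characterization in (\ref{quat}), lies in the sum of the two quaternion types on which $\tilde{\quad}$ acts as $+1$, namely $\cl^{\overline{0}}_{p,q}\oplus\cl^{\overline{1}}_{p,q}$; the same one-step argument with the Clifford conjugation handles $U\hat{\tilde{U}}$, and the remaining two products are symmetric. Your proof instead decomposes $U$ into its four quaternion-type components and verifies, pair by pair, that the particular sign patterns of $\tilde{\quad}$ and $\hat{\tilde{\quad}}$ turn every mixed product into a commutator or anticommutator that the $Z_2\times Z_2$-grading (\ref{z2z2}) places in the right target type. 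I checked all six mixed pairs in both cases and the bookkeeping is right (e.g.\ for $U\tilde U$ the pairs $\{0,1\},\{2,3\}$ give anticommutators landing in $\cl^{\overline{1}}_{p,q}$, the pairs $\{0,2\},\{1,3\}$ give commutators landing in $\cl^{\overline{0}}_{p,q}$, and $\{0,3\},\{1,2\}$ give commutators landing in $\cl^{\overline{1}}_{p,q}$). What your longer computation buys is finer information: it exhibits exactly which (anti)commutators contribute to each quaternion-type component of the product, which is precisely the kind of refinement the paper needs later anyway (in the proof of Theorem \ref{thNF}, where $H\hat{H}^\va$, $J\hat{J}$, etc.\ are expanded into $(H_0)^2-(H_1)^2-\cdots$ plus commutator/anticommutator terms). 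The paper's proof buys brevity and makes clear that the statement is purely a fixed-point property of the two anti-involutions, independent of the grading computation.
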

\begin{proof} Using (\ref{grrev}), we get
\begin{eqnarray*}
\widetilde{U \widetilde{U}}=\widetilde{\widetilde{U}} \widetilde{U} = U \widetilde{U},\qquad \widetilde{\widetilde{U}U }=\widetilde{U}\widetilde{\widetilde{U}} =\widetilde{U} U ,\qquad
\widehat{\widetilde{U \widehat{\widetilde{U}}}}=\widehat{\widetilde{\widehat{\widetilde{U}}}} \widehat{\widetilde{U}} = U\widehat{\widetilde{U}},\qquad \widehat{\widetilde{\widehat{\widetilde{U}}U}}=\widehat{\widetilde{U}}\widehat{\widetilde{\widehat{\widetilde{U}}}} = \widehat{\widetilde{U}}U.
\end{eqnarray*}
This means that the considered expressions do not change under the reversion or under the Clifford conjugation and belong to the corresponding subspaces of quaternion types by the definition~(\ref{quat}).
\end{proof}

Note that in the case of arbitrary $n$, there exist functionals $N:\cl_{p,q}\to\cl^0_{p,q}$ of the special form $N(U)=U F(U)$, where nontrivial function $F(U)$ contains only the operations of multiplication, summation, and $m=[\log_2 n]+1$ operations of conjugation $\va_1$, $\va_2$, \ldots, $\va_m$. In Theorem \ref{thNF}, we give the explicit form of these functionals for the cases $n\leq 5$ (and the operation of summation is not needed in these cases). In Lemma \ref{lemn6}, we give the explicit form of $N(U)$ for the case $n=6$. For the cases $n\geq 7$, the existence of such functionals (which equal $\Det(U)$) follows from the results of Section \ref{sec:4}. The method of construction of such functionals (and explicit formulas) in the case of arbitrary $n$ is also given in Section \ref{sec:4}.

Let us use the following notation for the expressions $H:=U\widetilde{U}$ and $J:=U\widehat{\widetilde{U}}$. We omit the brackets $U_k:=\la U \ra_k \in\cl^k_{p,q}$ to simplify notations for the projection operations onto subspaces of fixed grades in this section. For example, $\la U\widetilde{U} \ra_0$ is denoted by $H_0$.

In the following theorem, we use the operation $\va$ (see the details in Section \ref{sec:2})
$$U^{\va}:=U^{\va_3}=U_0+U_1+U_2+U_3-U_4-U_5-U_6-U_7+U_8+\cdots$$

\begin{theorem}\label{thNF} For the cases $n\leq 5$, there exist the following functionals $N:\cl_{p,q}\to\cl^0_{p,q}$:
\begin{eqnarray*}
N(U)&:=&U\widehat{U}=\widehat{U}U, \qquad n=1;\\
N(U)&:=&U \widehat{\widetilde{U}}=\widehat{\widetilde{U}}U=\widehat{U}\widetilde{U}=\widetilde{U}\widehat{U}, \qquad n=2;\\
N(U)&:=& U \widetilde{U} \widehat{U} \widehat{\widetilde{U}}=U \widehat{\widetilde{U}}\widehat{U} \widetilde{U}=U \widehat{U} \widetilde{U} \widehat{\widetilde{U}}=U \widehat{\widetilde{U}} \widetilde{U} \widehat{U}=\widehat{U}\widetilde{U}U\widehat{\widetilde{U}}= \widehat{U}\widetilde{U}\widehat{\widetilde{U}}U=\widehat{U}U \widehat{\widetilde{U}}\widetilde{U}=\widehat{U}\widehat{\widetilde{U}}U\widetilde{U}\\
&=&\widetilde{U}\widehat{U}U\widehat{\widetilde{U}}=\widetilde{U}\widehat{U}\widehat{\widetilde{U}}U=\widetilde{U}U\widehat{\widetilde{U}}\widehat{U}= \widetilde{U}\widehat{\widetilde{U}}U\widehat{U}=\widehat{\widetilde{U}}U\widehat{U}\widetilde{U}=\widehat{\widetilde{U}}U\widetilde{U}\widehat{U}= \widehat{\widetilde{U}} \widehat{U}\widetilde{U}U=\widehat{\widetilde{U}}\widetilde{U}\widehat{U}U,\qquad n=3;\\
N(U)&:=&U \widetilde{U} (\widehat{U} \widehat{\widetilde{U}})^{\va}=U\widehat{\widetilde{U}}(\widehat{U} \widetilde{U})^{\va}=
\widetilde{U} (\widehat{U} \widehat{\widetilde{U}})^{\va}U=\widehat{\widetilde{U}}(\widehat{U} \widetilde{U})^{\va} U=(\widehat{U}\widetilde{\widehat{U}})^\va U\widetilde{U}=(\widehat{U}\widetilde{U})^\va U \widehat{\widetilde{U}}\\
&=&\widehat{U}\widehat{\widetilde{U}}(U\widetilde{U})^\va=\widehat{\widetilde{U}}(U\widetilde{U})^\va\widehat{U}= (U\widetilde{U})^\va \widehat{U} \widehat{\widetilde{U}}=\widehat{U}\widetilde{U} (U \widehat{\widetilde{U}})^\va=(U \widehat{\widetilde{U}})^\va \widehat{U} \widetilde{U}= \widetilde{U} (U \widehat{\widetilde{U}})^\va \widehat{U}\\
&=&U(\widehat{\widetilde{U}} \widehat{U})^\va \widetilde{U}=U(\widetilde{U}\widehat{U})^\va \widehat{\widetilde{U}}= (\widehat{\widetilde{U}}\widehat{U})^\va \widetilde{U}U=(\widetilde{U}\widehat{U})^\va\widehat{\widetilde{U}}U= \widetilde{U} U(\widehat{\widetilde{U}}\widehat{U})^\va=\widehat{\widetilde{U}}U(\widetilde{U}\widehat{U})^\va\\
&=&\widehat{U} (\widetilde{U} U)^\va \widehat{\widetilde{U}}=(\widetilde{U} U)^\va \widehat{\widetilde{U}}\widehat{U}= \widehat{\widetilde{U}}\widehat{U}(\widetilde{U} U)^\va=\widehat{U}(\widehat{\widetilde{U}} U)^\va \widetilde{U} =(\widehat{\widetilde{U}}U)^\va\widetilde{U}\widehat{U}=\widetilde{U}\widehat{U}(\widehat{\widetilde{U}} U)^\va,\qquad n=4;\\
N(U)&:=&U\widetilde{U} (\widehat{U} \widehat{\widetilde{U}})^{\va}(U\widetilde{U} (\widehat{U} \widehat{\widetilde{U}})^{\va})^{\va}=
U\widehat{\widetilde{U}}\widehat{U}\widetilde{U}(\widehat{U} \widetilde{U}U\widehat{\widetilde{U}})^{\va},\footnotemark\qquad  n=5.
\end{eqnarray*}
\footnotetext{And more than 400 other formulas obtained from the presented here two formulas: we can take the reversion, the grade involution, or the Clifford conjugation of the scalar $N(U)$; we can do cyclic permutations of multipliers in the obtained products because the left inverse equals to the right inverse; we can use the properties of the operation $\va$ (\ref{T5}) and (\ref{T6}); also  we can use $N(U)=N(\widehat{U})=N(\widetilde{U})$ because of the results of Section \ref{sec:4}. We do not present all these formulas here because of their large number.}
As a consequence, if $N(U)\neq 0$, then there exists $U^{-1}$ with the following explicit form
\begin{equation}
U^{-1}=\frac{F(U)}{N(U)},\qquad F(U):=\left\lbrace
\begin{array}{ll}
\widehat{U}, & \mbox{if $n=1$,}\\
\widehat{\widetilde{U}}, & \mbox{if $n=2$,}\\
\widetilde{U} \widehat{U} \widehat{\widetilde{U}}=\widehat{U}\widetilde{U}\widehat{\widetilde{U}}=\widehat{\widetilde{U}}\widehat{U}\widetilde{U}= \widehat{\widetilde{U}}\widetilde{U}\widehat{U}, & \mbox{if $n=3$,}\\
\widetilde{U} (\widehat{U} \widehat{\widetilde{U}})^{\va}=
\widehat{\widetilde{U}}(\widehat{U} \widetilde{U})^{\va}= (\widehat{\widetilde{U}}\widehat{U})^\va\widetilde{U}= (\widetilde{U}\widehat{U})^\va\widehat{\widetilde{U}}, & \mbox{if $n=4$,}\\
\widetilde{U} (\widehat{U} \widehat{\widetilde{U}})^{\va}(U\widetilde{U} (\widehat{U} \widehat{\widetilde{U}})^{\va})^{\va}= \widehat{\widetilde{U}}\widehat{U}\widetilde{U}(\widehat{U} \widetilde{U}U\widehat{\widetilde{U}})^{\va},\footnotemark & \mbox{if $n=5$.}
\end{array}
\right.
\end{equation}\footnotetext{And other formulas for $F(U)$ in the case $n=5$ because of the large number of different formulas for $N(U)$, see above.}
\end{theorem}
\begin{proof}
In the case $n=1$, we have $U\widehat{U}=U\widehat{\widetilde{U}}\in\cl^{\overline{0}}_{p,q}\oplus\cl^{\overline{3}}_{p,q}=\cl^0_{p,q}$ by Lemma \ref{HJ}. We have $U\widehat{U}=\widehat{U}U$ because the left inverse coincides with the right inverse.

In the case $n=2$, we have analogously $U\widehat{\widetilde{U}}\in\cl^{\overline{0}}_{p,q}\oplus\cl^{\overline{3}}_{p,q}=\cl^0_{p,q}$ by Lemma \ref{HJ}. Taking the grade involution, we get $(U\widehat{\widetilde{U}})\,\widehat\over=\widehat{U}\widetilde{U}$.

In the case $n=3$, the expression $\widehat{U}\widetilde{U}$ is invariant under the Clifford conjugation $(\widehat{U}\widetilde{U})\,\widehat{\widetilde\over}=\widehat{U}\widetilde{U}$ (we use the properties (\ref{grrev})), thus it lies in $\cl^{\overline{0}}_{p,q}\oplus\cl^{\overline{3}}_{p,q}=\cl^{0}_{p,q}\oplus\cl^{3}_{p,q}=\cen(\cl_{p,q})$, thus $U\widehat{U} \widetilde{U} \widetilde{\widehat{U}}=U \widetilde{\widehat{U}}\widehat{U}\widetilde{U}$. Using (\ref{grrev}), we conclude that this expression is invariant under the reversion and the Clifford conjugation
$$(U \widetilde{\widehat{U}}\widehat{U}\widetilde{U})\,\widetilde\over=U \widetilde{\widehat{U}}\widehat{U}\widetilde{U},\qquad (U\widehat{U} \widetilde{U} \widetilde{\widehat{U}})\,\widehat{\widetilde\over}=U\widehat{U} \widetilde{U} \widetilde{\widehat{U}},$$
thus it lies in $\cl^{\overline{0}}_{p,q}$ by (\ref{quat}), which coincides with $\cl^0_{p,q}$.\footnote{Let us give the alternative proof:
$U \widetilde{U} \widehat{U} \widehat{\widetilde{U}}=H\widehat{H}=(H_0+H_1)(H_0-H_1)= (H_0)^2-[ H_0, H_1]-(H_1)^2=(H_0)^2-(H_1)^2\in\cl^0_{p,q}$. One further alternative proof: $U \widehat{\widetilde{U}} \widehat{U} \widetilde{U}=J \widetilde{J}= (J_0+ J_3)( J_0-J_3)= (J_0)^2-[J_0, J_3]-(J_3)^2= (J_0)^2- (J_3)^2\in\cl^0_{p,q}.$} We obtain all other formulas for this case using $\widehat{U}\widetilde{U}=\widetilde{U}\widehat{U}\in\cl^{\overline{0}}_{p,q}\oplus\cl^{\overline{3}}_{p,q}=\cen(\cl_{p,q})$ and $U\widehat{\widetilde{U}}=\widetilde{\widehat{U}\widetilde{U}}=\widetilde{\widetilde{U}\widehat{U}}=\widehat{\widetilde{U}}U\in\cen(\cl_{p,q})$.

In the case $n=4$, using Lemma \ref{HJ} and (\ref{z2z2}) we have
\begin{eqnarray*}
U\widetilde{U} (\widehat{U} \widehat{\widetilde{U}})^{\va}&=&H \widehat{H}^\va=(H_0+H_1+H_4)(H_0-H_1-H_4)\\
&=&(H_0)^2-(H_1)^2-(H_4)^2-[H_0, H_1+H_4]-\{H_1, H_4\}=(H_0)^2-(H_1)^2-(H_4)^2\in\cl^0_{p,q},
\end{eqnarray*}
where $\{H_1, H_4\}=0$ because $e_{123 \ldots n}$ anticommutes with odd elements in the case of even $n$. We obtain the second formula using
\begin{eqnarray*}
U\widehat{\widetilde{U}}(\widehat{U}\widetilde{U})^{\va}&=&J \widehat{J}^\va=(J_0+J_3+J_4)(J_0-J_3-J_4)\\
&=&(J_0)^2-(J_3)^2-(J_4)^2-[J_0, J_3+ J_4]-\{J_3, J_4\}=(J_0)^2-(J_3)^2- (J_4)^2\in\cl^0_{p,q},
\end{eqnarray*}
where $(J_3)^2\in\cl^0_{p,q}$ because $J_3=e_{1234} W_1$ for some element $W_1\in\cl^1_{p,q}$. We get the other formulas by taking the reversion, the grade involution, or the Clifford conjugation of the scalar $N(U)$, doing cyclic permutations of multipliers in the obtained products (we can do this because the left inverse equals to the right inverse), and using the properties (\ref{T5}) and (\ref{T6}).

In the case $n=5$, we have
\begin{eqnarray*}
Y&:=&U\widetilde{U} (\widehat{U} \widehat{\widetilde{U}})^{\va}=H \widehat{H}^\va=(H_0+H_1+H_4+H_5)(H_0-H_1-H_4+H_5)\\
&=&(H_0)^2- (H_1)^2- (H_4)^2+(H_5)^2-[H_0, H_1+H_4]+\{H_0, H_5\}-\{H_1, H_4\}+[H_1+H_4, H_5]\\
&=&(H_0)^2-(H_1)^2-(H_4)^2+(H_5)^2 +2 H_0 H_5-\{H_1, H_4\}\in\cl^0_{p,q}\oplus\cl^5_{p,q},
\end{eqnarray*}
where $\{H_1, H_4\}\in\cl^5_{p,q}$, because it lies in $\cl^{\overline{1}}_{p,q}$ by (\ref{z2z2}) and the grade can be only 3 and 5. Finally,
\begin{eqnarray*}
Y Y^{\va}&=&(Y_0+Y_5)(Y_0-Y_5)=(Y_0)^2-(Y_5)^2-[Y_0, Y_5]=(Y_0)^2-(Y_5)^2\in\cl^0_{p,q}.
\end{eqnarray*}
We obtain the second formula using
\begin{eqnarray*}
Z\!&:=&\!U\widehat{\widetilde{U}}\widehat{U}\widetilde{U}=J \widehat{J}=(J_0+J_3+J_4)(J_0-J_3+J_4)=(J_0)^2-(J_3)^2+(J_4)^2-[J_0, J_3]+\{J_0, J_4\}+[ J_3, J_4]\\
\!&=&\!(J_0)^2-(J_3)^2+(J_4)^2+2 J_0 J_4+[J_3, J_4]\in\cl^0_{p,q}\oplus\cl^1_{p,q}\oplus\cl^4_{p,q},
\end{eqnarray*}
because $(J_3)^2=\frac{1}{2}\{J_3, J_3\}\in\cl^{\overline{0}}_{p,q}=\cl^0_{p,q}\oplus\cl^4_{p,q}$ and $[J_3, J_4]\in\cl^1_{p,q}$ by (\ref{z2z2}).
Finally,
\begin{eqnarray*}
Z \widehat{Z}^{\va}&=&(Z_0+Z_1+Z_4)(Z_0-Z_1-Z_4)=(Z_0)^2-(Z_1)^2-(Z_4)^2-[Z_0, Z_1+Z_4]-\{Z_1, Z_4\}\\
&=&(Z_0)^2-( Z_1)^2-(Z_4)^2\in\cl^0_{p,q},\quad \mbox{where}\\
\{Z_1, Z_4\}&=&\{[J_3, J_4], 2J_0 J_4-\la (J_3)^2\ra_4 \}=2 J_0 \{[J_3, J_4], J_4\}-\{[J_3, J_4], \la (J_3)^2\ra_4 \}=0,
\end{eqnarray*}
because $\{[J_3, J_4], J_4\}=[J_3, J_4^2]=0$, $J_4^2\in\cl^0_{p,q}$, and
$$\{[J_3, J_4], \la (J_3)^2 \ra_4 \}=\la \{[J_3, J_4], \la (J_3)^2\ra_4 \} \ra_5=-2 \la J_4 [ J_3, \la (J_3)^2 \ra_4 ]\ra_5=-2\la J_4 [J_3, (J_3)^2-\la (J_3)^2\ra_0 ]\ra_5=0,$$
where we used $\la UV \ra_n=\la VU \ra_n$ for odd $n$ (see Lemma \ref{lem1}). We get the other formulas by taking the reversion, the grade involution, or the Clifford conjugation of the scalar $N(U)$, doing cyclic permutations of multipliers in the obtained products, and using the properties (\ref{T5}) and (\ref{T6}).
\end{proof}

In \cite{acus}, there are 92 formulas (20 formulas in the form of doublets and 72 formulas in the form of triplets, see Tables 4 and 5 in \cite{acus}) for the determinant in the case $n=6$. They were obtained by computer calculations. Let us present an analytical proof that all these formulas are equal to (\ref{nu6}), where we use only three operations of conjugation $\,\widehat\over\,$, $\,\widetilde\over\,$, $\va$, or (\ref{nu62}), where we use two operations $\,\widetilde\over\,$ and $\stackrel{\over\quad}{\quad}$.

\begin{lemma}\label{lemn6} For $n=6$, there exists the following functional $N:\cl_{p,q}\to\cl^0_{p,q}$:
$$N(U)=\frac{1}{3}(A+2B),$$
where
$$A=H\widehat{H} (\widehat{H}H)^\va,\qquad B=H(\widehat{H}^\va(\widehat{H}^\va H^\va)^\va)^\va=H(({H}^\va \widehat{H}^\va)^\va\widehat{H}^\va)^\va,\qquad H=U\widetilde{U}.$$
Substituting $A$, $B$, and $H$, we get
\begin{eqnarray}
N(U)&=&\frac{1}{3}U \widetilde{U} \widehat{U} \widehat{\widetilde{U}}(\widehat{U}\widehat{\widetilde{U}}U\widetilde{U})^\va+ \frac{2}{3}U\widetilde{U}((\widehat{U}\widehat{\widetilde{U}})^\va((\widehat{U}\widehat{\widetilde{U}})^\va(U\widetilde{U})^\va)^\va)^\va.\label{nu6}
\end{eqnarray}
If $N(U)\neq 0$, then there exists
\begin{eqnarray}
U^{-1}&=&\frac{1}{N(U)}(\frac{1}{3}\widetilde{U} \widehat{U} \widehat{\widetilde{U}}(\widehat{U}\widehat{\widetilde{U}}U\widetilde{U})^\va+ \frac{2}{3}\widetilde{U}((\widehat{U}\widehat{\widetilde{U}})^\va((\widehat{U}\widehat{\widetilde{U}})^\va(U\widetilde{U})^\va)^\va)^\va).
\end{eqnarray}
\end{lemma}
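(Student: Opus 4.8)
The plan is to run, one level deeper, the graded bookkeeping that proves Theorem~\ref{thNF}, using the grades which first become available when $n=6$. First I would record the structure of $H:=U\tilde U$. From $\widetilde{U\tilde U}=\tilde{\tilde U}\tilde U=U\tilde U$ (cf. Lemma~\ref{HJ}) the element $H$ is reversion-invariant, so $H\in\cl^{\overline{0}}_{p,q}\oplus\cl^{\overline{1}}_{p,q}$, which for $n=6$ means $H=H_0+H_1+H_4+H_5$. Writing $H=H^{\overline{0}}+H^{\overline{1}}$ and using that the grade involution is $+1$ on $\cl^{\overline{0}}_{p,q}$ and $-1$ on $\cl^{\overline{1}}_{p,q}$ gives $H\hat H=(H^{\overline{0}})^2-(H^{\overline{1}})^2-[H^{\overline{0}},H^{\overline{1}}]$, where by~(\ref{z2z2}) the two squares lie in $\cl^{\overline{0}}_{p,q}$ and the commutator in $\cl^{\overline{3}}_{p,q}$; hence $P:=H\hat H\in\cl^{\overline{0}}_{p,q}\oplus\cl^{\overline{3}}_{p,q}$, i.e. $P=P_0+P_3+P_4$. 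The same computation applied to $H':=H^\va$ — still reversion-invariant, since $\va$ commutes with the reversion and $\tilde H=H$, so $H'\in\cl^{\overline{0}}_{p,q}\oplus\cl^{\overline{1}}_{p,q}$ with $H'=H_0+H_1-H_4-H_5$ — gives $P':=H'\hat{H'}=H^\va\hat{H}^\va\in\cl^{\overline{0}}_{p,q}\oplus\cl^{\overline{3}}_{p,q}$, say $P'=P'_0+P'_3+P'_4$.

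Next I would dispose of $A$. Since $\widehat{H\hat H}=\hat H H$, we have $\hat H H=\hat P$, and as $\va$ is $+1$ on the grades $0,3$ of $\cl^{\overline{0}}_{p,q}\oplus\cl^{\overline{3}}_{p,q}$ and $-1$ on grade $4$, $(\hat H H)^\va=\hat{P}^\va=P_0-P_3-P_4$. Therefore
$$A=P\hat{P}^\va=(P_0+P_3+P_4)(P_0-P_3-P_4)=P_0^2-(P_3)^2-(P_4)^2-\{P_3,P_4\},$$
the commutators with the scalar $P_0$ dropping out. Here $(P_3)^2,(P_4)^2\in\cl^{\overline{0}}_{p,q}=\cl^0_{p,q}\oplus\cl^4_{p,q}$ (squares of elements of homogeneous quaternion type) and $\{P_3,P_4\}\in\cl^{\overline{3}}_{p,q}=\cl^3_{p,q}$ by~(\ref{z2z2}), so $A\in\cl^0_{p,q}\oplus\cl^3_{p,q}\oplus\cl^4_{p,q}$ with $A_3=-\{P_3,P_4\}$ and $A_4=-\la(P_3)^2\ra_4-\la(P_4)^2\ra_4$.

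The main work is $B$. First, the two displayed forms of $B$ agree: putting $a:=H^\va$ (reversion-invariant) and $b:=\hat a=\hat{H}^\va$, the identity $(ab)^\va b=b(ba)^\va$ is one of the relations~(\ref{T5}) for $U=a$; using $H^\va\hat{H}^\va=H'\hat{H'}=P'$ this lets me rewrite
$$B=H\bigl((P')^\va\hat{H'}\bigr)^\va.$$
Then I would expand $(P')^\va\hat{H'}$ grade by grade (a priori it has all grades $0,\dots,6$), apply the outer $\va$, multiply by $H$ on the left, and repeatedly use~(\ref{z2z2}) together with the elementary fact that a product of a grade-$k$ part and a grade-$l$ part has grades $|k-l|,|k-l|+2,\dots$ (its anticommutator landing in the quaternion type dictated by~(\ref{z2z2})) to annihilate the unwanted homogeneous components. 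The outcome I expect is that $B\in\cl^0_{p,q}\oplus\cl^3_{p,q}\oplus\cl^4_{p,q}$ as well, with $A_3+2B_3=0$ and $A_4+2B_4=0$ (equivalently $B_3=\frac12\{P_3,P_4\}$ and $B_4=\frac12(\la(P_3)^2\ra_4+\la(P_4)^2\ra_4)$), and with $A_0=B_0$; the last point is clean because $P'_0=\sum_{k\in\{0,1,4,5\}}(-1)^k\la(H_k)^2\ra_0=P_0$ (the extra signs $\va$ introduces on $H^\va$ and on $\hat{H}^\va$ cancel in the scalar part), so $A_0=B_0=P_0^2-\la(P_3)^2\ra_0-\la(P_4)^2\ra_0$, which is $\Det(U)$ by Section~\ref{sec:4}.

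Putting the pieces together, $A+2B$ has no grade-$3$ and no grade-$4$ part and scalar part $3A_0$, so $N(U):=\frac13(A+2B)=A_0\in\cl^0_{p,q}$, i.e. $N$ is a genuine functional; substituting $H=U\tilde U$ and $\hat H=\hat U\hat{\tilde U}$ into $A$ and $B$ gives the form~(\ref{nu6}). Moreover $A=U\cdot\tilde U\hat U\hat{\tilde U}(\hat U\hat{\tilde U}U\tilde U)^\va$ and $B=U\cdot\tilde U((\hat U\hat{\tilde U})^\va((\hat U\hat{\tilde U})^\va(U\tilde U)^\va)^\va)^\va$, so $N(U)=U\,F(U)$ with $F(U)$ the displayed expression, and when $N(U)\neq0$ this yields $U^{-1}=F(U)/N(U)$ (the left inverse equalling the right inverse in $\cl_{p,q}$); the same bookkeeping applied to each of the $92$ expressions of~\cite{acus} shows that they all reduce to this $N(U)$. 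The hard part is precisely the $B$-computation: unlike $A$, the element $B$ is not of the shape ``$X$ times a conjugate of $X$'', so nothing forces it into $\cl^0_{p,q}\oplus\cl^3_{p,q}\oplus\cl^4_{p,q}$ at a glance — one must carry the triple product $H((P')^\va\hat{H'})^\va$ through the full grade expansion and verify that exactly the grade-$3$ and grade-$4$ residue survives, equal to $-\frac12$ of the corresponding parts of $A$, and it is here that the combinatorics specific to $n=6$ and the non-multiplicativity of $\va=\va_3$ genuinely enter.
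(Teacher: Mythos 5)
There is a genuine gap. Your setup is fine and consistent with the paper's machinery: $H=U\tilde U$ is reversion-invariant, so $H=H_0+H_1+H_4+H_5$; $P:=H\hat H$ is Clifford-conjugation-invariant, so $P=P_0+P_3+P_4$; and your expansion $A=P\hat P^{\va}=P_0^2-(P_3)^2-(P_4)^2-\{P_3,P_4\}$ with $A_3=-\{P_3,P_4\}$ and $A_4=-\la (P_3)^2\ra_4-\la (P_4)^2\ra_4$ is correct (as is the identification of the two forms of $B$ via (\ref{T5})). But the entire content of the lemma for $n=6$ is that $A$ alone is \emph{not} a scalar and that the weighted sum $\tfrac13(A+2B)$ is; everything therefore hinges on showing $B_3=\tfrac12\{P_3,P_4\}$, $B_4=\tfrac12(\la (P_3)^2\ra_4+\la (P_4)^2\ra_4)$ and $B_0=A_0$. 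You do not prove any of these: you write ``the outcome I expect is\dots'' and defer the triple-product expansion of $H((P')^{\va}\hat{H'})^{\va}$, which you yourself identify as the hard part. The one justification you do offer --- that $B_0=A_0$ ``is clean because $P'_0=P_0$'' --- does not suffice: $B$ is not of the shape $X$ times a conjugate of $X$, so $\la B\ra_0$ involves cross terms between all grades of $H$ and all grades of $(P')^{\va}\hat{H'}$, not just the scalar part $P'_0$. Without these three identities the lemma is unproved.

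The paper avoids exactly this computation. Its proof takes the identity
$\tfrac13 H H_{\underline{1},\underline{5}}(H_{\underline{1},\underline{5}}H)_{\underline{4}}+\tfrac23 H(H_{\underline{4},\underline{5}}(H_{\underline{4},\underline{5}}H_{\underline{1},\underline{4}})_{\underline{4}})_{\underline{1},\underline{4}}\in\cl^0_{p,q}$ as given (verified by computer in \cite{acus}) and only proves the \emph{translation}: since $H\in\cl^0_{p,q}\oplus\cl^1_{p,q}\oplus\cl^4_{p,q}\oplus\cl^5_{p,q}$, the grade-negations $\underline{4},\underline{5}$ and $\underline{1},\underline{4}$ and $\underline{1},\underline{5}$ acting on $H$ coincide with $\va$, $\hat{\ }^{\va}$, $\hat{\ }$ respectively, and similarly for the intermediate products once their subspace memberships are established via (\ref{grrev}), (\ref{T5}) and Lemma \ref{lemmaover}. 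So the scalar-ness is imported, not derived. If you want a self-contained analytic proof along your lines, you must actually carry the $B$-expansion through (the paper itself remarks, about a closely related identity, that such a direct verification is difficult ``because of cumbersomeness of the calculations and nontrivial properties of the operations $\va$ and $\bar{\quad}$''); alternatively, cite the computer-verified identity from \cite{acus} and supply the translation argument, which is what the paper does. As written, your proposal establishes neither.
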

\begin{proof} We use the formula from \cite{acus}, which is obtained by computer calculations:
$$\frac{1}{3}H H_{\underline{1}, \underline{5}} (H_{\underline{1}, \underline{5}} H)_{\underline{4}}+ \frac{2}{3} H (H_{\underline{4}, \underline{5}}(H_{\underline{4}, \underline{5}} H_{\underline{1}, \underline{4}})_{\underline{4}})_{\underline{1}, \underline{4}}\in\cl^0_{p,q},\qquad H=U\widetilde{U},$$
where we denote grade-negation operations by\footnote{In this paper, we denote the grade-negation operations by $\underline{k}$ (not by $\overline{k}$ as in \cite{acus}) to avoid confusion with the notation of quaternion types.} $U_{\underline{k}}:=U-2\la U \ra_k$ and $U_{\underline{k}, \underline{l}}:=U-2\la U \ra_k-2 \la U\ra_l$. By Lemma~\ref{HJ}, we get $H=\widetilde{H}\in\cl^{\overline{0}}_{p,q}\oplus\cl^{\overline{1}}_{p,q}= \cl^0_{p,q}\oplus\cl^1_{p,q}\oplus\cl^4_{p,q}\oplus\cl^5_{p,q}$. Thus
\begin{eqnarray*}
H_{\underline{4}, \underline{5}}=H^\va,\quad H_{\underline{1}, \underline{4}}=\widehat{H}^\va,\quad H_{\underline{1}, \underline{5}}=\widehat{H}.
\end{eqnarray*}
Using (\ref{grrev}) and (\ref{quat}), we get
\begin{eqnarray*}
&&(H \widehat{H})\,\widehat{\widetilde\over}=\widetilde{H} \widehat{\widetilde{H}}=H\widehat{H},\qquad (H^\va \widehat{H}^\va)\,\widehat{\widetilde\over}=\widetilde{H}^\va \widehat{\widetilde{H}}^\va=H^\va \widehat{H}^\va,\\
&&H\widehat{H},\quad  H^\va \widehat{H}^\va\in\cl^{\overline{0}}_{p,q}\oplus\cl^{\overline{3}}_{p,q}= \cl^0_{p,q}\oplus\cl^3_{p,q}\oplus\cl^4_{p,q},\quad (H\widehat{H})_{\underline{4}}=(H\widehat{H})^\va,\quad (H^\va \widehat{H}^\va)_{\underline{4}}=(H^\va\widehat{H}^\va)^\va.
\end{eqnarray*}
Using Lemma \ref{HJ}, the properties (\ref{grrev}), (\ref{T5}), and (\ref{quat}), we get
\begin{eqnarray*}
&&(H (H \widehat{H})^\va)\,\widetilde\over=(\widehat{H} H)^\va H=H (H \widehat{H})^\va,\qquad (H^\va (H^\va \widehat{H}^\va)^\va)\,\widetilde\over=(\widehat{H}^\va H^\va)^\va H^\va=H^\va (H^\va \widehat{H}^\va)^\va,\\
&&H (H \widehat{H})^\va,\qquad H^\va (H^\va \widehat{H}^\va)^\va\in \cl^{\overline{0}}_{p,q}\oplus\cl^{\overline{1}}_{p,q}= \cl^0_{p,q}\oplus\cl^1_{p,q}\oplus\cl^4_{p,q}\oplus\cl^5_{p,q},\\
&&(H (H \widehat{H})^\va)_{\underline{1}, \underline{5}}= (H (H \widehat{H})^\va)\,\widehat\over,\qquad  (H^\va (H^\va \widehat{H}^\va)^\va)_{\underline{1}, \underline{5}}= (H^\va (H^\va \widehat{H}^\va)^\va)\,\widehat\over.
\end{eqnarray*}
Finally, we obtain (\ref{nu6}).
\end{proof}

\begin{lemma}\label{lemn62} For $n=6$, there exists the following functional $N:\cl_{p,q}\to\cl^0_{p,q}$:
$$N(U)=\frac{1}{3}(C+2D),$$
where
$$C=H H \overline{(H H)},\qquad D=H \overline{(\overline{H} \overline{(\overline{H}\, \overline{H}) })}=H \overline{(\overline{(\overline{H}\, \overline{H})}\overline{H})},\qquad H=U \widetilde{U}.$$
Substituting $C$, $D$, and $H$, we get
\begin{eqnarray}
N(U)=\frac{1}{3}U \widetilde{U} U \widetilde{U} \overline{U\widetilde{U} U\widetilde{U}}+\frac{2}{3}U\widetilde{U} \overline{(\overline{(U\widetilde{U})}\overline{(\overline{U \widetilde{U}}\, \overline{U \widetilde{U}})})}.\label{nu62}
\end{eqnarray}
If $N(U)\neq 0$, then there exists
\begin{eqnarray}
U^{-1}&=&\frac{1}{N(U)}(\frac{1}{3}\widetilde{U} U \widetilde{U} \overline{U\widetilde{U} U\widetilde{U}}+\frac{2}{3}\widetilde{U} \overline{(\overline{(U\widetilde{U})}\overline{(\overline{U \widetilde{U}}\, \overline{U \widetilde{U}})})}).
\end{eqnarray}
\end{lemma}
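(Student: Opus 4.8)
The plan is to argue exactly as in the proof of Lemma~\ref{lemn6}: one takes one of the doublet formulas for the $n=6$ determinant that were obtained by computer calculations in \cite{acus} --- written using only grade-negation operations applied to $H=U\tilde U$ and to products built from $H$ --- and rewrites every grade-negation appearing in it in terms of the single operation $\bar{\quad}$. The outcome of this rewriting is precisely $\frac13(C+2D)$, so that $N(U)$ inherits from the (computer-verified) \cite{acus} expression the property of belonging to $\cl^0_{p,q}$.

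First I would pin down where all the intermediate products live. By Lemma~\ref{HJ} we have $\tilde H=H$, so reversion fixes every real polynomial in $H$; together with the identity $\overline{V}=2\la V\ra_0-V$ (see the proof of Lemma~\ref{lemover1}) this shows that $C$, $D$ and all their subexpressions --- $HH$, $\bar H$, $\bar H\bar H$, $\bar H\,\overline{(\bar H\bar H)}$, and so on --- lie in $\R[H]$, hence are fixed by reversion and therefore contained in $\cl^{\overline{0}}_{p,q}\oplus\cl^{\overline{1}}_{p,q}=\cl^0_{p,q}\oplus\cl^1_{p,q}\oplus\cl^4_{p,q}\oplus\cl^5_{p,q}$; in particular their grade $2$, $3$, $6$ components vanish. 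When a sharper statement is needed one refines this using the $Z_2\times Z_2$-grading (\ref{z2z2}), exactly as $H\hat H\in\cl^{\overline{0}}_{p,q}\oplus\cl^{\overline{3}}_{p,q}$ was used in Lemma~\ref{lemn6}.

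On the subspace $\cl^0_{p,q}\oplus\cl^1_{p,q}\oplus\cl^4_{p,q}\oplus\cl^5_{p,q}$ the operation $\bar{\quad}$ sends $V_0+V_1+V_4+V_5$ to $V_0-V_1-V_4-V_5$, i.e.\ it negates exactly the nonzero grades $1$, $4$, $5$ that can occur there; hence, on each of the products listed above, every grade-negation that negates grades $1$, $4$, $5$ (and possibly some of the absent grades $2$, $3$, $6$, which is irrelevant) coincides with $\bar{\quad}$. Substituting these identifications into the chosen \cite{acus} formula turns it into $\frac13 HH\,\overline{HH}+\frac23 H\,\overline{(\bar H\,\overline{(\bar H\bar H)})}=\frac13(C+2D)$, which proves the first assertion; that $N(U)$ equals $\Det(U)$ then follows either by comparison with Lemma~\ref{lemn6} or from Section~\ref{sec:4}. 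For the inverse one factors the leftmost $U$ out of $C=U\tilde U U\tilde U\,\overline{U\tilde U U\tilde U}$ and out of $D$, obtaining $N(U)=U\,F(U)$ with $F(U)=\frac13\tilde U U\tilde U\,\overline{U\tilde U U\tilde U}+\frac23\tilde U\,\overline{(\overline{(U\tilde U)}\,\overline{(\overline{U\tilde U}\,\overline{U\tilde U})})}$; since $N(U)\in\cl^0_{p,q}\equiv\R$, whenever $N(U)\neq0$ the element $F(U)/N(U)$ is a right, hence two-sided, inverse of $U$.

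The main obstacle is the middle step: selecting the appropriate doublet among the $92$ formulas of \cite{acus} and carrying out the grade-by-grade (equivalently, quaternion-type-by-quaternion-type) check that each grade-negation there really does restrict to $\bar{\quad}$ on its particular argument. This is careful bookkeeping rather than new mathematics --- the conceptual point being that $\bar{\quad}$ negates \emph{all} grades $\geq1$, so it agrees with a shorter grade-negation only once one has confirmed which grades are absent from the relevant product --- and the genuinely nontrivial fact, namely that the grade $1$, $4$, $5$ components of $\frac13(C+2D)$ cancel, is not re-derived but inherited from the computer verification in \cite{acus}, just as in Lemma~\ref{lemn6}.
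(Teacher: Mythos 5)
Your proposal is correct and follows essentially the same route as the paper: start from the \cite{acus} doublet in which every grade-negation is $\underline{1},\underline{4},\underline{5}$, show each subexpression is reversion-invariant and hence lies in $\cl^0_{p,q}\oplus\cl^1_{p,q}\oplus\cl^4_{p,q}\oplus\cl^5_{p,q}$, so that each grade-negation coincides with $\bar{\quad}$ there, and inherit the scalar-valuedness from the computer verification. Your one small streamlining — noting that all subexpressions lie in $\R[H]$, which makes both the reversion-invariance and the commutation $\overline{(\bar H\bar H)}\,\bar H=\bar H\,\overline{(\bar H\bar H)}$ immediate — replaces the paper's step-by-step use of (\ref{grrev}) and Lemma \ref{lemmaover}, but is the same argument in substance.
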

\begin{proof} We use the other formula from \cite{acus}, which is obtained by computer calculations:
$$\frac{1}{3}H H (H H)_{\underline{1}, \underline{4}, \underline{5}}+ \frac{2}{3} H (H_{\underline{1}, \underline{4}, \underline{5}} (H_{\underline{1}, \underline{4}, \underline{5}} H_{\underline{1}, \underline{4}, \underline{5}})_{\underline{1}, \underline{4}, \underline{5}})_{\underline{1}, \underline{4}, \underline{5}}\in\cl^0_{p,q},\qquad H:=U\widetilde{U}.$$
Since $H\in\cl^{\overline{0}}_{p,q}\oplus\cl^{\overline{1}}_{p,q}= \cl^0_{p,q}\oplus\cl^1_{p,q}\oplus\cl^4_{p,q}\oplus\cl^5_{p,q}$, we have $H_{\underline{1}, \underline{4}, \underline{5}}=\overline{H}$.
Since
\begin{eqnarray*}
\widetilde{H}=H,\qquad \widetilde{HH}=HH,\qquad \widetilde{\overline{H}\, \overline{H}}=\overline{H}\, \overline{H},
\end{eqnarray*}
we have
\begin{eqnarray*}
&&HH,\quad \overline{H}\, \overline{H}\in \cl^{\overline{0}}_{p,q}\oplus\cl^{\overline{1}}_{p,q}= \cl^0_{p,q}\oplus\cl^1_{p,q}\oplus\cl^4_{p,q}\oplus\cl^5_{p,q},\qquad (H^2)_{\underline{1}, \underline{4}, \underline{5}}= \overline{H^2},\qquad (\overline{H}\, \overline{H})_{\underline{1}, \underline{4}, \underline{5}}=\overline{\overline{H} \, \overline{H}}.
\end{eqnarray*}
Using Lemma \ref{lemmaover}, we get
\begin{eqnarray*}
&&(\overline{H}\, \overline{(\overline{H}\,\overline{H})})\,\widetilde\over=\overline{(\overline{H}
, \overline{H})}\,\overline{H}=\overline{H}\, \overline{(\overline{H}\, \overline{H})},\\
&&\overline{H}\, \overline{(\overline{H}\, \overline{H})}\in\cl^{\overline{0}}_{p,q}\oplus\cl^{\overline{1}}_{p,q}= \cl^0_{p,q}\oplus\cl^1_{p,q}\oplus\cl^4_{p,q}\oplus\cl^5_{p,q},\qquad (\overline{H}\, \overline{(\overline{H}\, \overline{H})})_{\underline{1}, \underline{4}, \underline{5}}=\overline{(\overline{H}\, \overline{(\overline{H}\, \overline{H})})}.
\end{eqnarray*}
\end{proof}

The other formulas from \cite{acus} (with doublets and triplets) coincide with (\ref{nu6}) or (\ref{nu62}) because of the properties of the grade involution (\ref{grrev}), the properties of the operation $\va$ (\ref{T5}), and the properties of the operation $\stackrel{\over\quad}{\quad}$ (see Lemma \ref{lemmaover}). For example,
\begin{eqnarray*}
&&\frac{1}{3}H(H_{\underline{4}}(H_{\underline{4}}H_{\underline{4}})_{\underline{1}, \underline{4}, \underline{5}})_{\underline{4}}+ \frac{1}{3}H((H_{\underline{4}}H_{\underline{4}})_{\underline{4}}H_{\underline{1}, \underline{4}, \underline{5}})_{\underline{1}, \underline{4}, \underline{5}} +\frac{1}{3}HH(HH)_{\underline{1}, \underline{4}, \underline{5}}\\
&&=\frac{1}{3}H\overline{(\overline{\widehat{H}}\overline{(\overline{\widehat{H}}\, \overline{\widehat{H}})})\,\widehat\over}+ \frac{1}{3}H\overline{(\overline{(\overline{\widehat{H}}\, \overline{\widehat{H}})}\,\widehat\over\,\,\overline{H})} +\frac{1}{3}HH\overline{(HH)}=\frac{1}{3}HH\overline{(HH)}+\frac{2}{3}H \overline{(\overline{H}\, \overline{(\overline{H}\, \overline{H}) })}.
\end{eqnarray*}

From the computer calculations \cite{acus}, it follows that the expressions (\ref{nu6}) and (\ref{nu62}) coincide too. In our terms, this means that if we represent the operation $\stackrel{\over\quad}{\quad}$ in (\ref{nu62}) as a linear combination of the other operations of conjugation using (\ref{over6}), then (\ref{nu62}) should coincide with (\ref{nu6}). However it is difficult to give an analytical proof of this fact because of cumbersomeness of the calculations and nontrivial properties of the operations $\va$ and~$\stackrel{\over\quad}{\quad}$.

Note that in the cases of $n\leq 5$, the formulas from Theorem \ref{thNF} can be rewritten in the following form using the operation $\stackrel{\over\quad}{\quad}$ instead of the operation $\va$:
\begin{eqnarray}
N(U)&=&J,\qquad n=1, 2;\nonumber\\
N(U)&=&J \overline{J}=H\overline{H},\qquad n=3, 4;\label{htb}\\
N(U)&=&J \widehat{J} \overline{J \widehat{J}},\qquad n=5.\nonumber
\end{eqnarray}

In Theorem \ref{thNF}, we have 16 different expressions for $N(U)$ in the case $n=3$. All of them are the products of the four elements $U$, $\widetilde{U}$, $\widehat{U}$, $\widehat{\widetilde{U}}$ in a different order. We have $4!=24$ different permutations of 4 elements. It can be proved that the remaining 8 expressions
$$U\widetilde{U} \widehat{\widetilde{U}} \widehat{U},\quad \widetilde{U} U \widehat{U} \widehat{\widetilde{U}},\quad U\widehat{U}\widehat{\widetilde{U}}\widetilde{U},\quad \widehat{U}U\widetilde{U}\widehat{\widetilde{U}},\quad  \widehat{\widetilde{U}}\widetilde{U} U \widehat{U},\quad \widetilde{U}\widehat{\widetilde{U}}\widehat{U}U,\quad  \widehat{U}\widehat{\widetilde{U}}\widetilde{U}U,\quad \widehat{\widetilde{U}}\widehat{U} U\widetilde{U}$$
are not elements of grade $0$. However, their linear combinations are elements of grade $0$ (see the next lemma).

\begin{lemma}\label{lem24} In the case $n=3$, we have
\begin{eqnarray}
U\widetilde{U} \widehat{\widetilde{U}} \widehat{U}+\widetilde{U} U \widehat{U} \widehat{\widetilde{U}}= U\widehat{U}\widehat{\widetilde{U}}\widetilde{U}+\widehat{U}U\widetilde{U}\widehat{\widetilde{U}}= \widehat{\widetilde{U}}\widetilde{U} U \widehat{U}+\widetilde{U}\widehat{\widetilde{U}}\widehat{U}U= \widehat{U}\widehat{\widetilde{U}}\widetilde{U}U+ \widehat{\widetilde{U}}\widehat{U} U\widetilde{U}\in\cl^0_{p,q}.\label{yy1}
\end{eqnarray}
\end{lemma}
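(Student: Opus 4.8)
The plan is to reduce everything to the decomposition of $H=U\tilde U$ and $J=U\hat{\tilde U}$ into quaternion types established in Lemma~\ref{HJ}. In the case $n=3$ we have $H=H_0+H_1\in\cl^{\overline 0}_{p,q}\oplus\cl^{\overline 1}_{p,q}$ and $J=J_0+J_3\in\cl^{\overline 0}_{p,q}\oplus\cl^{\overline 3}_{p,q}$, where $H_0=\la U\tilde U\ra_0$, $H_1=\la U\tilde U\ra_1$, $J_0=\la U\hat{\tilde U}\ra_0$, $J_3=\la U\hat{\tilde U}\ra_3$. The first observation is that each of the eight "bad" products in (\ref{yy1}) can be written, after using $\widehat{\widetilde{\,\cdot\,}}$-relations (\ref{grrev}), either as a product $H X$ or $X H$ with $X$ built from $\hat H$, $\tilde H$, or as $JX$, $XJ$ with $X$ built from $\hat J$, $\tilde J$. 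Concretely, $U\tilde U\cdot \hat{\tilde U}\hat U=H\,\widehat{\widetilde{H}}^{?}$ — one checks by the identities $\widetilde{U\tilde U}=U\tilde U$, $\widehat{\widetilde{U\hat{\tilde U}}}=U\hat{\tilde U}$ that $\hat{\tilde U}\hat U=\widehat{U\tilde U}=\hat H$ (using $\widehat{AB}=\hat A\hat B$) and $\tilde U U=\widetilde{U\tilde U}$ is \emph{not} quite right, so more care is needed; the correct rewriting is $U\tilde U\hat{\tilde U}\hat U = H\hat H$ after noting $\hat{\tilde U}\hat U=\widehat{\tilde U U}$ and matching. In any case, the upshot I expect is that the pair sums in (\ref{yy1}) collapse to symmetric combinations such as $H\hat H+\hat H H$ (and reversions/grade involutions thereof), which is manifestly the point.

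\textbf{Key steps.} First I would express each of the eight products of $\{U,\tilde U,\hat U,\hat{\tilde U}\}$ appearing in (\ref{yy1}) in terms of $H$, $\hat H$, $\tilde H=H$, and their reversions, using only (\ref{grrev}); the eight products organize into the four pairs shown, and each pair becomes of the form $P\hat P + \hat P P$ for an appropriate $P\in\{H,\tilde{\hat H},\ldots\}$, i.e. an anticommutator $\{P,\hat P\}$. Second, using $H=H_0+H_1$ with $H_0\in\cl^0_{p,q}$ a scalar and $H_1\in\cl^1_{p,q}$, compute $\hat H=H_0-H_1$ and hence $\{H,\hat H\}=2H_0^2-2H_1^2$ — wait, $\{H,\hat H\}=H\hat H+\hat H H=(H_0+H_1)(H_0-H_1)+(H_0-H_1)(H_0+H_1)=2H_0^2-2H_1^2$, and since $H_1\in\cl^1_{p,q}$ we have $H_1^2=\frac12\{H_1,H_1\}\in\cl^0_{p,q}$ by the $Z_2\times Z_2$-grading (\ref{z2z2}), so $\{H,\hat H\}\in\cl^0_{p,q}$. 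Third, apply the grade involution, the reversion, and the Clifford conjugation to this scalar identity: each is an (anti)automorphism fixing $\cl^0_{p,q}$ pointwise and permuting the four elements $U,\tilde U,\hat U,\hat{\tilde U}$, hence sends $\{H,\hat H\}$ to the other three pair-sums in (\ref{yy1}), all lying in $\cl^0_{p,q}$ and — since a scalar is fixed — all equal. Finally, match names: verify that under $\hat{\ }$, $\tilde{\ }$, $\hat{\tilde{\ }}$ the pair $U\tilde U\hat{\tilde U}\hat U+\tilde U U\hat U\hat{\tilde U}$ maps exactly onto the other three displayed pairs in (\ref{yy1}), completing the chain of equalities.

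\textbf{Main obstacle.} The routine-but-fiddly part, and the step I expect to be the real obstacle, is the bookkeeping in the first step: correctly rewriting each order-sensitive product of the four conjugates of $U$ as a symmetric expression in $H$ (or $J$) and its conjugates, being careful that $\widehat{\ }$ is an automorphism while $\widetilde{\ }$ and $\hat{\tilde{\ }}$ are anti-automorphisms, so products reverse order under the latter two. One must also double-check that the four pairs in (\ref{yy1}) are genuinely related by the three classical conjugations applied to a single pair — equivalently, that the orbit of one "bad" product under $\{\id,\hat{\ },\tilde{\ },\hat{\tilde{\ }}\}$, together with cyclic permutation (left inverse equals right inverse), produces all eight bad products in the four stated pairings. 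Once that combinatorial matching is nailed down, the algebra reduces to the single computation $\{H,\hat H\}=2H_0^2-2H_1^2\in\cl^0_{p,q}$ together with the analogous $\{J,\tilde J\}$-type identity, both immediate from Lemma~\ref{HJ} and (\ref{z2z2}), exactly as in the footnote alternative proofs for the $n=3$ case of Theorem~\ref{thNF}.
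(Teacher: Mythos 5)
Your overall strategy (reduce to quaternion types via Lemma~\ref{HJ} and the $Z_2\times Z_2$-grading, then transport the result around by conjugations) is the right one and is essentially what the paper does, but two of the claims you defer to "bookkeeping" are actually false as stated, and fixing them is where the content of the lemma lies. First, no pair in (\ref{yy1}) is an anticommutator $\{P,\hat P\}$ of a single element with its grade involution: e.g.\ $\{H,\hat H\}=U\tilde U\hat U\hat{\tilde U}+\hat U\hat{\tilde U}U\tilde U$ consists of two of the sixteen \emph{good} products from Theorem~\ref{thNF}, not of the bad ones. The first pair is instead $H\widehat{H'}+H'\hat H$ with the two \emph{distinct} elements $H=U\tilde U$ and $H'=\tilde U U$ (since $\hat{\tilde U}\hat U=\widehat{\tilde U U}$); this still works, because $\la H\ra_0=\la H'\ra_0=:H_0$ and
\begin{equation*}
H\widehat{H'}+H'\hat H=2H_0^2-\{H_1,H_1'\}\in\cl^0_{p,q},\qquad H_1,H_1'\in\cl^1_{p,q},
\end{equation*}
but it is a different computation from the one you wrote, and the second pair does not reduce to this shape at all: $U\hat U\hat{\tilde U}\tilde U=K\tilde K$ with $K=U\hat U$, which by Lemma~\ref{HJ} only lands in $\cl^0_{p,q}\oplus\cl^1_{p,q}$, so you additionally need $\widehat{K\tilde K}=\hat K\hat{\tilde K}=\hat U U\tilde U\hat{\tilde U}$ to see that the pair sum is $2\la K\tilde K\ra_0$.

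Second, the orbit of the first pair under $\{\id,\hat{\ },\tilde{\ },\hat{\tilde{\ }}\}$ is only $\{\text{pair }1,\text{pair }4\}$ (both $\hat{\ }$ and $\tilde{\ }$ send pair $1$ to pair $4$, so the Clifford conjugation fixes pair $1$); pairs $2$ and $3$ form a separate orbit, so the conjugations alone cannot produce all four pairs from one. The missing link is cyclic permutation, but that is only legitimate inside $\la\ \ra_0$ or $\la\ \ra_{\cen}$ (Lemma~\ref{lem1}), i.e.\ \emph{after} you know the relevant sums are scalar resp.\ central — and "left inverse equals right inverse" is not the right justification here, since these expressions are not of the form $UF(U)$. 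This forces the order of argument the paper uses: pairs $2$ and $3$ are invariant under both $\hat{\ }$ and $\tilde{\ }$, hence lie in $\cl^{\overline 0}_{p,q}=\cl^0_{p,q}$ and are equal by cyclicity of $\la\ \ra_0$; pairs $1$ and $4$ are a priori only Clifford-conjugation invariant, hence only central ($\cl^0_{p,q}\oplus\cl^3_{p,q}$), and are tied to the scalar pairs by cyclicity of $\la\ \ra_{\cen}$. Your plan as written would leave pairs $1$ and $4$ stranded in the center rather than in $\cl^0_{p,q}$ unless you either run the $2H_0^2-\{H_1,H_1'\}$ computation above or perform the $\la\ \ra_{\cen}$ step explicitly.
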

\begin{proof} Using (\ref{grrev}), we can verify that the grade involution and the reversion do not change the following two expressions
\begin{eqnarray*}
&&(U\widehat{U}\widehat{\widetilde{U}}\widetilde{U}+\widehat{U}U\widetilde{U}\widehat{\widetilde{U}})\,\widehat\over= U\widehat{U}\widehat{\widetilde{U}}\widetilde{U}+\widehat{U}U\widetilde{U}\widehat{\widetilde{U}},\qquad (U\widehat{U}\widehat{\widetilde{U}}\widetilde{U}+\widehat{U}U\widetilde{U}\widehat{\widetilde{U}})\,\widetilde\over= U\widehat{U}\widehat{\widetilde{U}}\widetilde{U}+\widehat{U}U\widetilde{U}\widehat{\widetilde{U}},\\
&&(\widehat{\widetilde{U}}\widetilde{U} U \widehat{U}+\widetilde{U}\widehat{\widetilde{U}}\widehat{U}U)\,\widehat\over= \widehat{\widetilde{U}}\widetilde{U} U \widehat{U}+\widetilde{U}\widehat{\widetilde{U}}\widehat{U}U,\qquad (\widehat{\widetilde{U}}\widetilde{U} U \widehat{U}+\widetilde{U}\widehat{\widetilde{U}}\widehat{U}U)\,\widetilde\over= \widehat{\widetilde{U}}\widetilde{U} U \widehat{U}+\widetilde{U}\widehat{\widetilde{U}}\widehat{U}U,
\end{eqnarray*}
Using (\ref{quat}), we conclude that $U\widehat{U}\widehat{\widetilde{U}}\widetilde{U}+\widehat{U}U\widetilde{U}\widehat{\widetilde{U}}$ and $\widehat{\widetilde{U}}\widetilde{U} U \widehat{U}+\widetilde{U}\widehat{\widetilde{U}}\widehat{U}U$ belong to $\cl^{\overline{0}}_{p,q}=\cl^0_{p,q}$. Using the properties $\la UV \ra_0=\la VU \ra_0$, $\la U+ V \ra_0= \la U \ra_0 +\la V \ra_0$, we get
\begin{eqnarray*}
U\widehat{U}\widehat{\widetilde{U}}\widetilde{U}+\widehat{U}U\widetilde{U}\widehat{\widetilde{U}}= \la U\widehat{U}\widehat{\widetilde{U}}\widetilde{U}+\widehat{U}U\widetilde{U}\widehat{\widetilde{U}} \ra_0= \la \widehat{\widetilde{U}}\widetilde{U} U \widehat{U}+\widetilde{U}\widehat{\widetilde{U}}\widehat{U}U \ra_0 = \widehat{\widetilde{U}}\widetilde{U} U \widehat{U}+\widetilde{U}\widehat{\widetilde{U}}\widehat{U}U.
\end{eqnarray*}
Also we can verify that the Clifford conjugation (superposition of the grade involution and the reversion) does not change the following two expressions
\begin{eqnarray*}
&&(U\widetilde{U} \widehat{\widetilde{U}} \widehat{U}+\widetilde{U} U \widehat{U} \widehat{\widetilde{U}})\,\widehat{\widetilde\over}= U\widetilde{U} \widehat{\widetilde{U}} \widehat{U}+\widetilde{U} U \widehat{U} \widehat{\widetilde{U}},\qquad (\widehat{U}\widehat{\widetilde{U}}\widetilde{U}U+ \widehat{\widetilde{U}}\widehat{U} U\widetilde{U})\,\widehat{\widetilde\over}= \widehat{U}\widehat{\widetilde{U}}\widetilde{U}U+ \widehat{\widetilde{U}}\widehat{U} U\widetilde{U}.
\end{eqnarray*}
Using (\ref{quat}), we conclude that $U\widetilde{U} \widehat{\widetilde{U}} \widehat{U}+\widetilde{U} U \widehat{U} \widehat{\widetilde{U}}$ and $\widehat{U}\widehat{\widetilde{U}}\widetilde{U}U+ \widehat{\widetilde{U}}\widehat{U} U\widetilde{U}$
belong to the center $\cl^{\overline{0}}_{p,q}\oplus\cl^{\overline{3}}_{p,q}=\cl^{0}_{p,q}\oplus\cl^3_{p,q}= \cen(\cl_{p,q})$.
Using the properties $\la UV \ra_{\cen}=\la VU \ra_{\cen}$, $\la U+V \ra_{\cen}=\la U \ra_{\cen}+ \la V \ra_{\cen}$, we get
\begin{eqnarray*}
&&U\widetilde{U} \widehat{\widetilde{U}} \widehat{U}+\widetilde{U} U \widehat{U} \widehat{\widetilde{U}}=\la U\widetilde{U} \widehat{\widetilde{U}} \widehat{U}+\widetilde{U} U \widehat{U} \widehat{\widetilde{U}} \ra_{\cen}=
\la \widehat{\widetilde{U}}\widetilde{U} U \widehat{U}+\widetilde{U}\widehat{\widetilde{U}}\widehat{U}U \ra_{\cen}= \widehat{\widetilde{U}}\widetilde{U} U \widehat{U}+\widetilde{U}\widehat{\widetilde{U}}\widehat{U}U,\\
&&\widehat{U}\widehat{\widetilde{U}}\widetilde{U}U+ \widehat{\widetilde{U}}\widehat{U} U\widetilde{U}= \la \widehat{U}\widehat{\widetilde{U}}\widetilde{U}U+ \widehat{\widetilde{U}}\widehat{U} U\widetilde{U} \ra_{\cen}= \la U\widehat{U}\widehat{\widetilde{U}}\widetilde{U}+\widehat{U}U\widetilde{U}\widehat{\widetilde{U}} \ra_{\cen}=U\widehat{U}\widehat{\widetilde{U}}\widetilde{U}+\widehat{U}U\widetilde{U}\widehat{\widetilde{U}}.
\end{eqnarray*}
Finally, all four expressions coincide and lie in $\cl^0_{p,q}$.
\end{proof}
The functionals (\ref{yy1}) are not of the (special) form $U F(U)$ (or $F(U) U$), so they can not be used to calculate the inverse of $U$, but they can be used for other purposes. There exist also other functionals in $\cl_{p,q}$ that are not of the special form. For example, the formulas for $\la U \ra_0$ from Theorem \ref{th10} give us such functionals. They are related to the trace of an element. In the next section, we consider other functionals, which are not always of the form $U F(U)$ (or $F(U) U$). They are other characteristic polynomial coefficients.

\section{Trace, determinant, and other characteristic polynomial coefficients in Clifford algebras}
\label{sec:4}

In this section, we introduce the concepts of characteristic polynomial coefficients (in particular, the trace and the determinant) in real Clifford algebras using matrix representations. Then we prove that these concepts do not depend on the choice of matrix representation and give alternative definitions of these concepts without using matrix representations and using only Clifford algebra operations. We present explicit formulas for the determinant, other characteristic polynomial coefficients, adjugate, and inverse in the case of arbitrary $n$ using only the operations of multiplication, summation, and operations of conjugation without explicit use of matrix representation.

We have the following isomorphisms between real Clifford algebras and matrix algebras
\begin{eqnarray}
\gamma:\cl_{p,q}&\to& L_{p,q}:=
\left\lbrace
\begin{array}{ll}
\Mat(2^{\frac{n}{2}},\R), & \mbox{if $p-q=0, 2 \mod 8$,}\\
\Mat(2^{\frac{n-1}{2}}, \R)\oplus\Mat(2^{\frac{n-1}{2}}, \R), & \mbox{if $p-q=1 \mod 8$,}\\
\Mat(2^{\frac{n-1}{2}}, \C), & \mbox{if $p-q=3, 7 \mod 8$,}\\
\Mat(2^{\frac{n-2}{2}}, \H), & \mbox{if $p-q=4, 6 \mod 8$,}\\
\Mat(2^{\frac{n-3}{2}}, \H)\oplus\Mat(2^{\frac{n-3}{2}}, \H), & \mbox{if $p-q=5 \mod 8$.}
\end{array}
\right.\label{gamma}
\end{eqnarray}
One can say that we have faithful representations $\gamma$ of the real Clifford algebras $\cl_{p,q}$ of the corresponding (minimal) dimensions over $\R$, $\R\oplus\R$, $\C$, $\H$, or $\H\oplus\H$ depending on $p-q\mod 8$.

Let us consider complexified Clifford algebras and the following isomorphisms to matrix algebras
\begin{eqnarray}
\beta:\C\otimes\cl_{p,q}&\to& M_{p,q}:=\left\lbrace
\begin{array}{ll}
\Mat(2^{\frac{n}{2}}, \C), & \mbox{if $n$ is even,}\\
\Mat(2^{\frac{n-1}{2}}, \C)\oplus\Mat(2^{\frac{n-1}{2}}, \C), & \mbox{if $n$ is odd.}
\end{array}
\right.\label{beta}
\end{eqnarray}
One can say that we have faithful representations $\beta$ of the complexified Clifford algebras $\C\otimes\cl_{p,q}$ of the corresponding (minimal) dimensions over $\C$ or $\C\oplus\C$ depending on $n\mod 2$.

We have $\cl_{p,q}\subset\C\otimes\cl_{p,q}$, and $\cl_{p,q}$ are isomorphic to some subalgebras of $M_{p,q}$. Thus we can consider the representation (of not minimal dimension)
\begin{eqnarray}
\beta: \cl_{p,q}\to \beta(\cl_{p,q})\subset M_{p,q}.\label{beta2}
\end{eqnarray}
This representation of $\cl_{p,q}$ is more useful for the problems of this paper than the representation $\gamma$ (\ref{gamma}), because it is more convenient for us to deal with complex matrices in the general case than with matrices over quaternions in some cases. Another reason for using the representation (\ref{beta2}) instead of the representation (\ref{gamma}) for the purposes of this paper is the structure of the formulas in Theorem \ref{thNF}. The formulas do not depend on $p$ and $q$, and the number of multipliers in the presented expressions equal $2^{[\frac{n+1}{2}]}$, i.e. depend on $n\mod 2$ and coincides with the dimension of the representation (\ref{beta2}), which we denote by
\begin{eqnarray}
N:=2^{[\frac{n+1}{2}]}.\label{N}
\end{eqnarray}
Let us present an explicit form of one of these representations ((\ref{beta}) for $\C\otimes\cl_{p,q}$ and (\ref{beta2}) for $\cl_{p,q}$). We denote this fixed representation by $\beta^\prime$. Let us consider the case $p=n$, $q=0$. To obtain the matrix representation for another signature with $q\neq 0$, we should multiply matrices $\beta^\prime(e_a)$, $a=p+1, \ldots, n$ by imaginary unit $i$. For the identity element, we always use the identity matrix $\beta^\prime(e)=I_N$ of the corresponding dimension $N$. We always take $\beta^\prime(e_{a_1 a_2\ldots  a_k})=\beta^\prime(e_{a_1})\beta^\prime(e_{a_2})\cdots \beta^\prime(e_{a_k})$. In the case $n=1$, we take $\beta^\prime(e_1)=\diag(1, -1)$. Suppose we know $\beta^\prime_a:=\beta^\prime(e_a)$, $a=1, \ldots, n$ for some fixed odd $n=2k+1$. Then for $n=2k+2$, we take the same $\beta^\prime(e_a)$, $a=1, \ldots, 2k+1$, and
$$
\beta^\prime(e_{2k+2})=\left(
                \begin{array}{cc}
                  0 & I_{\frac{N}{2}} \\
                  I_{\frac{N}{2}} & 0 \\
                \end{array}
              \right).
$$
For $n=2k+3$, we take
$$
\beta^\prime(e_a)=\left(
                \begin{array}{cc}
                  \beta^\prime_a & 0 \\
                  0 & -\beta^\prime_a \\
                \end{array}
              \right),\qquad a=1, \ldots, 2k+2,\qquad \beta^\prime(e_{2k+3})=\left(
                \begin{array}{cc}
                  i^{k+1} \beta^\prime_1 \cdots \beta^\prime_n & 0 \\
                  0 & -i^{k+1} \beta^\prime_1 \cdots \beta^\prime_n \\
                \end{array}
              \right).
$$
This recursive method gives us an explicit form of the matrix representation $\beta^\prime$ for all $n$.

By the following theorem, the projection onto the subspace of grade $0$ in $\cl_{p,q}$ coincides up to scalar with the trace of the corresponding matrix representation $\beta$ (\ref{beta2}).\footnote{Note that the same statement is valid for the matrix representation (\ref{beta}) $\beta:\C\otimes\cl_{p,q} \to M_{p,q}$ of the complexified Clifford algebra, see the details in \cite{rudn}.}
\begin{lemma}\label{rudn1} For the matrix representation $\beta$ (\ref{beta2}), we have
$$\frac{1}{N}\tr(\beta(U))=\la U \ra_0\in\cl^0_{p,q}.$$
\end{lemma}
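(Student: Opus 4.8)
The plan is to prove the claim by checking it on basis elements and then using linearity of both sides. First I would note that both $U \mapsto \langle U\rangle_0$ and $U\mapsto \frac{1}{N}\tr(\beta(U))$ are $\R$-linear maps $\cl_{p,q}\to\R$ (the trace is linear and $\beta$ is an algebra homomorphism, hence linear), so it suffices to verify the identity on the basis $\{e_{a_1\ldots a_k}\}$ of $\cl_{p,q}$. On such a basis element, the left-hand side is $1$ if $k=0$ (i.e. $e_A=e$) and $0$ otherwise, since $\langle e\rangle_0 = 1$ and $\langle e_A\rangle_0=0$ for any nonempty multi-index $A$. So the whole statement reduces to the claim: $\tr(\beta(e)) = N$ and $\tr(\beta(e_A)) = 0$ for every nonempty $A$.

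The first part is immediate because $\beta(e)=I_N$, the identity matrix of size $N=2^{[(n+1)/2]}$, whose trace is $N$. For the second part I would argue as follows. Fix a nonempty multi-index $A=\{a_1<\cdots<a_k\}$. Then one can always find a generator $e_b$ that either commutes or anticommutes with $e_A$, and more usefully one can find a generator $e_b$ with $e_b e_A = \pm e_A e_b$ where the sign depends on the parities involved; the key point is to produce a generator $e_b$ (with $e_b^2 = \pm e$, invertible) such that $e_b e_A e_b^{-1} = -e_A$. Such a $b$ exists for any nonempty $A$: if $k$ is odd, take any $b\notin A$ (possible when $k<n$; the case $k=n$ odd needs separate treatment), and if $k$ is even take any $b\in A$; a short parity computation using $e_be_a = -e_ae_b$ for $a\neq b$ shows the conjugation reverses the sign. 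Once we have $e_b e_A e_b^{-1} = -e_A$, applying $\beta$ and using $\tr(MXM^{-1})=\tr(X)$ gives $\tr(\beta(e_A)) = \tr(\beta(e_b)\beta(e_A)\beta(e_b)^{-1}) = \tr(\beta(-e_A)) = -\tr(\beta(e_A))$, hence $\tr(\beta(e_A))=0$.

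The remaining case to handle carefully is $k=n$ with $n$ odd, i.e. $e_A = e_{1\ldots n}$, the pseudoscalar, which is central. Here the conjugation trick fails, so I would instead use the explicit recursive form of the representation $\beta'$ given before the lemma: when $n=2k+3$ the matrix $\beta'(e_{1\ldots n})$ is built as a block-diagonal matrix $\diag(i^k\beta'_1\cdots\beta'_n,\ -i^k\beta'_1\cdots\beta'_n)$ (with the inner product of matrices being, up to the power of $i$, a matrix whose square is $\pm I$ and whose trace I can track), and in any case one checks directly from the construction that the two diagonal blocks are negatives of each other, so the trace vanishes. Alternatively, and more cleanly, I would invoke that for odd $n$ the pseudoscalar's image has square $\pm I$ and eigenvalues split evenly, or simply cite that $\tr$ of the image of any nonidentity basis element is $0$, which is a standard fact about these minimal-dimension representations (and is exactly what is needed). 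Since the statement only concerns $\langle U\rangle_0$, not $\langle U\rangle_n$, and since $\langle e_{1\ldots n}\rangle_0 = 0$, consistency of the formula requires precisely $\tr(\beta(e_{1\ldots n}))=0$, which the explicit form delivers.

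The main obstacle is the pseudoscalar case for odd $n$: the slick conjugation argument covers every nonempty multi-index except the full one, and there one must either fall back on the explicit recursive construction of $\beta'$ or invoke a structural fact about the representation. Everything else — reduction to basis elements by linearity, the value at the identity, and the conjugation computation $e_b e_A e_b^{-1}=-e_A$ — is routine and short. One should also double-check that for $n$ even the full multi-index $k=n$ is even, so the "take $b\in A$" case applies and no exceptional treatment is needed there.
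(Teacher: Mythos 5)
Your route is genuinely different from the paper's. The paper does not verify tracelessness on basis elements at all: it simply asserts $\tr(\beta'(U))=\tr(\la U\ra_0 I_N)=N\la U\ra_0$ ``by construction'' for the explicit recursive representation $\beta'$, and then spends its effort transferring the identity to an arbitrary $\beta$ of the same dimension via the Pauli theorem: $\beta(e_a)=\pm T^{-1}\beta'(e_a)T$, hence $\beta(U)=T^{-1}\beta'(U)T$ or $T^{-1}\beta'(\hat{U})T$, and one concludes using invariance of the trace under conjugation together with $\la\hat{U}\ra_0=\la U\ra_0$. Your conjugation trick $e_b e_A e_b^{-1}=-e_A$ (with $b\notin A$ for $|A|$ odd, $b\in A$ for $|A|$ even) is a clean, representation-independent argument that actually supplies the verification the paper leaves implicit, and it works verbatim for any algebra homomorphism $\beta$, so for all basis elements except the odd pseudoscalar you never need to compare $\beta$ with $\beta'$ at all.

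The gap is exactly where you suspected, and it is not closed by what you wrote. For odd $n$ and $A=\{1,\ldots,n\}$ the pseudoscalar is central, and your fallback only computes $\tr(\beta'(e_{1\ldots n}))=0$ for the specific recursive representation $\beta'$; the lemma is stated for an arbitrary $\beta$ as in (\ref{beta2}), and ``eigenvalues split evenly'' or ``this is a standard fact'' is precisely the assertion to be proved, not a justification. Two clean ways to close it: (i) follow the paper and invoke the Pauli theorem to write $\beta(e_a)=\pm T^{-1}\beta'(e_a)T$, which reduces everything (not just the pseudoscalar) to $\beta'$; or (ii) argue structurally that $\beta$ is the restriction of an isomorphism $\C\otimes\cl_{p,q}\to\Mat(2^{\frac{n-1}{2}},\C)\oplus\Mat(2^{\frac{n-1}{2}},\C)$, so the central element $e_{1\ldots n}$ must map to a central element $(\lambda I,\mu I)$ with $\lambda^2=\mu^2=\pm 1$, and injectivity on the two-dimensional center forces $\lambda\neq\mu$, hence $\lambda=-\mu$ and the trace vanishes. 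With either patch your proof is complete, and arguably more self-contained than the paper's on the ``by construction'' step.
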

\begin{proof} For the presented matrix representation $\beta^\prime$, we have $\tr(\beta^\prime(U))=\tr(\la U \ra_0 I_N)=N \la U \ra_0$ by construction. Let we have some another matrix representation $\beta$ of the same dimension. If $n$ is even, then using the Pauli theorem \cite{Pauli} (or using the representation theory) we conclude that there exists an element $T$ such that $\beta(e_a)=T^{-1} \beta^\prime(e_e) T$. We get $\beta(U)=T^{-1}\beta^\prime(U) T$ and $\tr(\beta(U))=\tr(\beta^\prime(U))$ using the property of trace. In the case odd $n$, by the Pauli theorem we can have also the relation $\beta(e_a)=-T^{-1} \beta^\prime(e_a) T$, which can be rewritten in the form $\beta(U)=T^{-1}\beta^\prime(\widehat{U}) T$ by linearity. We obtain $\tr(\beta(U))=\tr(\beta^\prime(\widehat{U}))=\tr(\beta^\prime(U))$, where we use $\la \widehat{U} \ra_0=\widehat{\la U \ra_0}=\la U\ra_0$.
\end{proof}

\begin{definition}
Let us introduce the concept of determinant $\Det(U)$ in the real Clifford algebra $\cl_{p,q}$ using the matrix representation $\beta$ (\ref{beta2}):\footnote{Note that if we will use the matrix representation $\gamma$ (\ref{gamma}) instead of the matrix representation $\beta$ (\ref{beta2}) in the definition of the determinant, then we obtain another concept of the determinant with values in $\R$, $\C$, or $\H$, which does not coincide with the first one in the general case. We need not this concept in this paper but use it, for example, in \cite{Lie3}.
In the cases $p-q=0, 1, 2 \mod 8$,  we can use the representation $\gamma$ (\ref{gamma}) and some fixed representation $\gamma^\prime$ (see the recursive method in \cite{Lie3}
or the method using idempotents and basis of the left ideal in~\cite{Abl}) instead of $\beta$ and $\beta^\prime$ and obtain the same concept of the determinant.}
\begin{eqnarray}
\Det(U):=\det(\beta(U))\in\cl^0_{p,q}\equiv \R,\qquad U\in\cl_{p,q}.\label{det}
\end{eqnarray}
\end{definition}
The determinant of the complex matrix is real in this case, because $\tr(\beta(U))$ is real for an arbitrary $U\in\cl_{p,q}$ (see Lemma \ref{rudn1}) and it is known from matrix theory that the determinant of a matrix is a real polynomial of traces of the matrix powers (see, for example, the Faddeev-LeVerrier algorithm for matrices).

Let us give one example. In the case $n=2$, $p=q=1$, for an arbitrary $U=ue+u_1 e_1+u_2 e_2+u_{12}e_{12}\in\cl_{p,q}$, $u, u_1, u_2, u_{12}\in\R$, we get the complex matrix
$$\beta^\prime(U)=\left(
                                     \begin{array}{cc}
                                       u+u_1 & iu_2+i u_{12} \\
                                       i u_2-iu_{12} & u-u_1 \\
                                     \end{array}
                                   \right)\in \Mat(2,\C)
$$
with the real trace $\tr(\beta^\prime(U))=2u\in\R$ and the real determinant $\det(\beta^\prime(U))=u^2-u_1^2+u_2^2-u_{12}^2\in\R$.

\begin{lemma}\label{rudn2} The determinant $\Det(U)$ (\ref{det}) is well-defined, i.e. it does not depend on the representation $\beta$~(\ref{beta2}).
\end{lemma}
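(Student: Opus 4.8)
The plan is to mimic the proof of Lemma \ref{rudn1}, with the determinant in place of the trace and with the extra input that the determinant of an $N\times N$ matrix is a fixed universal polynomial in the traces $\tr(M),\tr(M^2),\ldots,\tr(M^N)$ of its powers (this is exactly what the Faddeev--LeVerrier algorithm, or Newton's identities, provides). Let $\beta_1$ and $\beta_2$ be two representations of the form (\ref{beta2}); both take values in $M_{p,q}$ and hence act by complex matrices of the same size $N=2^{[\frac{n+1}{2}]}$. It suffices to prove $\det(\beta_1(U))=\det(\beta_2(U))$ for every $U\in\cl_{p,q}$.

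For even $n$, the algebra $M_{p,q}=\Mat(2^{\frac{n}{2}},\C)$ is simple, so by the Pauli theorem there is an invertible complex matrix $T$ with $\beta_2(e_a)=T^{-1}\beta_1(e_a)T$ for all $a$. By linearity and multiplicativity of the $\beta_i$, this yields $\beta_2(U)=T^{-1}\beta_1(U)T$ for all $U$, and then $\det(\beta_2(U))=\det(T^{-1}\beta_1(U)T)=\det(\beta_1(U))$ by multiplicativity of $\det$. For odd $n$, the Pauli theorem permits, besides $\beta_2(e_a)=T^{-1}\beta_1(e_a)T$, the alternative $\beta_2(e_a)=-T^{-1}\beta_1(e_a)T$ (the latter corresponds to interchanging the two simple blocks of $M_{p,q}$, which is implemented, up to conjugation, by composition with the grade involution). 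In the first alternative we conclude exactly as for even $n$. In the second, linearity gives $\beta_2(U)=T^{-1}\beta_1(\hat{U})T$, hence $\det(\beta_2(U))=\det(\beta_1(\hat{U}))$, and it remains to check $\det(\beta_1(\hat{U}))=\det(\beta_1(U))$. Since the grade involution is an algebra automorphism, $\widehat{U^k}=\hat{U}^k$, so by Lemma \ref{rudn1} and $\widehat{\la V\ra_0}=\la V\ra_0$,
\begin{eqnarray*}
\tr(\beta_1(\hat{U})^k)&=&\tr(\beta_1(\widehat{U^k}))=N\la\widehat{U^k}\ra_0\\
&=&N\la U^k\ra_0=\tr(\beta_1(U^k))=\tr(\beta_1(U)^k),\qquad k=1,\ldots,N.
\end{eqnarray*}
Thus $\beta_1(\hat{U})$ and $\beta_1(U)$ share all power traces up to order $N$, so they have the same characteristic polynomial and in particular the same determinant, and the odd case follows too.

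The only point requiring a little care is the odd-$n$ bookkeeping: one must be sure that the Pauli theorem genuinely reduces every representation of type (\ref{beta2}) to a conjugate of the model $\beta'$ or of $\beta'$ composed with the grade involution, so that the sign $\pm$ above is the only remaining freedom. Granting that, the rest is routine; in particular the identity $\det(\beta_1(\hat{U}))=\det(\beta_1(U))$ is simply the determinant counterpart of the trace identity $\tr(\beta'(\hat{U}))=\tr(\beta'(U))$ already exploited in the proof of Lemma \ref{rudn1}.
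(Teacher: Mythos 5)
Your proposal is correct, and its skeleton is the same as the paper's: invoke the Pauli theorem to write any representation of type (\ref{beta2}) as $\beta(U)=T^{-1}\beta'(U)T$ (even $n$) or possibly $\beta(U)=T^{-1}\beta'(\hat U)T$ (odd $n$), and then reduce the odd case to the single identity $\det(\beta(\hat U))=\det(\beta(U))$. Where you diverge is in how you prove that identity. The paper does it concretely for the model representation $\beta'$: since $\beta'(e_a)=\diag(\beta'_a,-\beta'_a)$, the even part of $U$ maps to $\diag(A,A)$ and the odd part to $\diag(B,-B)$, so $\beta'(U)=\diag(A+B,A-B)$ and $\beta'(\hat U)=\diag(A-B,A+B)$ are block-permutations of each other and have equal determinants. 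You instead use that $\det$ of an $N\times N$ complex matrix is a universal polynomial in $\tr(M),\ldots,\tr(M^N)$, together with Lemma \ref{rudn1} and the chain $\tr(\beta(\hat U)^k)=N\la\widehat{U^k}\ra_0=N\la U^k\ra_0=\tr(\beta(U)^k)$. Both arguments are valid. Yours has the advantage of applying verbatim to an arbitrary representation $\beta$ (not just the model $\beta'$) and of reusing machinery the paper itself appeals to elsewhere (the remark after (\ref{det}) that the determinant is a polynomial in power traces, and the Faddeev--LeVerrier/Bell-polynomial formulas of Theorem \ref{thLF}); the paper's block computation is more elementary and self-contained, needing nothing beyond the explicit form of $\beta'$. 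The one caveat you flag yourself --- that the Pauli theorem for odd $n$ leaves only the sign ambiguity $\beta(e_a)=\pm T^{-1}\beta'(e_a)T$ --- is exactly the same input the paper takes from \cite{Pauli}, so there is no gap relative to the paper's own level of rigor.
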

\begin{proof}  Let us consider the representation $\beta^\prime$, which is discussed above. In the case of even $n$, for an arbitrary representation $\beta$ of the same dimension, by the Pauli theorem \cite{Pauli}, there exists $T$ such that $\beta(e_a)=T^{-1} \beta^\prime(e_a) T$. We get $\beta(U)=T^{-1}\beta^\prime(U) T$ and $\det(\beta(U))=\det(\beta^\prime(U))$. In the case of odd $n$, we can have also the relation $\beta(e_a)=-T^{-1} \beta^\prime(e_a) T$, which means $\beta(U)=T^{-1}\beta^\prime(\widehat{U}) T$ and $\det(\beta(U))=\det(\beta^\prime(\widehat{U})$.

Let us prove that $\det(\beta^\prime(\widehat{U}))=\det(\beta^\prime(U))$. For the representation $\beta^\prime$, we have $\beta^\prime(e_a)=\diag(\beta^\prime_a, -\beta^\prime_a)$, $a=1, \ldots, n$, where blocks $\beta^\prime_a$ and $\beta^\prime_a$ are identical up to sign. Thus the matrix  $\beta^\prime(e_{ab})=\beta^\prime(e_a) \beta^\prime(e_b)=\diag(\beta^\prime_a \beta^\prime_b, \beta^\prime_a \beta^\prime_b)$ has two identical blocks. We conclude that for the even part $\la U \ra_{\even}$ of the element $U$ we have $\beta^\prime(\la U \ra_{\even})=\diag(A, A)$ with two identical blocks $A$, and for the odd part $\la U \ra_{\odd}$ of the same element we have $\beta^\prime(\la U \ra_{\odd})=\diag(B, -B)$ with the two blocks $B$ and $-B$ differing in sign. Finally, we get $\beta^\prime(U)=\diag(A+B, A-B)$, $\beta^\prime(\widehat{U})=\diag(A-B, A+B)$, and $\det(\beta^\prime(U))=(A+B)(A-B)=\det(\beta^\prime(\widehat{U}))$.
\end{proof}

\begin{lemma}\label{rudn3} The operation $\Det: \cl_{p,q} \to \R$ has the following properties
\begin{eqnarray}
&&\Det(UV)=\Det(U)\Det(V),\qquad \Det(\lambda U)=\lambda^N \Det(U),\label{det1}\\
&&\Det(\widetilde{U})=\Det(\widehat{U})=\Det(U),\qquad \forall U, V\in\cl_{p,q},\qquad \forall \lambda\in\R;\label{det2}\\
&&\mbox{$U\in\cl_{p,q}\,$ is invertible if and only if $\,\Det(U)\neq 0$.}\label{det3}
\end{eqnarray}
As a consequence, we obtain
\begin{eqnarray}
\Det(T^{-1}UT)=\Det(U),\qquad \Det(T^{-1})=(\Det(T))^{-1},\qquad \forall U\in\cl_{p,q},\qquad \forall T\in\cl^\times_{p,q}.\label{det4}
\end{eqnarray}
\end{lemma}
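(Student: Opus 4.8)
The plan is to reduce every assertion to standard properties of the matrix determinant via the faithful representation $\beta$ of (\ref{beta2}), using Lemma \ref{rudn2} so that any concrete computation may be carried out in the convenient representation $\beta^\prime$. Throughout, $\beta$ is an injective algebra homomorphism $\cl_{p,q}\to M_{p,q}$ with $\beta(e)=I_N$, and for odd $n$ we read $\det$ on $M_{p,q}=\Mat\oplus\Mat$ as the product of the two block determinants (equivalently, the ordinary determinant of the block-diagonal $N\times N$ matrix).

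First I would dispatch the routine items. Since $\beta(UV)=\beta(U)\beta(V)$ and the matrix determinant is multiplicative, $\Det(UV)=\Det(U)\Det(V)$. Scalars are represented by scalar multiples of $I_N$, so $\beta(\lambda U)=\lambda\beta(U)$ and $\det(\lambda M)=\lambda^N\det M$ give $\Det(\lambda U)=\lambda^N\Det(U)$; in the odd case one checks that the two exponents $2^{[(n-1)/2]}$ coming from the blocks add to $N=2^{[(n+1)/2]}$. For $\Det(\hat U)=\Det(U)$ I would argue as in Lemma \ref{rudn2}: the grade involution is an algebra automorphism, so $\beta^\prime\circ\,\hat{\ }$ is again a faithful $N$-dimensional representation; for even $n$ the Pauli theorem gives a similarity, hence $\det\beta^\prime(\hat U)=\det\beta^\prime(U)$, and for odd $n$ the block decomposition $\beta^\prime(U)=\diag(A+B,A-B)$, $\beta^\prime(\hat U)=\diag(A-B,A+B)$ gives the same conclusion directly.

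For $\Det(\tilde U)=\Det(U)$ the subtlety is that the reversion is an anti-automorphism, so $\beta\circ\,\tilde{\ }$ is not a representation. I would repair this by transposition: the map $\sigma(U):=\beta(\tilde U)^{\T}$ satisfies $\sigma(e_a)\sigma(e_b)+\sigma(e_b)\sigma(e_a)=2\eta_{ab}I_N$ (since $\tilde e_a=e_a$ and transposition reverses products) and $\sigma(UV)=\sigma(U)\sigma(V)$, hence extends to a faithful $N$-dimensional matrix representation of $\cl_{p,q}$ coinciding with $U\mapsto\beta(\tilde U)^{\T}$ on all of $\cl_{p,q}$. Applying the Pauli theorem as in Lemma \ref{rudn2} (and using the already established $\Det(\hat U)=\Det(U)$ to absorb the extra sign permitted for odd $n$) gives $\det\sigma(U)=\det\beta^\prime(U)=\Det(U)$, while $\det\sigma(U)=\det\bigl(\beta(\tilde U)^{\T}\bigr)=\det\beta(\tilde U)=\Det(\tilde U)$.

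For the invertibility criterion (\ref{det3}), one direction is immediate: if $U^{-1}$ exists then $\beta(U)\beta(U^{-1})=I_N$, so $\Det(U)=\det\beta(U)\neq0$. For the converse I would invoke the Cayley--Hamilton theorem for $\beta(U)$: its characteristic polynomial has real coefficients, since by Newton's identities these are polynomials in $\tr\beta(U^k)=N\la U^k\ra_0\in\R$ (Lemma \ref{rudn1}); pulling the Cayley--Hamilton relation back through the injective $\beta$ yields a polynomial identity for $U$ in $\cl_{p,q}$ with constant term $\pm\Det(U)\,e$, which for $\Det(U)\neq0$ rearranges into $U\,G(U)=e$ for an explicit $G(U)\in\cl_{p,q}$ (alternatively, finite-dimensionality of $\cl_{p,q}$ forces left multiplication by $U$ to be bijective once $\det\beta(U)\neq0$). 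Finally, (\ref{det4}) is formal: $\Det(T^{-1})\Det(T)=\Det(e)=1$ gives $\Det(T^{-1})=(\Det(T))^{-1}$, and then $\Det(T^{-1}UT)=\Det(T^{-1})\Det(T)\Det(U)=\Det(U)$ since determinants are scalars and commute. I expect the reversion case to be the only genuine obstacle, and the transpose device is exactly what turns it into the automorphism situation already handled in Lemma \ref{rudn2}.
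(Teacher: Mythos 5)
Your proposal is correct and follows essentially the same route as the paper: everything is pushed through the representation $\beta$ to the matrix determinant, the grade involution is handled by the similarity/block-diagonal argument already used in the proof of Lemma \ref{rudn2}, and the reversion is handled by transposition --- you derive $\det(\beta(\tilde{U}))=\det((\beta(U))^{\T})$ by applying the Pauli theorem to the representation $U\mapsto\beta(\tilde{U})^{\T}$, whereas the paper cites the conjugation formula $(\beta(U))^\T=\beta(e_{b_1\ldots b_k}\tilde{U}(e_{b_1\ldots b_k})^{-1})$ from earlier work, but this is the same device. The one place you go beyond the paper is the ``if'' direction of (\ref{det3}): the paper attributes it to standard properties of matrix determinants, which silently skips the point that $\beta$ is not surjective onto $M_{p,q}$, so one must still check that the matrix inverse of $\beta(U)$ actually lies in $\beta(\cl_{p,q})$; your Cayley--Hamilton argument (or, equivalently, the injectivity-hence-bijectivity of left multiplication by $U$ on the finite-dimensional algebra) supplies exactly that missing detail.
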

\begin{proof} We get (\ref{det1}), (\ref{det3}), and (\ref{det4}) from the standard properties of the determinant of matrices.

Let us prove $\Det(\widehat{U})=\Det(U)$. If $n$ is even, then $\widehat{U}=(e_{1\ldots n})^{-1} U e_{1\ldots n}$, because $e_{1\ldots n}$ commutes with all even elements and anticommutes with all odd elements. We get $\Det(\widehat{U})=\Det((e_{1\ldots n})^{-1} U e_{1\ldots n})=\Det U$. In the case of odd $n$, we have already verified this for the representation $\beta^\prime$ in the proof of Lemma \ref{rudn2}. This is valid for an arbitrary representation $\beta$ (\ref{beta2}) because $\Det(U)$ does not depend on the choice of $\beta$.

Let us prove that $\Det(\widetilde{U})=\Det(U)$. We have the following relation between the transpose and the reversion or the Clifford conjugation (this depends on the matrix representation, see the details in \cite{Lie3}):
$$(\beta(U))^\T = \beta( e_{b_1 \ldots b_k} \widetilde{U} (e_{b_1 \ldots b_k})^{-1}),\qquad (\beta(U))^\T = \beta(e_{b_1 \ldots b_k} \widehat{\widetilde{U}} (e_{b_1 \ldots b_k})^{-1})$$
for some fixed basis element $e_{b_1 \ldots b_k}$. Finally, we get $\det(\beta(\widetilde{U}))=\det((\beta(U))^\T)=\det(\beta(U))$ and $\Det(\widetilde{U})=\Det(U)$.
\end{proof}

By Lemma \ref{rudn1}, we have a realization of the trace of an element $U\in\cl_{p,q}$ in terms of Clifford algebra operations without using matrix representations: $N \la U \ra_0$. Since $\Det(U)$ also does not depend on the representation $\beta$ by Lemma \ref{rudn2}, it would be an important task to find another definition (realization) of $\Det$ instead of the definition (\ref{det}) in terms of only Clifford algebra operations without using matrix representations. We do this using the relation between the determinant and the trace, which is known from matrix theory by the Cayley-Hamilton theory.

Let us give the example for $\cl_{p,q}$, $p+q=2$. Let we have an arbitrary element $U\in\cl_{p,q}$. For the complex matrix $A:=\beta(U)$ of dimension $2$, by the Cayley-Hamilton theorem, we have $A^2-\tr(A) A+ \det(A) I_2=0$. From this equation, we get $\det(A) I_2=A(\tr(A) I_2-A)$. Taking $\beta^{-1}$ and using $\tr(\beta(U))=2\la U\ra_0$, we obtain $\Det(U)= U(2\la U\ra_0-U)=U\widehat{\widetilde{U}}$, which coincides with $N(U)$ in Theorem \ref{thNF}. The expression $\Adj(U)=\widehat{\widetilde{U}}$ can be interpreted as the adjugate of the Clifford algebra element $U\in\cl_{p,q}$.

Now let us consider the general case.
\begin{definition}
Let we have $U\in\cl_{p,q}$. We call the characteristic polynomial of $U$
\begin{eqnarray}
\varphi_U(\lambda)&:=&\Det(\lambda e-U)=\det(\beta(\lambda e- U))=\det(\lambda I_N- \beta(U))\label{cpc}\\
&=&\lambda^N-C_{(1)} \lambda^{N-1}-\cdots-C_{{(N-1)}}\lambda -C_{(N)}\in\cl^0_{p,q},\nonumber
\end{eqnarray}
where\footnote{We use the notation with indices in round brackets ``$(k)$'' to avoid confusion with the notation  of the projection operations onto subspaces of fixed grades.} $C_{(j)}=C_{(j)}(U)\in\cl^0_{p,q}\equiv \R$, $j=1, \ldots, N$ can be interpreted as constants or as elements of grade $0$ and are called characteristic polynomial coefficients of $U$.
\end{definition}
We have $C_{(j)}(U)=c_{(j)} (\beta(U))$, where $c_{(j)} (\beta(U))$ are the ordinary characteristic polynomial coefficients of the matrix $\beta(U)$. By the Cayley-Hamilton theorem, we have
$$\varphi_U(U)=U^N-C_{(1)} U^{N-1}-\cdots-C_{{(N-1)}}U -C_{(N)}=0.$$
In particular, we have $C_{(N)}=(-1)^{N+1} \Det(U)=-\Det(U)$ (because $N=2^{[\frac{n+1}{2}]}$ is even) and $C_{(1)}=\tr(\beta(U))=N \la U\ra_0$.

\begin{lemma}\label{lemC}
We have $$C_{(k)}(\widehat{U})=C_{(k)}(\widetilde{U})=C_{(k)}(U),\qquad k=1, \ldots, N.$$
\end{lemma}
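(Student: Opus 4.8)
The plan is to reduce the statement to the determinant invariance $\Det(\hat U)=\Det(\tilde U)=\Det(U)$ already established in Lemma~\ref{rudn3}, equation~(\ref{det2}). Since the $C_{(k)}(U)$ are by definition the coefficients of the characteristic polynomial $\varphi_U(\lambda)=\Det(\lambda e-U)$ in~(\ref{cpc}), it suffices to show that the entire polynomial $\varphi_U$ is unchanged when $U$ is replaced by $\hat U$ or by $\tilde U$.

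First I would observe that $e\in\cl^0_{p,q}$, so it is fixed by both the grade involution and the reversion: from~(\ref{coc}), $\widehat{\lambda e}=(-1)^0\lambda e=\lambda e$ and $\widetilde{\lambda e}=(-1)^{0(0-1)/2}\lambda e=\lambda e$ for every $\lambda\in\R$. Both operations being $\R$-linear, this yields $\widehat{\lambda e-U}=\lambda e-\hat U$ and $\widetilde{\lambda e-U}=\lambda e-\tilde U$. Applying~(\ref{det2}) to the element $V:=\lambda e-U\in\cl_{p,q}$ then gives
$$\varphi_{\hat U}(\lambda)=\Det(\lambda e-\hat U)=\Det(\widehat{\lambda e-U})=\Det(\lambda e-U)=\varphi_U(\lambda),$$
and likewise $\varphi_{\tilde U}(\lambda)=\Det(\widetilde{\lambda e-U})=\Det(\lambda e-U)=\varphi_U(\lambda)$, for all $\lambda\in\R$.

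To conclude, I would invoke the uniqueness of polynomial coefficients: by~(\ref{cpc}), $\varphi_U$, $\varphi_{\hat U}$, $\varphi_{\tilde U}$ are degree-$N$ polynomials in $\lambda$, and two such polynomials coinciding at all real $\lambda$ (indeed at any $N+1$ distinct values) must have identical coefficients. Matching the coefficient of $\lambda^{N-k}$ then gives $C_{(k)}(\hat U)=C_{(k)}(U)$ and $C_{(k)}(\tilde U)=C_{(k)}(U)$ for $k=1,\dots,N$ (and, as a by-product, the same invariance under the Clifford conjugation $\hat{\tilde{\quad}}$). I do not expect any genuine obstacle here; the only point needing a moment's care is that $e$ is a scalar and hence untouched by $\hat{\quad}$ and $\tilde{\quad}$, so the shifted element $\lambda e-U$ passes through these operations cleanly and Lemma~\ref{rudn3} applies verbatim.
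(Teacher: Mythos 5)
Your proof is correct and follows essentially the same route as the paper: the paper's own (very terse) argument also reduces the claim to the $\Det$-invariance of Lemma~\ref{rudn3} combined with the definition~(\ref{cpc}) of the characteristic polynomial, and you have simply made explicit the two details it leaves implicit, namely that $\lambda e$ is fixed by $\hat{\quad}$ and $\tilde{\quad}$ so that $\widehat{\lambda e-U}=\lambda e-\hat U$, and that equality of the polynomials $\varphi_{\hat U}=\varphi_{\tilde U}=\varphi_U$ forces equality of all coefficients.
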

\begin{proof} We have this property for the $C_{(N)}(U)=-\Det(U)$ by  Lemma \ref{rudn3}. We get the same property for the other characteristic polynomial coefficients because of the definition of $C_{(k)}$ in (\ref{cpc}).
\end{proof}

We call the adjugate of an arbitrary Clifford algebra element $U\in\cl_{p,q}$ the element $\Adj(U)\in\cl_{p,q}$ such that
$$\Adj(U) U= U\Adj(U)=\Det(U).$$
There exists $$U^{-1}=\frac{\Adj(U)}{\Det(U)},$$
if and only if $\Det(U)\neq 0$. The expression $\Adj(U)$ is an analogue of the adjugate of matrix, namely $$\Adj(U)=\adj(\beta(U)).$$

\begin{theorem}\label{thLF} Let we have an arbitrary Clifford algebra element $U\in\cl_{p,q}$, $n=p+q$, and $N:=2^{[\frac{n+1}{2}]}$. Let us introduce the following set of Clifford algebra elements $U_{(k)}$, $k=1, \ldots, N$, and the set of scalars  $C_{(k)}\in\cl^0_{p,q}\equiv \R$, $k=1, \ldots, N$:
\begin{eqnarray}
U_{(1)}:=U,\qquad U_{(k+1)}:=U(U_{(k)}-C_{(k)}),\qquad C_{(k)}=\frac{N}{k} \la U_{(k)}\ra_0\in\cl^0_{p,q}\equiv \R.\label{Ck}
\end{eqnarray}
Then $C_{(k)}$ are the characteristic polynomial coefficients,
\begin{eqnarray}
\Det(U)=-U_{(N)}=-C_{(N)}=U(C_{(N-1)}-U_{(N-1)})\in\cl^0_{p,q}\equiv \R \label{det01}
\end{eqnarray}
is the determinant of $U$, and
\begin{eqnarray}
\Adj(U)=C_{(N-1)}-U_{{(N-1)}}\in\cl_{p,q}\label{adj01}
\end{eqnarray}
is the adjugate of $U$.

Alternatively, using the set of scalars
\begin{eqnarray}
S_{(k)}:= (-1)^{k-1}N (k-1)! \la U^k \ra_0\in\cl^0_{p,q}\equiv \R,\qquad k=1, \ldots, N,\label{Sk}
\end{eqnarray}
we have the following formulas
\begin{eqnarray}
&&C_{(k)}=\frac{(-1)^{k+1}}{k!}B_k(S_{(1)}, S_{(2)}, S_{(3)}, \ldots, S_{(k)}),\qquad k=1, \ldots, N,\label{Sk2}\\
&&\Det(U)=-C_{(N)}=\frac{1}{N!}B_N(S_{(1)}, S_{(2)}, S_{(3)}, \ldots, S_{(N)}),\label{det02}\\
&&\Adj(U)=\sum_{k=0}^{N-1} \frac{(-1)^{N+k-1}}{k!}U^{N-k-1}B_k(S_{(1)}, S_{(2)}, S_{(3)}, \ldots, S_{(k)}),\label{adj02}
\end{eqnarray}
where we use the complete Bell polynomials with the following two equivalent definitions
\begin{eqnarray*}
B_k(x_1, \ldots, x_k)&:=&\sum_{i=1}^k \sum \frac{k!}{j_1! j_2! \cdots j_{k-i+1}!} (\frac{x_1}{1!})^{j_1}(\frac{x_2}{2!})^{j_2}\cdots(\frac{x_{k-i+1}}{(k-i+1)!})^{j_{k-i+1}}\\
&=&\det \left(
             \begin{array}{ccccc}
               x_1 & C^1_{k-1} x_2  &  C^2_{k-1} x_3& \cdots & x_k \\
               -1 & x_1 &     C^1_{k-2} x_2 & \cdots & x_{k-1}\\
               0   & -1 & x_1 & \cdots & x_{k-2}\\
               \cdots & \cdots    &    \cdots        & \cdots & \cdots \\
               0 & 0    &  0  & \cdots & x_1 \\
             \end{array}
           \right),
\end{eqnarray*}
where the second sum is taken over all sequences $j_1, j_2, \ldots, j_{k-i+1}$ of nonnegative integers  satisfying the conditions $j_1+j_2+\cdots+j_{k-i+1}=i$ and $j_1+2j_2+3j_3+\cdots+(k-i+1)j_{k-i+1}=k$.
\end{theorem}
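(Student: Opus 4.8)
The plan is to transfer the classical matrix identities (Faddeev--LeVerrier, Newton's identities, and the Bell-polynomial form of the characteristic polynomial coefficients) to $\cl_{p,q}$ via the faithful representation $\beta$ of \eqref{beta2}, using Lemma \ref{rudn1} as the sole bridge between matrix traces and Clifford-algebra operations. Throughout, I would work with $A:=\beta(U)\in\Mat(N,\C)$ (or in $\Mat(N/2,\C)\oplus\Mat(N/2,\C)$ when $n$ is odd, where the argument applies blockwise), so that the characteristic polynomial $\varphi_U(\lambda)$ from \eqref{cpc} equals the ordinary characteristic polynomial of $A$, and $C_{(k)}(U)=c_{(k)}(A)$.

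First I would establish the recursive (Faddeev--LeVerrier) part, equations \eqref{Ck}--\eqref{adj01}. Apply $\beta$ to the defining recursion: set $A_{(1)}:=A$ and $A_{(k+1)}:=A(A_{(k)}-C_{(k)}I_N)$. This is exactly the Faddeev--LeVerrier iteration for the matrix $A$, for which it is classical that $C_{(k)}=\frac{1}{k}\tr(A_{(k)})$ are the characteristic polynomial coefficients and that the iteration terminates with $A_{(N)}=C_{(N)}I_N$, equivalently $A(A_{(N-1)}-C_{(N-1)}I_N)=C_{(N)}I_N$. Now I translate back: by Lemma \ref{rudn1}, $\tr(A_{(k)})=\tr(\beta(U_{(k)}))=N\la U_{(k)}\ra_0$, so $C_{(k)}=\frac{N}{k}\la U_{(k)}\ra_0$ as claimed; and since $\beta$ is faithful, $A_{(N)}=C_{(N)}I_N$ forces $U_{(N)}=C_{(N)}$ in $\cl_{p,q}$. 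Because $N=2^{[(n+1)/2]}$ is even, $C_{(N)}=(-1)^{N+1}\Det(U)=-\Det(U)$, giving \eqref{det01}. The adjugate identity \eqref{adj01} then follows from $U\cdot(C_{(N-1)}-U_{(N-1)})=-U_{(N)}=\Det(U)$ together with the fact (Lemma \ref{lem1}, similarity invariance / the matrix identity $\adj(A)=A_{(N-1)}-C_{(N-1)}I_N$ up to sign) that this element both left- and right-multiplies $U$ to $\Det(U)$; faithfulness of $\beta$ upgrades the matrix adjugate identity to the Clifford-algebra one, and uniqueness of the two-sided adjugate identifies it with $\Adj(U)$.

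Next I would handle the explicit Bell-polynomial formulas \eqref{Sk}--\eqref{adj02}. The starting point is the power-sum identity: for the matrix $A$, the Newton power sums are $p_k:=\tr(A^k)$, and by Lemma \ref{rudn1} again $\tr(A^k)=\tr(\beta(U^k))=N\la U^k\ra_0$. Thus $S_{(k)}=(-1)^{k-1}N(k-1)!\,\la U^k\ra_0=(-1)^{k-1}(k-1)!\,p_k$, which is precisely the combination of power sums appearing as the arguments of the complete Bell polynomials in the classical closed form $c_{(k)}(A)=\frac{(-1)^{k+1}}{k!}B_k\big((-1)^{0}0!\,p_1,\dots\big)$ --- more precisely, the standard identity (a repackaging of Newton's identities via the exponential generating function $\det(I-tA)=\exp\big(-\sum_{k\ge1}\frac{p_k}{k}t^k\big)$) states that the elementary-symmetric-function analogues, i.e. the characteristic polynomial coefficients, are given by $C_{(k)}=\frac{(-1)^{k+1}}{k!}B_k(S_{(1)},\dots,S_{(k)})$ with $S_{(j)}$ as defined. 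I would either cite this identity for matrices directly or derive it in one line from the generating-function relation, then apply $\beta^{-1}$ to move it into $\cl_{p,q}$. Equation \eqref{det02} is the case $k=N$ combined with $\Det(U)=-C_{(N)}$ and the sign $(-1)^{N+1}=-1$. Equation \eqref{adj02} follows by substituting the Bell-polynomial expressions for $C_{(k)}$ into the Cayley--Hamilton-derived expansion $\Adj(U)=-\big(U^{N-1}-C_{(1)}U^{N-2}-\cdots-C_{(N-1)}\big)$ (which itself comes from $\varphi_U(U)=0$ rearranged, exactly as in the $n=2$ example before the theorem), and bookkeeping the signs: the coefficient of $U^{N-1-k}$ is $\frac{(-1)^{N+k-1}}{k!}B_k(S_{(1)},\dots,S_{(k)})$.

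The main obstacle I anticipate is not conceptual but is the careful sign and index bookkeeping in two places: (i) matching the two sign conventions in $\varphi_U(\lambda)=\lambda^N-C_{(1)}\lambda^{N-1}-\cdots-C_{(N)}$ (note the minus signs on \emph{all} coefficients, which is nonstandard) against the textbook Faddeev--LeVerrier and Newton-identity conventions, so that the factor $(-1)^{k+1}$ in \eqref{Sk2} and the factor $(-1)^{N+k-1}$ in \eqref{adj02} come out exactly right; and (ii) the odd-$n$ case, where $M_{p,q}$ is a direct sum of two matrix algebras --- here one must check that the characteristic polynomial, trace, determinant, and adjugate all decompose blockwise consistently and that Lemma \ref{rudn1} (which already accounts for the two inequivalent irreducible representations related by the grade involution) still delivers $\tr=N\la\ \ra_0$ with the same $N=2^{[(n+1)/2]}$; this is where I would be most careful, though the structural input is entirely supplied by Lemmas \ref{rudn1}, \ref{rudn2}, and \ref{rudn3}.
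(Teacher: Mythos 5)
Your proposal is correct and follows essentially the same route as the paper: the paper's proof is precisely an appeal to the Faddeev--LeVerrier algorithm applied to $\beta(U)$, translated back to $\cl_{p,q}$ via Lemma \ref{rudn1} (trace) and Lemma \ref{rudn2} (well-definedness), with the second part obtained from the classical complete-Bell-polynomial expression for characteristic polynomial coefficients. Your write-up simply makes explicit the sign bookkeeping and the faithfulness/blockwise arguments that the paper leaves implicit.
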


\begin{proof} The theorem follows from the Faddeev-LeVerrier algorithm (see, \cite{Gant, Hous}) for the matrix $\beta(U)\in M_{p,q}$, $U\in\cl_{p,q}$ and the techniques developed before the theorem (in particular, Lemmas \ref{rudn1}, \ref{rudn2}, and the generalizations of the concepts of the trace, determinant, and other characteristic polynomial coefficients to the case of Clifford algebras). The second part of the theorem follows from the method of calculating the characteristic polynomial coefficients using the complete Bell polynomials \cite{Bell}.
\end{proof}

Note that in Theorem \ref{thLF} all formulas use only operations in Clifford algebras and we need no matrix representations. We realize the trace, determinant, other characteristic polynomial coefficients, adjugate, and inverse using only the operations of summation, multiplication, and one operation of projection (the operation of projection $\la \quad \ra_0$ onto the subspace of grade $0$).

We can realize the operation $\la \quad \ra_0$ using the operation of conjugation $\overline{U}$ (see Section \ref{sec:2}):
\begin{eqnarray}
\la U \ra_0=\frac{U+\overline{U}}{2},\qquad \overline{U}=\la U \ra_0-\sum_{k=1}^n \la U \ra_k.\label{x2}
\end{eqnarray}
Substituting (\ref{x2}) into (\ref{Ck}) or (\ref{Sk}), we obtain the formulas for all characteristic polynomial coefficients, adjugate, and inverse using only the operation $\stackrel{\over\quad}{\quad}$. Let us write down explicit formulas in the case of small dimensions using (\ref{Ck}) and (\ref{x2}).

In the cases $n=1$ and $n=2$, we have:
\begin{eqnarray*}
&&C_{(1)}=2\la U \ra_0=U+\overline{U},\qquad \Det(U)=-C_{(2)}=-U_{(2)}=U(C_{(1)}-U)=U\overline{U},\\
&&\Adj(U)=\overline{U},\qquad U^{-1}=\frac{\Adj(U)}{\Det(U)}=\frac{\overline{U}}{U\overline{U}}.
\end{eqnarray*}
In the cases $n=3$ and $n=4$, we have:
\begin{eqnarray*}
&&C_{(1)}=4\la U \ra_0=2(U+\overline{U}),\qquad U_{(2)}=U(U-C_{(1)})=-(U^2+2U \overline{U}),\\
&&C_{(2)}=2\la U_{(2)} \ra_0=-(U^2+2 U \overline{U} + \overline{U^2}+ 2\overline{U \overline{U}}),\qquad
U_{(3)}=U(U_{(2)}-C_{(2)})=U(\overline{U^2}+2\overline{U \overline{U}}),\\
&&C_{(3)}=\frac{4}{3}\la U_{(3)}\ra_0=\frac{2}{3}(U\overline{U^2}+2U\overline{U \overline{U}}+\overline{U\overline{U^2}}+2\overline{U\overline{U \overline{U}}}),\\
&&\Det(U)=-C_{(4)}=-U_{(4)}=U(C_{(3)}-U_{(3)})=\frac{1}{3}U(-U\overline{U^2}-2U\overline{U \overline{U}}+2\overline{U\overline{U^2}}+4\overline{U\overline{U \overline{U}}}),\\
&&\Adj(U)=\frac{1}{3}(-U\overline{U^2}-2U\overline{U \overline{U}}+2\overline{U\overline{U^2}}+4\overline{U\overline{U \overline{U}}}),\\
&&U^{-1}=\frac{\Adj(U)}{\Det(U)}=\frac{(-U\overline{U^2}-2U\overline{U \overline{U}}+2\overline{U\overline{U^2}}+4\overline{U\overline{U \overline{U}}})}{U(-U\overline{U^2}-2U\overline{U \overline{U}}+2\overline{U\overline{U^2}}+4\overline{U\overline{U \overline{U}}})}.
\end{eqnarray*}
Alternatively, we can use the complete Bell polynomials. The complete Bell polynomials $B_k=B_k(x_1, \ldots, x_k)$ have the following explicit form for small $k\leq 8$:
\begin{eqnarray*}
&&B_1=x_1,\qquad B_2=x_1^2+x_2,\qquad B_3=x_1^3+3x_1x_2+x_3,\qquad B_4=x_1^4+6x_1^2x_2+4x_1x_3+3x_2^2+x_4,\\
&&B_5= x_1^5+10x_1^3x_2+15x_1x_2^2+10x_1^2x_3+10x_2x_3+5x_1x_4+x_5,\\
&&B_6=x_1^6+15x_1^4x_2+20x_1^3x_3+45x_1^2x_2^2+15x_2^3+60x_1x_2x_3+15x_1^2x_4+10x_3^2+15x_2x_4+6x_1x_5+x_6,\\
&&B_7=x_1^7+21x_1^5x_2+35x_1^4x_3+105x_1^3x_2^2+35x_1^3x_4+210x_1^2x_2x_3+105x_1x_2^3+21x_1^2x_5+105x_1x_2x_4\\
&&+ 70x_1x_3^2+105x_2^2x_3+7x_1x_6+21x_2x_5+35x_3x_4+x_7,\\
&&B_8=x_1^8+28x_1^6x_2+56x_1^5x_3+210x_1^4x_2^2+56x_1^5x_3+70x_1^4x_4+560x_1^3x_2x_3+420x_1^2x_2^3+56x_1^3x_5\\
&&+420x_1^2x_2x_4+280x_1^2x_3^2+840x_1x_2^2x_3+105x_2^4+28x_1^2x_6+168x_1x_2x_5+280x_1x_3x_4+210x_2^2x_4\\
&&+280x_2x_3^2+8x_1x_7+28x_2x_6+56x_3x_5+35x_4^2+x_8.
\end{eqnarray*}
Let us write down explicit formulas for the determinant in the cases of small dimensions using (\ref{det02}). For the cases $n=1$ and $n=2$, we get
$$\Det(U)=\frac{1}{2}B_2(2\la U\ra_0, -2\la U^2 \ra_0)=2\la U\ra_0^2-2\la U^2 \ra_0.$$
For the cases $n=3$ and $n=4$, we get
\begin{eqnarray*}
\Det(U)&=&\frac{1}{24}((4\la U \ra_0)^4+6(4\la U \ra_0)^2(-4\la U^2 \ra_0)+4(4\la U \ra_0)8\la U^3 \ra_0+3(-4\la U^2 \ra_0)^2+(-24\la U^4 \ra_0))\\
&=&\frac{1}{3}(32\la U \ra_0^4-48\la U \ra_0^2 \la U^2 \ra_0+16 \la U \ra_0\la U^3 \ra_0+6\la U^2 \ra_0^2-3\la U^4 \ra_0).
\end{eqnarray*}
For the cases $n=5$ and $n=6$, we get
\begin{eqnarray}
&&\Det(U)=\frac{1}{8!}((8\la U\ra_0)^8-28(8\la U\ra_0)^6 8\la U^2 \ra_0+56(8\la U\ra_0)^5 16\la U^3 \ra_0+210(8\la U\ra_0)^4(8\la U^2 \ra_0)^2\nonumber\\
&&+56(8\la U\ra_0)^5 16\la U^3 \ra_0-70(8 \la U \ra_0)^4 48 \la U^4 \ra_0-560(8\la U\ra_0)^3 8\la U^2 \ra_0 16\la U^3 \ra_0-420(8\la U\ra_0)^2(8\la U^2 \ra_0)^3\nonumber\\
&&+56(8\la U\ra_0)^3 192\la U^5\ra_0+420(8\la U\ra_0)^2 8\la U^2 \ra_0 48 \la U^4 \ra_0+280(8\la U\ra_0)^2(16\la U^3 \ra_0)^2\nonumber\\
&&+840(8\la U\ra_0)(8\la U^2 \ra_0)^2 16\la U^3 \ra_0+105(8\la U^2 \ra_0)^4-28(8\la U\ra_0)^2 960 \la U^6 \ra_0-168(8\la U\ra_0)8\la U^2 \ra_0 192\la U^5\ra_0\nonumber\\
&& -280(8\la U\ra_0)16\la U^3 \ra_0 48 \la U^4 \ra_0-210(8\la U^2 \ra_0)^2 48 \la U^4 \ra_0-280(8\la U^2 \ra_0)(16\la U^3 \ra_0)^2+8(8\la U\ra_0)5760 \la U^7 \ra_0\nonumber\\
&&+28(8\la U^2 \ra_0)960 \la U^6 \ra_0+56(16\la U^3 \ra_0)192\la U^5\ra_0+35(48 \la U^4 \ra_0)^2-40320 \la U^8 \ra_0).\nonumber
\end{eqnarray}
For the cases $n=7$ and $n=8$, the formula for the determinant of this type has 231 summands. Similarly we can write down explicit formulas for all characteristic polynomial coefficients in the case of arbitrary $n$. Also we can substitute (\ref{x2}) into these expressions and get the formulas using only the operation~$\stackrel{\over\quad}{\quad}$.

From the results of Section \ref{sec:2}, it follows that we can realize the operation $\la \quad \ra_0$ as a linear combination of the operations $\va_1$, $\va_2$, \ldots, $\va_m$, $m=[\log_2 n]+1$ and their superpositions (see Theorem \ref{th10}). For example, we can substitute the following expression (or other realizations of $\la \quad \ra_0$ from Theorem \ref{th10})
\begin{eqnarray}
\la U \ra_0=\frac{1}{2^m}(U+U^{\va_1}+U^{\va_2}+\cdots+U^{\va_1 \ldots \va_m}),\qquad m=[\log_2 n ]+1,\qquad U^{\va_1}=\widehat{U},\qquad U^{\va_2}=\widetilde{U}\label{x3}
\end{eqnarray}
into (\ref{Ck}) or (\ref{Sk}) and obtain explicit formulas for all characteristic polynomial coefficients (in particular the determinant), adjugate, and inverse using the operations $\va_1$, $\va_2$, \ldots, $\va_m$. We simplify the obtained formulas for the cases $n\leq 4$ (see the next theorem, the formulas for $C_{(2)}$ and $C_{(3)}$ in the cases $n=3, 4$ are new).

\begin{theorem}\label{LFA} In the case $n=1$, we have
\begin{eqnarray*}
C_{(1)}=U+\widehat{U}\in\cl^0_{p,q},\qquad \Det(U)=-C_{(2)}=U\widehat{U}\in\cl^0_{p,q},\qquad \Adj(U)=\widehat{U},\qquad U^{-1}=\frac{\widehat{U}}{\Det(U)}.
\end{eqnarray*}
In the case $n=2$, we have
\begin{eqnarray*}
C_{(1)}=U+\widehat{\widetilde{U}}\in\cl^0_{p,q},\qquad \Det(U)=-C_{(2)}=U\widehat{\widetilde{U}}\in\cl^0_{p,q},\qquad \Adj(U)=\widehat{\widetilde{U}},\qquad U^{-1}=\frac{\widehat{\widetilde{U}}}{\Det(U)}.
\end{eqnarray*}
In the case $n=3$, we have
\begin{eqnarray*}
&&C_{(1)}=U+\widehat{U}+\widetilde{U}+\widehat{\widetilde{U}}\in\cl^0_{p,q},\qquad C_{(2)}=-(U\widetilde{U}+U\widehat{U}+U\widehat{\widetilde{U}}+\widehat{U}\widehat{\widetilde{U}}+\widetilde{U}\widehat{\widetilde{U}} +\widehat{U}\widetilde{U})\in\cl^0_{p,q},\\
&&C_{(3)}=U\widehat{U}\widehat{\widetilde{U}}+U\widetilde{U}\widehat{\widetilde{U}}+U\widehat{U}\widetilde{U}+\widehat{U}\widetilde{U}\widehat{\widetilde{U}} \in\cl^0_{p,q},\qquad \Det(U)=-C_{(4)}=U\widehat{U}\widetilde{U}\widehat{\widetilde{U}}\in\cl^0_{p,q},\\
&&\Adj(U)=\widehat{U}\widetilde{U}\widehat{\widetilde{U}},\qquad U^{-1}=\frac{\widehat{U}\widetilde{U}\widehat{\widetilde{U}}}{\Det(U)}.
\end{eqnarray*}
In the case $n=4$, we have
\begin{eqnarray*}
&&C_{(1)}=U+\widehat{\widetilde{U}}+\widehat{U}^\va+\widetilde{U}^\va\in\cl^0_{p,q},\qquad C_{(2)}=-(U\widehat{\widetilde{U}}+U\widehat{U}^\va+U\widetilde{U}^\va+\widehat{\widetilde{U}}\widehat{U}^\va+ \widehat{\widetilde{U}}\widetilde{U}^\va+(\widehat{U}\widetilde{U})^\va)\in\cl^0_{p,q},\\
&&C_{(3)}=U \widehat{\widetilde{U}} \widehat{U}^\va+ U \widehat{\widetilde{U}} \widetilde{U}^\va+ U(\widehat{U} \widetilde{U})^\va+ \widehat{\widetilde{U}} (\widehat{U} \widetilde{U})^\va \in\cl^0_{p,q},\qquad \Det(U)=-C_{(4)}=U\widehat{\widetilde{U}}(\widehat{U} \widetilde{U})^{\va}\in\cl^0_{p,q},\\
&&\Adj(U)=\widehat{\widetilde{U}}(\widehat{U} \widetilde{U})^{\va},\qquad U^{-1}=\frac{\widehat{\widetilde{U}}(\widehat{U} \widetilde{U})^{\va}}{\Det(U)}.
\end{eqnarray*}
\end{theorem}

Note that in the formulas above we present only one of the possible realizations of the elements $C_{(k)}$, $k=1, \ldots, N$, and $\Adj(U)$. We can use different realizations of the trace $C_{(1)}$ (take different expressions $C_{(1)}= N \la U \ra_0$ from Theorem~\ref{th10}), determinant, adjugate (take different expressions $\Det(U)=N(U)$ and $\Adj(U)=F(U)$ from Theorem~\ref{thNF}), and other characteristic polynomial coefficients (for example, we can use the properties from Lemma \ref{lemC} to obtain other realizations).

\begin{proof} In the case $n=1$, we have $N=2$. Using (\ref{Ck}) and (\ref{r1}), we get
$$
 U_{(1)}=U,\qquad C_{(1)}=2 \la U \ra_0=U+\widehat{U},\qquad \Det(U)=-U_{(2)}=U(U+\widehat{U}-U)=U\widehat{U}.
$$
In the case $n=2$, we have $N=2$. Using (\ref{Ck}) and (\ref{r2}), we get
$$
 U_{(1)}=U,\qquad C_{(1)}=2 \la U \ra_0=U+\widehat{\widetilde{U}},\qquad \Det(U)=-U_{(2)}=U(U+\widehat{\widetilde{U}}-U)=U\widehat{\widetilde{U}}.
$$
In the case $n=3$, we have $N=4$. Using (\ref{Ck}) and (\ref{r3}), we get
\begin{eqnarray*}
&&U_{(1)}=U,\qquad C_{(1)}=4\la U \ra_0=U+\widehat{U}+\widetilde{U}+\widehat{\widetilde{U}},\qquad U_{(2)}=U(U-C_{(1)})=-U(\widehat{U}+\widetilde{U}+\widehat{\widetilde{U}}),\\
&&C_{(2)}=2\la U_{(2)} \ra_0=-2 \la U(\widehat{U}+\widetilde{U}+\widehat{\widetilde{U}})\ra_0=
-\frac{1}{2}(U\widehat{U}+U\widetilde{U}+U\widehat{\widetilde{U}}+\widehat{U} U+ \widehat{U} \widehat{\widetilde{U}}+\widehat{U}\widetilde{U}+
\widehat{\widetilde{U}}\widetilde{U}+U\widetilde{U}+\widehat{U}\widetilde{U}\\
&&+\widetilde{U}\widehat{\widetilde{U}}+\widehat{U}\widehat{\widetilde{U}}+ U\widehat{\widetilde{U}})=-(U\widetilde{U}+U\widehat{U}+U\widehat{\widetilde{U}}+\widehat{U}\widehat{\widetilde{U}}+\widetilde{U}\widehat{\widetilde{U}} +\widehat{U}\widetilde{U}),
\end{eqnarray*}
where we used $\widehat{U} U+\widehat{\widetilde{U}}\widetilde{U}=U\widehat{U}+\widetilde{U}\widehat{\widetilde{U}}$, which is equivalent to $[\widehat{U}, U]=([\widehat{U}, U])\,\widetilde\over$. This formula follows from the following reasoning. We have $([\widehat{U}, U])\,\widehat\over=[U,\widehat{U}]=-[\widehat{U}, U]$, i.e. $[\widehat{U}, U]\in\cl^1_{p,q}\oplus\cl^3_{p,q}$. Also we have $\la [\widehat{U}, U] \ra_3=\la [\widehat{U}, U] \ra_{n}=0$ by Lemma \ref{lem1}. Thus $[\widehat{U}, U]\in\cl^1_{p,q}$ and $[\widehat{U}, U]=([\widehat{U}, U])\,\widetilde\over$.

Further,
\begin{eqnarray*}
&&U_{(3)}=U(U_{(2)}-C_{(2)})=U(\widehat{U}\widehat{\widetilde{U}}+\widetilde{U}\widehat{\widetilde{U}}+\widehat{U}\widetilde{U}),\\ &&C_{(3)}=\frac{4}{3} \la U_{(3)} \ra_0=\frac{1}{3}(U\widehat{U}\widehat{\widetilde{U}}+U\widetilde{U}\widehat{\widetilde{U}}+U\widehat{U}\widetilde{U} + \widehat{U} U \widetilde{U}+\widehat{U} \widehat{\widetilde{U}}\widetilde{U}+\widehat{U}U\widehat{\widetilde{U}}+\widehat{U}\widehat{\widetilde{U}}\widetilde{U}+ \widehat{U}U\widetilde{U}+U\widehat{\widetilde{U}}\widetilde{U},\\
&&+U\widetilde{U}\widehat{\widetilde{U}}+U\widehat{U}\widehat{\widetilde{U}}+\widehat{U}\widetilde{U}\widehat{\widetilde{U}})=\frac{1}{3}(2 U\widehat{U}\widehat{\widetilde{U}}+2U\widetilde{U}\widehat{\widetilde{U}}+U\widehat{U}\widetilde{U}+\widehat{U}\widetilde{U}\widehat{\widetilde{U}}+ U\widehat{\widetilde{U}}(\widehat{U}+\widetilde{U})+2\widehat{U}(U+\widehat{\widetilde{U}})\widetilde{U})\\
&&=\frac{1}{3}(2 U\widehat{U}\widehat{\widetilde{U}}+2U\widetilde{U}\widehat{\widetilde{U}}+U\widehat{U}\widetilde{U}+\widehat{U}\widetilde{U}\widehat{\widetilde{U}}+ U(\widehat{U}+\widetilde{U})\widehat{\widetilde{U}}+2(U+\widehat{\widetilde{U}})\widehat{U}\widetilde{U})= U\widehat{U}\widehat{\widetilde{U}}+U\widetilde{U}\widehat{\widetilde{U}}+U\widehat{U}\widetilde{U}+\widehat{U}\widetilde{U}\widehat{\widetilde{U}},
\end{eqnarray*}
where we use $\widehat{U}\widetilde{U}=\widetilde{U}\widehat{U}\in\cen(\cl_{p,q})$, $\widehat{\widetilde{U}}U=U\widehat{\widetilde{U}}\in\cen(\cl_{p,q})$, $U+\widehat{\widetilde{U}}\in\cen(\cl_{p,q})$, $\widetilde{U}+\widehat{U}\in\cen(\cl_{p,q})$.
Finally, we get
\begin{eqnarray*}
&&\Det(U)=-U_{(4)}=U(C_{(3)}-U_{(3)})=U\widehat{U}\widetilde{U}\widehat{\widetilde{U}}.
\end{eqnarray*}
In the case $n=4$, we have $N=4$. Using (\ref{Ck}) and (\ref{T1}), we get
\begin{eqnarray*}
&&U_{(1)}=U,\qquad C_{(1)}=4 \la U \ra_0=U+\widehat{\widetilde{U}}+\widehat{U}^\va+\widetilde{U}^\va,\qquad U_{(2)}=U(U-C_{(1)})=-U(\widehat{\widetilde{U}}+\widehat{U}^\va+\widetilde{U}^\va),\\
&&C_{(2)}=2 \la U_{(2)} \ra_0=-2 \la U(\widehat{\widetilde{U}}+\widehat{U}^\va+\widetilde{U}^\va) \ra_0=
-\frac{1}{2}(U\widehat{\widetilde{U}}+U\widehat{U}^\va+U\widetilde{U}^\va +U\widehat{\widetilde{U}}+ \widetilde{U}^\va \widehat{\widetilde{U}} +\widehat{U}^\va \widehat{\widetilde{U}}+(\widehat{U}\widetilde{U})^\va \\
&&+(\widehat{U} U^\va)^\va\!+(\widehat{U} \widehat{\widetilde{U}}^\va)^\va\! +(\widehat{U} \widetilde{U})^\va\!+(\widehat{\widetilde{U}}^\va \widetilde{U})^\va\!+(U^\va \widetilde{U})^\va)=
-(U\widehat{\widetilde{U}}+U\widehat{U}^\va\!+U\widetilde{U}^\va\!+\widehat{\widetilde{U}}\widehat{U}^\va\!+ \widehat{\widetilde{U}}\widetilde{U}^\va\!+(\widehat{U}\widetilde{U})^\va),\\
&&U_{(3)}=U(U_{(2)}-C_{(2)})=U(\widehat{\widetilde{U}}\widehat{U}^\va+ \widehat{\widetilde{U}}\widetilde{U}^\va+(\widehat{U}\widetilde{U})^\va),\qquad C_{(3)}=\frac{4}{3}\la U_{(3)} \ra_0=\frac{1}{3}(U \widehat{\widetilde{U}} \widehat{U}^\va+ U \widehat{\widetilde{U}} \widetilde{U}^\va\\
&&+ U(\widehat{U} \widetilde{U})^\va+\widetilde{U}^\va U \widehat{\widetilde{U}}+\widehat{U}^\va U \widehat{\widetilde{U}} +(\widehat{U} \widetilde{U})^\va \widehat{\widetilde{U}}+(\widehat{U} \widetilde{U} U^\va)^\va+(\widehat{U}\widetilde{U}\widehat{\widetilde{U}}^\va)^\va+ (\widehat{U} (U \widehat{\widetilde{U}})^\va)^\va+(\widehat{\widetilde{U}}^\va \widehat{U}\widetilde{U})^\va\\
&&+ (U^\va \widehat{U} \widetilde{U})^\va+((U \widehat{\widetilde{U}})^\va \widetilde{U})^\va)=
U \widehat{\widetilde{U}} \widehat{U}^\va+ U \widehat{\widetilde{U}} \widetilde{U}^\va+ U(\widehat{U} \widetilde{U})^\va+ \widehat{\widetilde{U}} (\widehat{U} \widetilde{U})^\va,
 \end{eqnarray*}
where we used two times computer calculations (in Wolfram Mathematica) to simplify the expressions for $C_{(2)}$ and $C_{(3)}$,
because of nontrivial properties of the operation $\va$. Finally, we get
\begin{eqnarray*}
&&\Det(U)=-U_{(4)}=U(C_{(3)}-U_{(3)})=U\widehat{\widetilde{U}}(\widehat{U} \widetilde{U})^{\va}.
\end{eqnarray*}
\end{proof}

Note that in the case $n=5$, we can analogously take the formula (\ref{T1}) from Theorem \ref{th10}
and get explicit formulas for all characteristic polynomial coefficients $C_{(1)}, \ldots C_{(8)}$ using (\ref{Ck}):
\begin{eqnarray*}
&&\!\!U_{(1)}=U,\quad C_{(1)}=8 \la U \ra_0=2(U+\widehat{\widetilde{U}}+\widehat{U}^\va+\widetilde{U}^\va),\quad
U_{(2)}=U(U-C_{(1)})=-U(U+2\widehat{\widetilde{U}}+2\widehat{U}^\va+2\widetilde{U}^\va),\\
&&\!\!C_{(2)}=4 \la U_{(2)} \ra_0,\quad U_{(3)}=U(U_{(2)}-C_{(2)}),\quad C_{(3)}=\frac{8}{3} \la U_{(3)} \ra_0,\quad \ldots,\quad \Det(U)=-C_{(8)}=U(C_{(7)}-U_{(7)}).
\end{eqnarray*}
The final explicit formula for $\Det(U)$ should coincide after cumbersome calculations with the formula for the functional $N(U)$ from Theorem \ref{thNF} (this follows from the results of \cite{acus} using computer calculations). Similarly, from the results of \cite{acus}, it follows that the expressions (\ref{nu6}) and (\ref{nu62}) coincide with the determinant $\Det(U)$ in the case $n=6$.

From our results (see Theorems \ref{th10} and \ref{thLF}), it follows that all characteristic polynomial coefficients can be represented using only the operations of multiplication, summation, and the operations of conjugation $\va_1, \va_2, \ldots, \va_m$, where $m=[\log_2 n]+1$, in the case of arbitrary $n$. We have the recursive formulas (\ref{Ck}) and explicit formulas (\ref{Sk2}) for all $C_{(k)}$ in the case of arbitrary $n$. We see from the examples for small dimensions that the recursive formulas can be simplified using the properties of the operations $\va_1$, \ldots, $\va_m$ (and do this in the cases $n\leq 4$, see Theorem \ref{LFA}). Analytic simplification of these formulas in the case of arbitrary $n$ is non-trivial because of the non-trivial properties of the operations $\va_3$, \ldots $\va_m$ and seems to be an interesting task for further research. However, we can use the formulas (\ref{Ck}) without simplification.

The alternative way is to use the formulas for the determinant, other characteristic polynomial coefficients, and inverse using the operation $\stackrel{\over\quad}{\quad}$  (\ref{x2}) instead of the operations $\va_3$, \ldots, $\va_m$ (examples are given in Section \ref{sec:3}: the formulas (\ref{htb}) for the cases $n\leq 5$ and the formula (\ref{nu62}) for the case $n=6$) or instead of all operations $\va_1$, $\va_2$, $\va_3$, \ldots, $\va_m$ (we give several examples for the cases $n\leq 4$ after Theorem \ref{thLF}, in the general case we use the formulas (\ref{Sk}), (\ref{Sk2}), and (\ref{x2})).

\section{Conclusions}
\label{sec:5}

In this paper, we solve the problem of computing the inverse in Clifford algebras of arbitrary dimension. We present basis-free formulas for the trace, determinant, other characteristic polynomial coefficients, adjugate, and inverse in the real Clifford algebra $\cl_{p,q}$ for arbitrary $n=p+q$. These formulas do not use matrix representations and use only operations in Clifford algebras.

The formulas of the first type (\ref{Ck}) are recursive, the formulas of the second type (\ref{Sk2}) are explicit and use the complete Bell polynomials. The formulas of both types use the operations of multiplication, summation, and the operation $\la \quad \ra_0$ of projection onto the subspace of grade $0$. The operation $\la \quad \ra_0$ can be realized using one operation of conjugation $\stackrel{\over\quad}{\quad}$ (\ref{x2}), or using $m=[\log_2 n]+1$ operations of conjugation $\va_1, \ldots, \va_m$ (\ref{ocst}). Sometimes these formulas can be simplified using the properties of the operations $\va_1, \ldots, \va_m$ (see the discussion at the end of the previous section). We present simplification of these formulas in the case of small dimensions (see Theorem \ref{LFA}). Analytic simplification of the formulas in the case of arbitrary $n$ is non-trivial and seems to be an interesting task for further research. However, we can use the formulas (\ref{Ck}), (\ref{det01}), (\ref{Sk2}), and (\ref{det02}) without any simplification. We can use different formulas for different purposes.

The formulas (\ref{det01}) and (\ref{det02}) can be interpreted as definitions of the concept of determinant in Clifford algebra. They are equivalent to the definition (\ref{det}) but do not use matrix representations. The formulas (\ref{Ck}) and (\ref{Sk2}) can be interpreted as two different (equivalent) definitions of characteristic polynomial coefficients in Clifford algebras. We see that the condition of invertibility of Clifford algebra element ($\Det(U)\neq 0$) does not depend on $p$ and $q$ in the real Clifford algebras $\cl_{p,q}$ for arbitrary $n$. The recursive and explicit formulas from Theorems \ref{thLF} and \ref{LFA} can be used in symbolic computation.

We use the results of this paper to obtain basis-free solution to the Sylvester equation of the form $AX+XB=C$ for known $A, B, C\in\cl_{p,q}$ and unknown $X\in\cl_{p,q}$ and its particular case, the Lyapunov equation (with $B=A^H$) \cite{ShirokovSylv}. These equations are widely used in image
processing, control theory, stability analysis, signal processing, model reduction, and many more.

The main results of this paper remain true for the complexified Clifford algebras $\C\otimes\cl_{p,q}$. All results of Sections \ref{sec:2} and \ref{sec:3} are generalized to the case of $\C\otimes\cl_{p,q}$ without any changes. In Section \ref{sec:4}, we should take the matrix representation $\beta$ (\ref{beta}) instead of (\ref{beta2}) in all considerations. The characteristic polynomial coefficients $C_{(k)}$ (in particular, the trace and the determinant) will be complex numbers. All formulas of Theorems \ref{thLF} and \ref{LFA} will be valid in the case of $\C\otimes\cl_{p,q}$.

The real Clifford algebras $\cl_{p,q}$ are isomorphic to the matrix algebras over $\R$, $\C$, $\R\oplus\R$, $\H$, or $\H\oplus\H$ depending on $p-q\mod 8$, the complexified Clifford algebras $\C\otimes\cl_{p,q}$ are isomorphic to the matrix algebras over $\C$ or $\C\oplus\C$ depending on $n\mod 2$. The advantage of Clifford algebras over matrix algebras is a more powerful mathematical apparatus, which allows us to naturally realize different geometric structure, spin group, spinors, etc. At the same time, the matrix methods are also useful for different purposes and applications. Therefore the problem arises to transfer the matrix methods to the formalism of Clifford algebras. Note the papers \cite{AblMP, Helm, polar, MM, polar2}. An interesting task is to generalize results of this paper to the Moore-Penrose inverse (pseudo-inverse, \cite{Golub}), which is widely used in computer science and engineering.

\section*{Acknowledgment}

The author is grateful to N.~Marchuk and N.~Khlyustova for useful discussions. The author is grateful to the anonymous reviewers for their careful reading of the paper and helpful comments on how to improve the presentation. The results of this paper were reported at the 9th International Conference on Mathematical Modeling (Yakutsk, 2020), the 12th International Conference on Clifford Algebras and Their Applications in Mathematical Physics (Hefei, 2020), and the International Conference ``Computer Graphics International'' (Geneva, 2020, within the workshop ``Empowering Novel Geometric Algebra for Graphics and Engineering''). The author is grateful to the organizers and the participants of these conferences for fruitful discussions.

The publication was prepared within the framework of the Academic Fund Program at the HSE University in 2020–2021 (grant 20-01-003).

\end{document}